\newcommand{\casezeo}[5]{\mathtt{case}_{#1}\ #2\ \mathtt{zero}\ #3\ \mathtt{even}\ #4\ \mathtt{odd}\ #5}
\newcommand{\uncase}[2]{\mathtt{case}_{#1}\ #2\ \mathtt{zero}}
\newcommand{\uneven}[1]{\mathtt{even}\ #1}
\newcommand{\unodd}[1]{\mathtt{odd}\ #1}
\newcommand{\saferec}[4]{\mathtt{recursion}_{#1}\ #2\,#3\,#4}
\newcommand{\linrec}[4]{\mathtt{recursion}_{#1}\ #2\,#3\,#4}
\newcommand{\recursion}[4]{\mathtt{recursion}_{#1}\ #2\,#3\,#4}
\newcommand{\rand}{\mathtt{rand}}
\newcommand{\N}{\ensuremath{\mathbf{N}}}
\newcommand{\SSS}{\mathbf{S}}
\newcommand{\C}{\mathbf{C}}
\renewcommand{\S}[1]{\mathtt{S}_{#1}}
\renewcommand{\P}[0]{\mathtt{P}}
\newcommand{\PSET}[0]{\ensuremath{\mathbf{P}}}
\newcommand{\PP}[0]{\ensuremath{\mathbf{PP}}}
\newcommand{\ZPP}[0]{\ensuremath{\mathbf{ZPP}}}
\newcommand{\type}[1]{\mathit{type}(#1)}
\newcommand{\NF}{\mathit{NF}}
\newcommand{\FP}{\ensuremath{\mathbf{FP}}}
\newcommand{\NP}{\ensuremath{\mathbf{NP}}}
\newcommand{\BPP}{\ensuremath{\mathbf{BPP}}}
\newcommand{\PSPACE}{\ensuremath{\mathbf{PSPACE}}}
\newcommand{\LOGSPACE}{\ensuremath{\mathbf{LOGSPACE}}}
\newcommand{\NC}{\ensuremath{\mathbf{NC}}}
\newcommand{\RA}{\ensuremath{\mathsf{RA}}}
\newcommand{\SLR}{\ensuremath{\mathsf{SLR}}}
\newcommand{\RSLR}{\ensuremath{\mathsf{RSLR}}}
\newcommand{\OSLR}{\ensuremath{\mathsf{OSLR}}}
\newcommand{\CSLR}{\ensuremath{\mathsf{CSLR}}}
\newcommand{\blank}{\sqcup}
\newcommand{\typone}{A}
\newcommand{\typtwo}{B}
\newcommand{\typthree}{C}
\newcommand{\typfour}{D}
\newcommand{\typfive}{E}
\newcommand{\typsix}{F}
\newcommand{\htypone}{H}
\newcommand{\htyptwo}{G}
\newcommand{\aspone}{a}
\newcommand{\asptwo}{b}
\newcommand{\termone}{t}
\newcommand{\termtwo}{s}
\newcommand{\termthree}{r}
\newcommand{\termfour}{q}
\newcommand{\termfive}{u}
\newcommand{\termsix}{v}
\newcommand{\termseven}{z}
\newcommand{\termeight}{a}
\newcommand{\termnine}{b}
\newcommand{\termten}{c}
\newcommand{\elemone}{M}
\newcommand{\elemtwo}{N}
\newcommand{\elemthree}{P}
\newcommand{\elemfour}{Q}
\newcommand{\funone}{f}
\newcommand{\funtwo}{g}
\newcommand{\funthree}{h}
\newcommand{\funfour}{v}
\newcommand{\varone}{x}
\newcommand{\vartwo}{y}
\newcommand{\varthree}{z}
\newcommand{\varfour}{h}
\newcommand{\constone}{c}
\newcommand{\indexone}{i}
\newcommand{\indextwo}{j}
\newcommand{\probone}{\alpha}
\newcommand{\probtwo}{\beta}
\newcommand{\probthree}{\gamma}
\newcommand{\probfour}{\delta}
\newcommand{\probfive}{\epsilon}
\newcommand{\conone}{\Gamma}
\newcommand{\contwo}{\Delta}
\newcommand{\conthree}{\Theta}
\newcommand{\confour}{\Xi}
\newcommand{\emcon}{\emptyset}
\newcommand{\polyone}{p}
\newcommand{\polytwo}{q}
\newcommand{\aleq}{<:}
\newcommand{\masp}{\square}
\newcommand{\nmasp}{\blacksquare}
\newcommand{\marr}[2]{\square #1\rightarrow #2}
\newcommand{\nmarr}[2]{\blacksquare #1\rightarrow #2}
\newcommand{\parr}[3]{#1 #2\rightarrow #3}
\newcommand{\asgn}[3]{#1:#2#3}
\newcommand{\srefl}{\textsc{(S-Refl)}}
\newcommand{\strans}{\textsc{(S-Trans)}}
\newcommand{\tsub}{\textsc{(T-Sub)}}
\newcommand{\ssub}{\textsc{(S-Sub)}}
\newcommand{\abstr}[3]{\lambda #1 #2.#3}
\newcommand{\red}{\rightarrow}
\newcommand{\realint}[2]{\mathbb{R}_{[0,1]}}
\newcommand{\mred}{\rightsquigarrow}
\newcommand{\mnewred}{\Rightarrow}
\newcommand{\pmred}{\rightsquigarrow}
\newcommand{\subst}[3]{#1[#2/#3]}
\newcommand{\distrone}{\mathscr{D}}
\newcommand{\distrtwo}{\mathscr{E}}
\newcommand{\distrthree}{\mathscr{P}}
\newcommand{\distrfour}{\mathscr{L}}
\newcommand{\distrfive}{\mathscr{J}}
\newcommand{\distrsix}{\mathscr{K}}
\newcommand{\distrseven}{\mathscr{H}}
\newcommand{\distreight}{\mathscr{Q}}
\newcommand{\distrnine}{\mathscr{R}}
\newcommand{\distdeg}[1]{\mathscr{G}_{#1}}
\newcommand{\nff}[3]{#1\Downarrow_{\mathsf{nf}}^{#2}#3}
\newcommand{\nfu}{\Downarrow_{\mathsf{nf}}}
\newcommand{\rff}[3]{#1\Downarrow_{\mathsf{rf}}^{#2}#3}
\newcommand{\rfu}{\Downarrow_{\mathsf{rf}}}
\newcommand{\multi}[1]{\overline{#1}}
\newcommand{\natone}{i}
\newcommand{\nattwo}{j}
\newcommand{\natthree}{k}
\newcommand{\derone}{\pi}
\newcommand{\dertwo}{\rho}
\newcommand{\derthree}{\mu}
\newcommand{\derfour}{\nu}
\newcommand{\derfive}{\sigma}
\newcommand{\dersix}{\varrho}
\newcommand{\tdone}{\mu}
\newcommand{\size}[1]{|#1|}
\newcommand{\sizewonum}[1]{|#1|_\mathsf{w}}
\newcommand{\sizenum}[1]{|#1|_\mathsf{n}}
\newcommand{\numeone}{n}
\newcommand{\numetwo}{m}
\newcommand{\numethree}{o}
\newcommand{\numefour}{p}
\newcommand{\numefive}{q}
\newcommand{\numesix}{k}
\newcommand{\NN}{\mathbb{N}}
\newcommand{\pair}[2]{\langle #1,#2 \rangle  }
\newcommand{\proj}[1]{\pi_#1}
\newcommand{\NTOS}[1]{\mathsf{NtoS}_{#1}}
\newcommand{\STON}[1]{\mathsf{StoN}_{#1}}
\newcommand{\INIT}{\mathsf{init}}
\newcommand{\ENC}{\mathsf{encode}}
\newcommand{\ADD}{\mathsf{add}}
\newcommand{\DEC}{\mathsf{dec}}
\newcommand{\MULT}{\mathsf{mult}}
\theoremstyle{definition}\newtheorem{definition}{Definition}[section]
\theoremstyle{remark}
\theoremstyle{plain}\newtheorem{theorem}{Theorem}[section]
\theoremstyle{plain}
\theoremstyle{plain}\newtheorem{lemma}[theorem]{Lemma}
\theoremstyle{plain}\newtheorem{proposition}[theorem]{Proposition}
\theoremstyle{definition}\newtheorem{example}{Example}[section]
\newcommand{\fsetone}{F}
\newcommand{\car}[1]{|#1|}
\newcommand{\lord}[1]{\sqsubseteq_{#1}}
\newcommand{\elone}{a}
\newcommand{\FS}[1]{\mathbf{F}_{#1}}
\newcommand{\switch}[2]{\mathsf{switch}_{#1}^{#2}}
\newenvironment{varitemize}
{
\begin{list}{{\labelitemi}}
{\setlength{\itemsep}{0.0mm}
 \setlength{\topsep}{0.0mm}
 \setlength{\parindent}{0.0mm}
 \setlength{\parskip}{0.0mm}
 \setlength{\parsep}{0.0mm}
 \setlength{\partopsep}{0.0mm}
 \setlength{\leftmargin}{15pt}
 \setlength{\labelsep}{5pt}
 \setlength{\labelwidth}{10pt}}}
{
 \end{list} 
}
\newcounter{number}
\newenvironment{varenumerate}
{\begin{list}{\arabic{number}.}
  {
   \usecounter{number}
   \setlength{\labelwidth}{4.0mm}
   \setlength{\labelsep}{2.0mm}
   \setlength{\itemindent}{0.0mm}
   \setlength{\itemsep}{0.0mm}
   \setlength{\topsep}{0.0mm}
   \setlength{\parskip}{0.0mm}
   \setlength{\parsep}{0.0mm}
   \setlength{\partopsep}{0.0mm}
  }
}
{\end{list}}
\title{A Higher-Order Characterization\\ of Probabilistic Polynomial Time}
\author{
 Ugo Dal Lago 
 \and 
 Paolo Parisen Toldin}
\date{}
\begin{document}

\maketitle
\newcommand{\TODO}[1]{\textcolor{red}{#1}}

\begin{abstract}
\noindent We present \RSLR, an implicit higher-order characterization of the class \PP\ of those problems 
which can be decided in probabilistic polynomial time with error probability smaller than $\nicefrac{1}{2}$.
Analogously, a (less implicit) characterization of the class \BPP\ can be obtained. 
\RSLR\ is an extension of Hofmann's \SLR\ with a probabilistic primitive, which enjoys basic
properties such as subject reduction and confluence. Polynomial time soundness of \RSLR\ is obtained by syntactical means, as opposed to the standard literature on \SLR-derived systems, which use semantics in an essential way. 
\end{abstract}

%%%%%%%%%%%%%%%%%%%%%%%%
\section{Introduction}\label{sect:intro}
%%%%%%%%%%%%%%%%%%%%%%%%
Implicit computational complexity (ICC) combines computational complexity, mathematical
logic, and formal systems to give a machine independent account of complexity phenomena.
It has been successfully applied to the characterization of a variety of complexity classes, especially in the sequential 
and parallel modes of computation (e.g., \FP~\cite{Bellantoni1992,Leivant1993}, 
\PSPACE~\cite{Leivant1995}, \LOGSPACE~\cite{Jones1999}, \NC~\cite{Bonfante2008}). Its techniques, 
however, may be applied also to non-standard paradigms, like
quantum computation~\cite{DalLago2010b} and concurrency~\cite{DalLago2010a}. Among the many characterizations of the
class \FP\ of functions computable in polynomial time, we can find Hofmann's \emph{safe linear recursion}~\cite{Hofmann1998}
(\SLR\ in the following), an higher-order generalization of Bellantoni and Cook's \emph{safe recursion}~\cite{Bellantoni1995}
in which linearity plays a crucial role.

Randomized computation is central to several areas of theoretical computer science, including cryptography, analysis of 
computation dealing with uncertainty and incomplete knowledge agent systems. In the context of computational complexity,
probabilistic complexity classes like \BPP{} are nowadays considered as very closely corresponding to the informal notion
of feasibility, since a solution to a problem in \BPP{} can be computed in polynomial time up to any given degree of
precision: \BPP\ is the set of problems which can be solved by a probabilistic Turing machine working in polynomial
time with a probability of error bounded by a constant \emph{strictly} smaller than $1/2$.

Probabilistic polynomial time computations, seen as oracle computations, were showed to be amenable to implicit 
techniques since the early days of ICC, by a relativization of Bellantoni and Cook's safe 
recursion~\cite{Bellantoni1995}. They were then studied again in the 
context of formal systems for security, where probabilistic polynomial time computation 
plays a major role~\cite{Mitchell1998,Zhang2010}. These two systems build on Hofmann's work on \SLR{}, 
adding a random choice operator to the calculus. The system in~\cite{Mitchell1998}, however, 
lacks higher-order recursion, and in both papers the characterization of the probabilistic classes  
is obtained by semantic means. While this is fine for completeness, we think it is not completely 
satisfactory for soundness --- we know from the semantics that for any term of a suitable type
its normal form \emph{may be} computed within the given bounds, but no notion of evaluation is 
given for which computation time is \emph{guaranteed} to be bounded.

In this paper we propose \RSLR{}, another probabilistic variation on \SLR{},
and we show that it characterizes the class \PP{} of those
problems which can be solved in polynomial time by a Turing machine with error probability
smaller than $1/2$. This is carried out by proving that any term
in the language can be reduced in polynomial time, but also that any problems in \PP\ can be represented
in \RSLR. A similar result, although in a less implicit form, is proved for \BPP. 
Unlike~\cite{Mitchell1998}, \RSLR{} has higher-order recursion. Unlike~\cite{Mitchell1998} and \cite{Zhang2010}, 
the bound on reduction time is obtained by syntactical means, giving an explicit notion of reduction 
which realizes that bound.

\subsection{Related Works}\label{sect:relatedworks}
We discuss here in more details the relations of our system to the previous work
we already cited. 

More than ten years ago, Mitchell, Mitchell, and Scedrov~\cite{Mitchell1998} 
introduced \OSLR{}, a type system that characterizes oracle polynomial time functionals.
Even if inspired by \SLR, \OSLR{} does not admit primitive recursion on 
higher-order types, but only on base types. The main theorem shows that terms of type 
$\Box \N^m\rightarrow \N^n\rightarrow \N$ define precisely the \emph{oracle polynomial time functionals}, 
which constitutes a class related but different from the ones we are interested in here. 
Finally, inclusion in the polynomial time class is proved without studying reduction from an operational viewpoint, 
but only via semantics: it is not clear for \emph{which}  notion of evaluation, computation time is guaranteed to be bounded.

Recently, Zhang's~\cite{Zhang2010} introduced a further system (\CSLR) which builds on \OSLR{}
and allows higher-order recursion. The main interest of the paper are applications to the verification of security protocols. 
It is stated that \CSLR{} defines exactly those functions that can be computed by probabilistic Turing machines in polynomial 
time, via a suitable variation of Hofmann's techniques as modified by Mitchell et al. This is again 
a purely semantic proof, whose details are missing in~\cite{Zhang2010}.

Finally, both works are derived from Hofmann's one, and as a consequence they 
both have potential problems with subject reduction. Indeed, as Hofmann showed in 
his work~\cite{Hofmann1998}, subject reduction does not hold in \SLR,
and hence is problematic in both \OSLR\ and \CSLR. 

%%%%%%%%%%%%%%%%%%%%%%%%%%%%%%%%%%%%%%%%%%
\subsection{\RSLR: An Informal Account}
%%%%%%%%%%%%%%%%%%%%%%%%%%%%%%%%%%%%%%%%%%
Our system is called \RSLR, which stands for Random Safe Linear Recursion.

\RSLR\ can be thought of as the system obtained by endowing \SLR\ with a new
primitive for random binary choice. Some restrictions have to be made to \SLR\ if one
wants to be able to prove polynomial time soundness easily and operationally. 
And what one obtains at the end is indeed quite similar to (a probabilistic variation of) 
Bellantoni, Niggl and Schwichtenberg calculus \RA~\cite{Bellantoni2000a,Schwichtenberg2002}.
Actually, the main difference between \RSLR\ and \SLR\ deals with
linearity: keeping the size of reducts under control during normalization 
is very difficult in presence of
higher-order duplication. For this reason, the two function spaces
$A\rightarrow B$ and $A\multimap B$ of \SLR\ collapse to just one in \RSLR, 
and arguments of an higher-order type can \emph{never} be duplicated.
This constraint allows us to avoid an exponential blowup in the size
of terms and results in a reasonably simple system for which polytime
soundness can be proved explicitly, by studying the combinatorics of
reduction. Another consequence of the just described  modification is subject reduction, which
can be easily proved in our system, contrarily to what happens in \SLR~\cite{Hofmann1998}.

%%%%%%%%%%%%%%%%%%%%%%%%%%%%%%%%%%%%%%%%%%%%%%%%
\subsection{On the Difficulty of Probabilistic ICC}
%%%%%%%%%%%%%%%%%%%%%%%%%%%%%%%%%%%%%%%%%%%%%%%%
Differently from most well known complexity classes such as $\PSET$, $\NP$ and $\LOGSPACE$, 
the probabilistic hierarchy contains so-called ``semantic classes'', like $\BPP$ and $\ZPP$.  
A semantic class is a complexity class defined on top of a class of algorithms which cannot
be easily enumerated: a probabilistic polynomial time Turing machine does not \emph{necessarily}
solve a problem in $\BPP$ nor in $\ZPP$. For most semantic classes, including $\BPP$ and $\ZPP$, the existence of
complete problems and the possibility to prove hierarchy theorems are both open. Indeed, researchers
in the area have proved the existence of such results for other probabilistic classes, but not
for those we are interested into~\cite{Fortnow}.

Now, having a ``truly implicit'' system $I$ for a complexity class $C$ means that we have a way to enumerate 
a set of programs solving problems in $C$ (for every problem there is at least one program that solves it).
The presence or absence of complete problems is deeply linked with the possibility to 
have a real ICC system for these semantic classes. In our case the ``semantic information''
in $\BPP$ and $\ZPP$, that is the probability error, seems to be an information that is impossible 
to capture with syntactical restrictions. We need to execute the program in order to check if the error 
bound is correct or not.

%%%%%%%%%%%%%%%%%%%%%%%%%%%%%%%%%%%%%%%%%%%%%%%%%%%%%%%%%
\section{The Syntax and Basic Properties of \RSLR}
%%%%%%%%%%%%%%%%%%%%%%%%%%%%%%%%%%%%%%%%%%%%%%%%%%%%%%%%%
\RSLR\ is a fairly standard Curry-style lambda calculus with constants for the natural numbers, branching and recursion. Its
type system, on the other hand, is based on ideas coming from linear logic (some variables can appear at most once in terms)
and on a distinction between modal and non modal variables.

Let us introduce the category of types first:
\begin{definition}[Types]
The \emph{types} of \RSLR\ are generated by the following grammar:
$$
\typone::=\N\mid\marr{\typone}{\typone}\mid\nmarr{\typone}{\typone}.
$$
Types different from $\N$ are denoted with metavariables
like $\htypone$ or $\htyptwo$. $\N$ is the only \emph{base type}.
\end{definition}
There are two function spaces in \RSLR.
Terms which can be typed with $\nmarr{\typone}{\typtwo}$ are
such that the result (of type $\typtwo$) can be computed in constant time, 
independently on the size of the argument (of type $\typone$).
On the other hand, computing the result of 
functions in $\marr{\typone}{\typtwo}$ requires polynomial
time in the size of their argument.

A notion of subtyping is used in \RSLR\ to capture the
intuition above by stipulating that the type $\nmarr{\typone}{\typtwo}$ is 
a subtype of $\marr{\typone}{\typtwo}$. Subtyping is best formulated
by introducing aspects: 
\begin{definition}[Aspects]
An \emph{aspect} is either $\masp$ or $\nmasp$: the first is
the \emph{modal} aspect, while the second is the \emph{non modal} one. Aspects
are partially ordered by the binary relation
$\{(\masp,\masp),(\masp,\nmasp),(\nmasp,\nmasp)\}$, noted $\aleq$.
\end{definition}
Subtyping rules are in Figure~\ref{fig:subtyping}.
\begin{figure*}[htbp]
\begin{center}
\fbox{
\begin{minipage}{.95\textwidth}
$$
\AxiomC{}\RightLabel{\srefl}\UnaryInfC{$\typone\aleq\typone$}\DisplayProof
\hspace{10pt}
\AxiomC{$\typone\aleq\typtwo$} \AxiomC{$\typtwo\aleq\typthree$}\RightLabel{\strans}\BinaryInfC{$\typone\aleq\typthree$}\DisplayProof
$$
$$
\AxiomC{$\typtwo\aleq\typone$}\AxiomC{$\typthree\aleq\typfour$}\AxiomC{$\asptwo\aleq\aspone$}\RightLabel{\ssub}
\TrinaryInfC{$\parr{\aspone}{\typone}{\typthree}\aleq\parr{\asptwo}{\typtwo}{\typfour}$}\DisplayProof
$$
\end{minipage}}
\end{center}
\caption{Subtyping rules.}\label{fig:subtyping}
\end{figure*}

\RSLR's terms are those of an applied lambda calculus with primitive recursion and branching, in the style of
G\"odel's $\mathsf{T}$:
\begin{definition}[Terms]
Terms and constants are defined as follows:
\begin{align*}
\termone::=&\varone\mid\constone\mid\termone\termtwo\mid\abstr{\varone:\aspone}{\typone}{\termone}\mid
   \casezeo{\typone}{\termone}{\termtwo}{\termthree}{\termfour}\mid\linrec{\typone}{\termone}{\termtwo}{\termthree};\\
\constone::=&\numeone\mid\S0\mid\S1\mid\P\mid\rand.
\end{align*}
Here, $\varone$ ranges over a denumerable set of variables and $\numeone$ ranges over the natural numbers seen as constants of base type.
Every constant $\constone$ has its naturally defined type, that we indicate with $\type{\constone}$. As an example, $\type{\numeone}=\N$ for every
$\numeone$, $\type{\rand}=\N$, while $\type{\S0}=\nmarr{\N}{\N}$. The size $\size{\termone}$ of any term $\termone$ can be easily defined
by induction on $\termone$:
\begin{align*}
\size{\varone}&=1;\\
\size{\termone\termtwo}&=\size{t}+\size{s};\\
\size{\abstr{\varone:\aspone}{\typone}{\termone}}&=\size{\termone}+1;\\
\size{\casezeo{\typone}{\termone}{\termtwo}{\termthree}{\termfour}} &= 
  \size{\termone}+\size{\termtwo}+\size{\termthree}+\size{\termfour}+1 ;\\
\size{\linrec{\typone}{\termone}{\termtwo}{\termthree}}&= \size{\termone}+\size{\termtwo}+\size{\termthree}+1;\\
\size{\numeone} &=\lceil\log_2(\numeone)\rceil;\\
\size{\S0}=\size{\S1}&=\size{\P}=\size{\rand}=1.
\end{align*}
 A term is said to be \emph{explicit} if it does not contain any instance of
$\mathtt{recursion}$. As usual, terms are considered modulo $\alpha$-conversion. Free (occurrences of) variables and capture-avoiding substitution
can be defined in a standard way.
\end{definition}
The main peculiarity of \RSLR\ with respect to similar calculi is the presence of a constant for random,
binary choice, called $\rand$, which evolves to either $0$ or $1$ with probability $\frac{1}{2}$. Although the calculus is in 
Curry-style, variables are explicitly assigned a type and an aspect in abstractions. This is for technical 
reasons that will become apparent soon.

The presence of terms which can (probabilistically) evolve in different ways makes it harder to define
a confluent notion of reduction for \RSLR. To see why, consider a term like
$$
\termone=(\abstr{\varone:\nmasp}{\N}{(\termone_{\oplus}\varone\varone)})\rand
$$
where $\termone_{\oplus}$ is a term computing $\oplus$ on natural numbers seen as
booleans ($0$ stands for ``false'' and everything else stands for ``true''):
\begin{align*}
\termone_{\oplus}&=\abstr{\varone:\nmasp}{\N}{\casezeo{\nmarr{\N}{\N}}{\varone}{\termtwo_\oplus}{\termthree_\oplus}{\termthree_\oplus}};\\
\termtwo_{\oplus}&=\abstr{\vartwo:\nmasp}{\N}{\casezeo{\N}{\vartwo}{0}{1}{1}};\\
\termthree_{\oplus}&=\abstr{\vartwo:\nmasp}{\N}{\casezeo{\N}{\vartwo}{1}{0}{0}}.
\end{align*}
If we evaluate $\termone$ in a call-by-value fashion, $\rand$ will be fired \emph{before} being
passed to $\termone_{\oplus}$ and, as a consequence, the latter will be fed with two identical
natural numbers, returning $0$ with probability $1$. If, on the other hand, $\rand$ is passed
unevaluated to $\termone_{\oplus}$, the four possible combinations on the truth table for $\oplus$
will appear with equal probabilities and the outcome will be $0$ or $1$ with probability $\frac{1}{2}$.
In other words, we need to somehow restrict our notion of reduction if we want it to be 
consistent, i.e. confluent.

For the just explained reasons, arguments are passed to functions following a mixed
scheme in \RSLR: arguments of base type are evaluated before being passed to functions, while arguments 
of an higher-order type are passed to functions possibly unevaluated, in a call-by-name fashion. 
Let's first of all define the one-step reduction relation:
\begin{definition}[Reduction]\label{def:reduction}
The \emph{one-step reduction relation} $\red$ is a binary
relation between terms and sequences of terms. It is defined by the axioms in 
Figure~\ref{fig:redaxioms} and can be applied in any contexts, except
in the second and third argument of a recursion. A term $\termone$ is in \emph{normal form}
if $\termone$ cannot appear as the left-hand side of a pair in $\red$. $\NF$ is the set of
terms in normal form.
\end{definition}
\begin{figure*}[htbp]
\begin{center}
\fbox{
\begin{minipage}{.95\textwidth}
\begin{align*}
\casezeo{\typone}{0}{\termone}{\termtwo}{\termthree} & \rightarrow \termone;\\
\casezeo{\typone}{(\S 0\numeone)}{\termone}{\termtwo}{\termthree} & \rightarrow \termtwo ;\\
\casezeo{\typone}{(\S 1\numeone)}{\termone}{\termtwo}{\termthree} & \rightarrow \termthree ;\\
\saferec{\typone}{0}{\funtwo}{\funone} & \rightarrow \funtwo;\\
\saferec{\typone}{\numeone}{\funtwo}{\funone} & \rightarrow \funone \numeone (\saferec{\tau}{\lfloor\frac{\numeone}{2}\rfloor}{\funtwo}{\funone});\\
\S0\numeone & \rightarrow2\cdot \numeone;\\
\S1\numeone & \rightarrow2\cdot \numeone+1;\\
\P 0 & \rightarrow 0;\\
\P \numeone & \rightarrow \lfloor\frac{\numeone}{2}\rfloor;\\
%regola di riduzione
(\abstr{\varone:\aspone}{\N}{\termone})\numeone&\red\subst{\termone}{\varone}{\numeone};\\
(\abstr{\varone:\aspone}{\htypone}{\termone})\termtwo&\red\subst{\termone}{\varone}{\termtwo};\\
%regola di commutazione
(\abstr{\varone:\aspone}{\typone}{\termone})\termtwo\termthree&\red(\abstr{\varone:\aspone}{\typone}{\termone\termthree})\termtwo;\\
\rand & \rightarrow 0,1;\\
\end{align*}
\end{minipage}}
\end{center}
\caption{One-step reduction rules.}\label{fig:redaxioms}
\end{figure*}
Informally, $\termone\red\termtwo_1,\ldots,\termtwo_n$ means, informally, that $\termone$ can evolve
in one-step to each of $\termtwo_1,\ldots,\termtwo_n$ with the same probability $\frac{1}{n}$. As a matter
of fact, $n$ can be either $1$ or $2$.

A multistep reduction relation will not be defined by simply taking the transitive and reflective
closure of $\red$, since a term can reduce in multiple steps to many terms with different probabilities.
Multistep reduction puts in relation a term $\termone$ to a probability distribution on terms
$\distrone_\termone$ such that $\distrone_\termone(\termtwo)>0$ only if $\termtwo$ is a normal
form to which $\termone$ reduces. Of course, if $\termone$ is itself a normal form, 
$\distrone_\termone$ is well defined, since the only normal form to which $\termone$ reduces
is $\termone$ itself, so $\distrone_\termone(\termone)=1$. But what happens
when $\termone$ is \emph{not} in normal form? Is $\distrone_\termone$ a well-defined
concept? Let us start by giving some rules deriving statements in
the form $\termone\pmred\distrone$:
\begin{definition}[Multistep Reduction]
The binary relation $\mred$ between terms and probability distributions is
defined by the rules in Figure~\ref{fig:multiredrules}.
\end{definition}
\begin{figure*}[htbp]
\begin{center}
\fbox{
\begin{minipage}{.95\textwidth}
$$
\AxiomC{$\termone \rightarrow t_1,\ldots,t_n$}\AxiomC{$t_i \pmred{} \distrone_i$}
\BinaryInfC{$\termone\pmred{}  \sum_{i=1}^n{\frac{1}{n}\distrone_i} $}\DisplayProof
\hspace{10pt}
\AxiomC{$\termone\in\NF$}\UnaryInfC{$\termone\pmred\distrone_{\termone}$}\DisplayProof
$$
\end{minipage}}
\end{center}
\caption{Multistep Reduction: Inference Rules}\label{fig:multiredrules}
\end{figure*}
In Section~\ref{sect:confluence}, we will prove that for every
$\termone$ there is at most one $\distrone$ such that $\termone\pmred\distrone$.
We are finally able to present the type system. Preliminary to that is the definition of a proper notion
of a context.
\begin{definition}[Contexts]\label{def:contexts}
A \emph{context} $\conone$ is a finite set of assignments of types and aspects to variables, in the
form $\asgn{\varone}{\aspone}{\typone}$. As usual, we require contexts not to contain
assignments of distinct types and aspects to the same variable. The union of two disjoint contexts $\conone$ and $\contwo$
is denoted as $\conone,\contwo$. In doing so, we implicitly assume that the variables in 
$\conone$ and $\contwo$ are pairwise distinct. The union $\conone,\contwo$ is sometimes
denoted as $\conone;\contwo$. This way we want to stress that all types appearing
in $\conone$ are base types. With the expression $\conone\aleq\aspone$ we mean
that any aspect $\asptwo$ appearing in $\conone$ is such that $\asptwo\aleq\aspone$.
\end{definition}
Typing rules are in Figure~\ref{fig:typerules}.
\begin{figure*}[htbp]
\begin{center}
\fbox{
\begin{minipage}{.95\textwidth}
$$
\AxiomC{$\asgn{\varone}{\aspone}{\typone}\in\conone$}\RightLabel{\textsc{(T-Var-Aff)}}\UnaryInfC{$\conone\vdash \varone:\typone$}\DisplayProof
\hspace{10pt}
\AxiomC{$\conone\vdash \termone:\typone$}\AxiomC{$\typone\aleq\typtwo$}\RightLabel{\textsc{\tsub}}\BinaryInfC{$\conone\vdash \termone:\typtwo$}\DisplayProof
$$
\vspace{3pt}
$$
\AxiomC{$\conone,\asgn{\varone}{\aspone}{\typone}\vdash \termone:\typtwo$}\RightLabel{\textsc{(T-Arr-I)}}\UnaryInfC{$\conone\vdash \abstr{\varone:\aspone}{\typone}{\termone}:\parr{\aspone}{\typone}{\typtwo}$}\DisplayProof
\hspace{10pt}
\AxiomC{}\RightLabel{\textsc{(T-Const-Aff)}}\UnaryInfC{$\conone\vdash \constone:\type{\constone}$}\DisplayProof
$$
\vspace{3pt}
$$
\AxiomC{$\conone;\contwo_1\vdash \termone:\N$}\noLine\UnaryInfC{$\conone;\contwo_2\vdash \termtwo:\typone$}
\AxiomC{$\conone;\contwo_3\vdash \termthree:{\typone}$}\noLine\UnaryInfC{$\conone;\contwo_4\vdash \termfour:{\typone} $}
\AxiomC{$\typone$ is $\Box$-free}
\RightLabel{\textsc{(T-Case)}}
\TrinaryInfC{$\conone;\contwo_1,\contwo_2,\contwo_3,\contwo_4\vdash \casezeo{\typone}{\termone}{\termtwo}{\termthree}{\termfour}:\typone$}\DisplayProof
$$
\vspace{3pt}
$$
\AxiomC{$\conone_1;\contwo_1\vdash\termone:\N$}
\noLine
\UnaryInfC{$\conone_1,\conone_2;\contwo_2\vdash \termtwo:\typone$}
\noLine
\UnaryInfC{$\conone_1,\conone_2;\vdash \termthree: \parr{\masp}{\N}{ \parr{\nmasp}{\typone}{\typone} }$}
\AxiomC{$\conone_1;\contwo_1 \aleq \masp$}
\noLine
\UnaryInfC{\mbox{$\typone$ is $\masp$-free}}
\RightLabel{\textsc{(T-Rec)}}
\BinaryInfC{$\conone_1,\conone_2;\contwo_1,\contwo_2\vdash \saferec{\typone}{\termone}{\termtwo}{\termthree}:\typone$}
\DisplayProof
$$
\vspace{3pt}
$$
\AxiomC{$\conone;\contwo_{1}\vdash \termone:\parr{\aspone}{\typone}{\typtwo}$} \AxiomC{$\conone;\contwo_{2}\vdash \termtwo:\typone$}
\AxiomC{$\conone,\contwo_{2}\aleq\aspone$}\RightLabel{\textsc{(T-Arr-E)}} \TrinaryInfC{$\conone;\contwo_{1},\contwo_{2}\vdash(\termone\termtwo):\typtwo$}
\DisplayProof 
$$
\end{minipage}}
\end{center}
\caption{Type rules}\label{fig:typerules}
\end{figure*}
Observe how rules with more than one premise are designed in such a way as to guarantee that whenever
$\conone\vdash\termone:\typone$ can be derived and $\varone:\aspone\htypone$ is in $\conone$, then
$\varone$ can appear free at most once in $\termone$. If $\vartwo:\aspone\N$ is in $\conone$, on 
the other hand, then $\vartwo$ can appear free in $\termone$ an arbitrary number of times.

\begin{definition}
A \emph{first-order term} of arity $k$ is a closed, well typed term
of type $\parr{\aspone_1}{\N}{\parr{\aspone_2}{\N}{\ldots{\parr{\aspone_k}{\N}{\N}}}}$
for some $\aspone_1,\ldots,\aspone_k$.
\end{definition}

\begin{example} 
Let's see some examples. Two terms that we are able to type in our system and one that is not possible to type.
 
As we will see in Chapter \ref{subsec:naturalNumbers} we are able to type addition and multiplication.
Addition gives in output a number (recall that we are in unary notation) such that the resulting length is the sum of the input lengths.
\begin{align*}
\ADD\equiv &\abstr{\varone:\masp}{\N}{ \abstr{\vartwo:\nmasp}{\N}{ \\& \saferec{\N}{\varone}{\vartwo}{( \abstr{\varone:\masp}{\N}{\abstr{\vartwo:\nmasp}{\N}{\S1 \vartwo}}  )}    }  } : \masp\N\red\nmasp\N\red\N
\end{align*}
We are also able to define multiplication. The operator is, as usual, defined by apply a sequence of additions.
\begin{align*}
\MULT\equiv &\abstr{\varone:\masp}{\N}{ \abstr{\vartwo:\masp}{\N}{\\&  \saferec{\N}{(\P\varone)}{\vartwo}{( \abstr{\varone:\masp}{\N}{\abstr{\varthree:\nmasp}{\N}{\ADD \vartwo \varthree}}  )}    }  } : \masp\N\red\masp\N\red\N
\end{align*}
Now that we have multiplication, why not insert it in a recursion and get an exponential? As it will be clear from the 
next example, the restriction on the aspect of the iterated function save us from having an exponential growth.
Are we able to type the following term?
$$
\abstr{\varfour:\masp}{\N}{
\recursion{\N}{\varfour}{(11)}{( \abstr{\varone:\masp}{\N}{ \abstr{\vartwo:\nmasp}{\N}{ \MULT(\vartwo,\vartwo)  }      }  )}
}
$$
The answer is negative: the operator $\MULT$ requires input of aspect $\masp$, while the iterator function 
need to have type $\masp\N\rightarrow \nmasp\N\rightarrow\N $.

\end{example}

%%%%%%%%%%%%%%%%%%%%%%%%%%%%%%%%
\subsection{Subject Reduction}
%%%%%%%%%%%%%%%%%%%%%%%%%%%%%%%%
The first property we are going to prove about \RSLR\ is preservation of types
under reduction, the so-called Subject Reduction Theorem. The proof of it is going to be very standard
and, as usual, amounts to proving substitution lemmas. Preliminary to that is a
technical lemma saying that weakening is derivable (since the type system is affine):
\begin{lemma}[Weakening Lemma]\label{lemma:weakening}
If $\conone\vdash\termone:\typone$, then
$\conone,\varone:\asptwo\typtwo\vdash\termone:\typone$ whenever
$\varone$ does not appear in $\conone$.
\end{lemma}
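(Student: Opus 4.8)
The plan is to argue by induction on the structure of the derivation $\derone$ of $\conone\vdash\termone:\typone$, adding the fresh assignment $\varone:\asptwo\typtwo$ at every node of $\derone$ and checking that each rule still applies. Since by hypothesis $\varone$ does not occur in $\conone$, we may moreover assume, renaming bound variables by $\alpha$-conversion, that $\varone$ is distinct from every variable bound inside $\termone$, so that no variable capture arises when the context is enlarged.

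The axiom cases are immediate. In \textsc{(T-Var-Aff)} the term is a variable $\vartwo$ with $\asgn{\vartwo}{\aspone}{\typone}\in\conone$; this membership, hence the conclusion, is preserved in $\conone,\varone:\asptwo\typtwo$, and $\vartwo\neq\varone$ precisely because $\varone\notin\conone$. In \textsc{(T-Const-Aff)} the context is arbitrary, so there is nothing to do. For \textsc{(T-Sub)} we apply the induction hypothesis to the single premise and re-apply the rule; the same works for \textsc{(T-Arr-I)}, where the premise already extends the context by the abstracted variable, which we have arranged to differ from $\varone$, the two extensions commuting because contexts are finite sets.

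The only genuinely delicate point is the handling of the rules that partition the context — \textsc{(T-Case)}, \textsc{(T-Rec)} and \textsc{(T-Arr-E)} — because we must decide into which component of the partition the new assignment goes, and the choice must not invalidate the aspect side conditions of those rules. In each case we route the fresh variable through a linear component carrying no aspect constraint: in \textsc{(T-Case)} we place $\varone:\asptwo\typtwo$ into $\contwo_1$ and use the induction hypothesis on the first premise, which is legitimate for any $\typtwo$ since a linear component of a context may contain assignments of arbitrary type; in \textsc{(T-Rec)} we place it into $\contwo_2$ and use the induction hypothesis on $\conone_1,\conone_2;\contwo_2\vdash\termtwo:\typone$, deliberately avoiding $\contwo_1$ (which is constrained by $\conone_1;\contwo_1\aleq\masp$) and the shared part, so the side conditions survive verbatim; in \textsc{(T-Arr-E)} we place it into $\contwo_1$ and use the induction hypothesis on the premise typing the function, leaving $\conone,\contwo_2\aleq\aspone$ untouched. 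Since $\conone$ and the remaining premises are not modified, every partitioned rule can be re-applied with the enlarged context.

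The main obstacle is thus nothing more than this bookkeeping: one has to send the fresh variable down a branch of the derivation where no aspect restriction or base-type restriction can be broken, and then the induction closes mechanically. Affineness is no impediment here, since a weakened variable is simply never used, and zero occurrences are compatible both with the ``at most once'' discipline imposed on higher-order variables and with the unrestricted regime of base-type variables.
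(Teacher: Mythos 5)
Your proof is correct and follows the same skeleton as the paper's: induction on the typing derivation, with the axiom and single-premise rules handled exactly as there. The only real divergence is in the context-splitting rules. The paper case-splits on whether $\typtwo$ is $\N$ and on the aspect $\asptwo$: when the new variable has base type it is pushed into the shared zone to the left of the ``;'' (which forces weakening of several premises and, in \textsc{(T-Rec)}, a further split on $\asptwo$ because of the side condition $\conone_1;\contwo_1\aleq\masp$), and only for higher-order $\typtwo$ does it weaken a single premise. You instead route the fresh assignment uniformly into one linear component that carries no constraint ($\contwo_1$ in \textsc{(T-Case)} and \textsc{(T-Arr-E)}, $\contwo_2$ in \textsc{(T-Rec)}), which is legitimate since the paper's convention only restricts the left of ``;'' to base types and places no restriction on the right-hand component --- a point the paper itself exploits later in the subject-reduction proof. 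This buys you a uniform argument with no case analysis on $\typtwo$ or $\asptwo$, only one premise to weaken per rule, and it keeps the aspect side conditions $\conone_1;\contwo_1\aleq\masp$ and $\conone,\contwo_2\aleq\aspone$ literally untouched, whereas adding a $\nmasp$-variable to the shared zone would have to be argued compatible with them.
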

{
\begin{proof} By induction on the structure of the typing derivation for $\termone$.
 \begin{varitemize}
  \item If last rule was \textsc{(T-Var-Aff)} or \textsc{(T-Const-Aff)}, we are allowed to add whatever we want in the context. This case is trivial.
  \item If last rule was \textsc{(T-Sub)} or \textsc{(T-Arr-I)}, the thesis is proved by using induction hypothesis on the premise.
  \item Suppose that the last rule was:
$$
\AxiomC{$\conone;\contwo_1\vdash \termfive:N$}\noLine\UnaryInfC{$\conone;\contwo_2\vdash \termtwo:\typone$}
\AxiomC{$\conone;\contwo_3\vdash \termthree:{\typone}$}\noLine\UnaryInfC{$\conone;\contwo_4\vdash \termfour:{\typone} $}
\AxiomC{$\typone$ is $\Box$-free}
\RightLabel{\textsc{(T-Case)}}
\TrinaryInfC{$\conone;\contwo_1,\contwo_2,\contwo_3,\contwo_4\vdash \casezeo{\typone}{\termfive}{\termtwo}{\termthree}{\termfour}:\typone$}\DisplayProof
$$
  If $\typtwo \equiv \N$ we can easily do it by applying induction hypothesis on every premises and add $\varone$ to $\conone$. Otherwise, we can do it by applying induction hypothesis on just one premise and the thesis is proved.
  \item Suppose that the last rule was:
$$
\AxiomC{$\conone_1;\contwo_1\vdash\termfour:\N$}
\noLine
\UnaryInfC{$\conone_1,\conone_2;\contwo_2\vdash \termtwo:\typone$}
\noLine
\UnaryInfC{$\conone_1,\conone_2;\vdash \termthree: \parr{\masp}{\N}{ \parr{\nmasp}{\typone}{\typone} }$}
\AxiomC{$\conone_1;\contwo_1 \aleq \masp$}
\noLine
\UnaryInfC{\mbox{$\typone$ is $\masp$-free}}
\RightLabel{\textsc{(T-Rec)}}
\BinaryInfC{$\conone_1,\conone_2;\contwo_1,\contwo_2\vdash \saferec{\typone}{\termfour}{\termtwo}{\termthree}:\typone$}
\DisplayProof
$$
   Suppose that $\typtwo \equiv \N$, we have the following cases:
  \begin{varitemize}
    \item If $\asptwo\equiv \masp$, we can do it by applying induction hypothesis on all the premises and add $\varone$ in $\conone_1$.
    \item If $\asptwo\equiv \nmasp$ we apply induction hypothesis on $\conone_1,\conone_2;\contwo_2\vdash \termtwo:\typone$ and on $\conone_1,\conone_2;\vdash \termthree: \parr{\masp}{\N}{ \parr{\nmasp}{\typone}{\typone} }$.
  \end{varitemize}
  Otherwise we apply induction hypothesis on $\conone_1;\contwo_1\vdash\termfour:\N$ or on $\conone_1,\conone_2;\contwo_2\vdash \termtwo:\typone$ and we are done.

  \item Suppose that the last rule was:
$$
\AxiomC{$\conone;\contwo_{1}\vdash \termthree:\parr{\aspone}{\typone}{\typtwo}$} \AxiomC{$\conone;\contwo_{2}\vdash \termtwo:\typone$}
\AxiomC{$\conone,\contwo_{2}\aleq\aspone$}\RightLabel{\textsc{(T-Arr-E)}} \TrinaryInfC{$\conone;\contwo_{1},\contwo_{2}\vdash(\termthree\termtwo):\typtwo$}
\DisplayProof 
$$
If $\typtwo\equiv \N$ we have to apply induction hypothesis on all the premises. Otherwise we apply induction hypothesis 
on just one premise and the thesis is proved.
\end{varitemize}
This concludes the proof.
\end{proof}}{}

Two substitution lemmas are needed in \RSLR. The first one applies when the variable to
be substituted has a non-modal type:
\begin{lemma}[$\nmasp$-Substitution Lemma]\label{th:nonmodalsubstitution} 
Let $\conone; \contwo\vdash\termone:\typone$. Then
\begin{varenumerate}
\item
  if $\conone=\varone:\nmasp\N,\conthree$, then $\conthree;\contwo\vdash\subst{\termone}{\varone}{\numeone}:\typone$ for every $\numeone$;
\item
  if $\contwo=\varone:\nmasp\htypone,\conthree$ and $\conone;\confour\vdash\termtwo:\htypone$, then 
  $\conone;\conthree,\confour\vdash\subst{\termone}{\varone}{\termtwo}:\typone$.
\end{varenumerate}
\end{lemma}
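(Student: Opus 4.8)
The plan is to prove both items simultaneously by induction on the derivation of $\conone;\contwo\vdash\termone:\typone$, reasoning by cases on the last rule applied; the case split is the same for the two items, so I describe it once and indicate where they diverge. Apart from the treatment of aspects in \textsc{(T-Rec)} and \textsc{(T-Arr-E)}, this is the textbook substitution-lemma argument, so I dwell only on the non-routine points. The base cases are immediate: if the last rule is \textsc{(T-Const-Aff)} the term is a closed constant and the same rule retypes it in the new context; if it is \textsc{(T-Var-Aff)} the term is a variable, and either it is not the substituted one --- so the term is unchanged and \textsc{(T-Var-Aff)} reapplies, the variable still being in the context --- or it is, in which case for item~1 the replacement is a numeral, typed in any context by \textsc{(T-Const-Aff)}, and for item~2 the replacement is $\termtwo$, so $\conone;\conthree,\confour\vdash\termtwo:\htypone$ follows from the hypothesis $\conone;\confour\vdash\termtwo:\htypone$ by finitely many uses of the Weakening Lemma (Lemma~\ref{lemma:weakening}), adding the assignments of $\conthree$ (we may assume $\confour$ disjoint from $\conthree$, renaming if needed).

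The congruence cases \textsc{(T-Sub)} and \textsc{(T-Arr-I)} are settled by applying the induction hypothesis to the premise(s) and reapplying the rule; in \textsc{(T-Arr-I)} the bound variable is taken fresh by $\alpha$-conversion, so substitution commutes with the abstraction, and when it has base type it is added to the left-hand part of the context, which for item~2 needs one further use of Lemma~\ref{lemma:weakening} to extend the typing of $\termtwo$ with that fresh variable. For \textsc{(T-Case)}: in item~1 the substituted base-type variable belongs to the context $\conone$ shared by the four premises and may occur in all four subterms, so the induction hypothesis is invoked on each premise; in item~2 the substituted higher-order variable belongs to exactly one of the pairwise disjoint contexts $\contwo_1,\contwo_2,\contwo_3,\contwo_4$, hence occurs free only in the corresponding subterm, so the induction hypothesis is invoked only on that premise (carrying $\confour$ there) and the remaining three subterms are left unchanged; \textsc{(T-Case)} then reapplies since $\confour$ is disjoint from every $\contwo_i$ and the side condition constrains $\typone$ alone.

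The cases that genuinely exploit the non-modality of the substituted variable, and hence the main obstacle, are \textsc{(T-Rec)} and \textsc{(T-Arr-E)}. In \textsc{(T-Rec)}, write the conclusion as $\conone_1,\conone_2;\contwo_1,\contwo_2\vdash\saferec{\typone}{\termone_1}{\termone_2}{\termone_3}:\typone$. The substituted variable has aspect $\nmasp$, whereas the side condition $\conone_1;\contwo_1\aleq\masp$ forces every assignment in $\conone_1$ and $\contwo_1$ to have aspect $\masp$; hence it cannot belong to $\conone_1$ or $\contwo_1$. It therefore lies in $\conone_2$ (item~1) or $\contwo_2$ (item~2), and in neither case occurs in $\termone_1$; moreover, for item~2 it cannot occur in $\termone_3$, whose premise has empty right-hand context, so there one reapplies the induction hypothesis only to the second premise, whereas for item~1 one reapplies it to the second and third. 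Either way the left-hand context of those premises is $\conone_1,\conone_2$ --- the very context in which $\termtwo$ is assumed typed in item~2 --- and reapplying \textsc{(T-Rec)} is immediate, since $\confour$ enters only through $\contwo_2$, which bears no aspect constraint, while $\conone_1;\contwo_1\aleq\masp$ and the condition on $\typone$ are untouched. In \textsc{(T-Arr-E)}, write the conclusion as $\conone;\contwo_1,\contwo_2\vdash\termone_1\termone_2:\typone$, the premises typing $\termone_1:\parr{\aspone}{\typfour}{\typone}$ under $\contwo_1$ and $\termone_2:\typfour$ under $\contwo_2$, with side condition $\conone,\contwo_2\aleq\aspone$. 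In item~1 the base-type variable is shared, the induction hypothesis applies to both premises, and removing it from $\conone$ can only weaken the side condition. In item~2 the higher-order variable lies in $\contwo_1$ or $\contwo_2$: if in $\contwo_1$, the side condition is literally unchanged; if in $\contwo_2$, its aspect $\nmasp$ together with $\conone,\contwo_2\aleq\aspone$ forces $\aspone=\nmasp$, and after replacing it inside $\contwo_2$ by $\confour$ the condition still holds because \emph{every} aspect is $\aleq\nmasp$. Preserving these $\aleq$-side conditions under the rearrangement of contexts --- which works exactly because the substituted variable is non-modal --- is the only truly delicate point; everything else is bookkeeping with the affine discipline on variables.
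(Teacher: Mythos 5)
Your proof is correct and follows essentially the same route as the paper's: induction on the typing derivation with the decisive observations in \textsc{(T-Rec)} (the $\nmasp$ aspect excludes the variable from the $\aleq\masp$-constrained contexts) and \textsc{(T-Arr-E)}. You are in fact more explicit than the paper on the bookkeeping --- the weakening steps in the variable and \textsc{(T-Arr-I)} cases and the preservation of the $\conone,\contwo_{2}\aleq\aspone$ side condition --- but this is a refinement of the same argument, not a different one.
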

\begin{proof} 
  By induction on a type derivation of $\termone$.
  \begin{varitemize}
  \item 
    If the last rule is \textsc{(T-Var-Aff)} or \textsc{(T-Arr-I)} or \textsc{(T-Sub)} or \textsc{(T-Const-Aff)} the proof is trivial.
  \item  
    If the last rule is \textsc{(T-Case)}. By applying induction hypothesis on the interested term we can easily derive the thesis.
  \item  
    If the last rule is \textsc{(T-Rec)}, our derivation will have the following appearance:
    $$
    \AxiomC{$\conone_2;\contwo_4\vdash\termfour:\N$}
    \noLine
    \UnaryInfC{$\conone_2,\conone_3;\contwo_5\vdash \termtwo:\typtwo$}
    \noLine
    \UnaryInfC{$\conone_2,\conone_3;\vdash \termthree: \parr{\masp}{\N}{ \parr{\nmasp}{\typtwo}{\typtwo} }$}
    \AxiomC{$\conone_2;\contwo_4 \aleq \masp$}
    \noLine
    \UnaryInfC{\mbox{$\typtwo$ is $\masp$-free}}
    \RightLabel{\textsc{(T-Rec)}}
    \BinaryInfC{$\conone_2,\conone_3;\contwo_4,\contwo_5\vdash \saferec{\typtwo}{\termfour}{\termtwo}{\termthree}:\typtwo$}
    \DisplayProof
    $$
%     where $\contwo_1\equiv \contwo_4,\contwo_5$ and $\conone\equiv \conone_2,\conone_3$.
    By definition, $\varone:\nmasp\typone$ cannot appear in $\conone_2;\contwo_4$. If it 
    appears in $\contwo_5$ we can simply apply induction hypothesis and prove the thesis. 
    We will focus on the most interesting case: it appears in $\conone_3$ and so $\typone \equiv \N$. 
    In that case, by the induction hypothesis applied to (type derivations for) $\termtwo$
    and $\termthree$, we obtain that:
    \begin{align*}
      \conone_2,\conone_4;\contwo_5&\vdash \subst{\termtwo}{\varone}{\numeone}:\typtwo\\
      \conone_2,\conone_4;&\vdash \subst{\termthree}{\varone}{\numeone}: \parr{\masp}{\N}{ \parr{\nmasp}{\typtwo}{\typtwo} }
    \end{align*}
    where $\conone_3\equiv\conone_4,\varone:\nmasp\N$.
\item  
  If the last rule is \textsc{(T-Arr-E)}, 
  $$
  \AxiomC{$\conone;\contwo_{4}\vdash \termone:\parr{\aspone}{\typthree}{\typtwo}$} 
  \AxiomC{$\conone;\contwo_{5}\vdash \termtwo:\typthree$}
  \AxiomC{$\conone,\contwo_{5}\aleq\aspone$}\RightLabel{\textsc{(T-Arr-E)}} 
  \TrinaryInfC{$\conone,\contwo_{4},\contwo_{5}\vdash(\termone\termtwo):\typtwo$}
  \DisplayProof 
  $$
%   where $\contwo_1\equiv\contwo_4,\contwo_5$.
  If $\varone :\typone$ is in $\conone$ then we apply induction hypothesis on both branches, otherwise it is either in 
  $\Delta_4$ or in $\Delta_5$ and we apply induction hypothesis on the corresponding branch. We arrive
  to the thesis by applying \textsc{(T-Arr-E)} at the end.
\end{varitemize}
This concludes the proof.
\end{proof}
Notice how two distinct substitution statements are needed, depending on the type of the substituted
variable being a base or an higher-order type. Substituting a variable of a modal type requires an additional 
hypothesis on the term being substituted:
\begin{lemma}[$\masp$-Substitution Lemma]\label{th:modalsubstitution} 
Let $\conone; \contwo\vdash\termone:\typone$. Then
\begin{varenumerate}
\item
  if $\conone=\varone:\masp\N,\conthree$, then $\conthree;\contwo\vdash\subst{\termone}{\varone}{\numeone}:\typone$ for every $\numeone$;
\item
  if $\contwo=\varone:\masp\htypone,\conthree$ and $\conone;\confour\vdash\termtwo:\htypone$ where
  $\conone,\confour\aleq \masp$, then $\conone;\conthree,\confour\vdash\subst{\termone}{\varone}{\termtwo}:\typone$.
\end{varenumerate}
\end{lemma}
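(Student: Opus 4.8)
The plan is to prove the two statements of the lemma simultaneously by induction on the typing derivation of $\termone$, splitting on the last rule, exactly along the lines of the proof of Lemma~\ref{th:nonmodalsubstitution}; the only genuinely new ingredients are the two spots where the side condition $\conone,\confour\aleq\masp$ is consumed. The leaf rules \textsc{(T-Var-Aff)} and \textsc{(T-Const-Aff)} are immediate: if $\termone$ is the substituted variable $\varone$ then the substitution replaces it by $\numeone$ (resp.\ by $\termtwo$), and we conclude with \textsc{(T-Const-Aff)} (resp.\ from the hypothesis on $\termtwo$ together with the Weakening Lemma (Lemma~\ref{lemma:weakening}) to reinstate $\conthree$); otherwise $\varone$ does not occur and we simply drop its declaration. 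For \textsc{(T-Sub)} and \textsc{(T-Arr-I)} we apply the induction hypothesis to the unique premise ($\alpha$-renaming the bound variable of an abstraction away from $\varone$ and from $\FV{\termtwo}$) and reapply the rule. For \textsc{(T-Case)}: in statement~1 the variable $\varone$ lies in the shared context $\conone$ and may occur in all four premises, so we use the induction hypothesis on each of them; in statement~2 the higher-order variable $\varone$ occurs in exactly one of the pairwise disjoint contexts $\contwo_1,\dots,\contwo_4$, so we use the induction hypothesis on that premise only; the result type and the $\masp$-freeness side condition are untouched, so \textsc{(T-Case)} reassembles the derivation.

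The rule \textsc{(T-Arr-E)} is the first place the new hypothesis is used. Write the application as $\termfour\,\termfive$ with premises $\conone;\contwo_1\vdash\termfour:\parr{\aspone}{\typthree}{\typtwo}$ and $\conone;\contwo_2\vdash\termfive:\typthree$ and side condition $\conone,\contwo_2\aleq\aspone$. Statement~1 and the subcase $\varone\in\contwo_1$ of statement~2 are routine: apply the induction hypothesis to the relevant premise, the side condition being untouched, then reapply \textsc{(T-Arr-E)}. In the subcase $\varone:\masp\htypone\in\contwo_2$, writing $\contwo_2=\varone:\masp\htypone,\contwo_2'$, the induction hypothesis on the second premise yields $\conone;\contwo_2',\confour\vdash\subst{\termfive}{\varone}{\termtwo}:\typthree$, and to reapply \textsc{(T-Arr-E)} we must check $\conone,\contwo_2',\confour\aleq\aspone$: the part about $\conone,\contwo_2'$ follows from the original side condition, while $\confour\aleq\aspone$ follows from $\confour\aleq\masp$ (a consequence of $\conone,\confour\aleq\masp$), the fact that $\masp\aleq\aspone$ for every aspect, and transitivity of $\aleq$. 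This is exactly the step that would fail without the extra hypothesis, and it is also why the hypothesis is absent from Lemma~\ref{th:nonmodalsubstitution}: there a non-modal variable in $\contwo_2$ forces $\aspone=\nmasp$, against which every aspect is already $\aleq$.

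The main obstacle is \textsc{(T-Rec)}. Write the instance with guard $\termfour$, base subterm $\termfive$ and step subterm $\termsix$, so the premises are $\conone_1;\contwo_1\vdash\termfour:\N$, $\conone_1,\conone_2;\contwo_2\vdash\termfive:\typone$ and $\conone_1,\conone_2;\vdash\termsix:\parr{\masp}{\N}{\parr{\nmasp}{\typone}{\typone}}$, with side conditions $\conone_1;\contwo_1\aleq\masp$ and $\typone$ is $\masp$-free. In statement~1 the variable $\varone:\masp\N$ lies in $\conone_1$ or in $\conone_2$ (the only base-type parts), hence may occur in all three subterms or only in $\termfive$ and $\termsix$; apply the induction hypothesis to the affected premises, observe that removing a variable only makes $\aleq\masp$ easier, and reassemble. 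In statement~2, since $\varone$ has higher-order type it cannot appear in $\conone_1,\conone_2$, nor in $\termsix$ (whose higher-order context is empty), so it occurs in exactly one of $\contwo_1,\contwo_2$; moreover the lemma's hypothesis $\conone,\confour\aleq\masp$ with $\conone=\conone_1,\conone_2$ tells us the whole base context, and $\confour$, are modal. If $\varone\in\contwo_2$, the induction hypothesis on the second premise (its ambient base context being $\conone_1,\conone_2$, matching that of $\termtwo$) gives a derivation of $\subst{\termfive}{\varone}{\termtwo}$ in $\conone_1,\conone_2;\contwo_2',\confour$, and \textsc{(T-Rec)} closes the case with the guard and step premises left verbatim. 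The delicate subcase is $\varone\in\contwo_1$: then $\subst{\termfour}{\varone}{\termtwo}$ inherits the free variables of $\termtwo$, which live in $\conone_1,\conone_2;\confour$, whereas the guard premise of \textsc{(T-Rec)} uses only a base context satisfying $\aleq\masp$. The resolution is to first weaken the derivation of $\termfour$ up to $\conone_1,\conone_2;\contwo_1$ (a context that is still $\aleq\masp$, by the side condition and the modality of $\conone_1,\conone_2$), apply the induction hypothesis with ambient base context $\conone_1,\conone_2$ to obtain $\conone_1,\conone_2;\contwo_1',\confour\vdash\subst{\termfour}{\varone}{\termtwo}:\N$, note this context is entirely modal, and then reassemble \textsc{(T-Rec)} by taking this as the guard premise (putting all of $\conone_1,\conone_2$ on the left of the new instance and $\contwo_1',\confour$ on its right), with the base and step premises unchanged and the new $\conone_2$-part empty; the $\masp$-freeness of $\typone$ is preserved and the conclusion's context is $\conone_1,\conone_2;\contwo_1',\contwo_2,\confour$, which is $\conone;\conthree,\confour$. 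I expect this re-splitting of the recursion context to be the only non-mechanical point; everything else is a transcription of the corresponding cases of Lemma~\ref{th:nonmodalsubstitution}.
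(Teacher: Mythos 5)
Your proposal is correct and follows essentially the same route as the paper: induction on the typing derivation with a case analysis on the last rule, the only substantive work being in \textsc{(T-Arr-E)} (where the hypothesis $\conone,\confour\aleq\masp$ is used to re-establish the application's side condition) and in \textsc{(T-Rec)}. In fact you spell out more carefully than the paper does the subcase where the substituted higher-order variable sits in the guard's context of a recursion --- the paper dismisses it as easy, whereas your weakening-and-re-splitting of the base context, justified by the all-modal hypothesis, is exactly what makes it go through.
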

\begin{proof}
By induction on the derivation.
\begin{varitemize}
\item 
  If last rule is \textsc{(T-Var-Aff)} or \textsc{(T-Arr-I)} or \textsc{(T-Sub)} or \textsc{(T-Const-Aff)} the proof is trivial.
\item  
  If last rule is \textsc{(T-Case)}. By applying induction hypothesis on the interested term we can easily derive the thesis.
\item  
  If last rule is \textsc{(T-Rec)}, our derivation will have the following appearance:
  $$
  \AxiomC{$\conone_2;\contwo_4\vdash\termfour:\N$}
  \noLine
  \UnaryInfC{$\conone_2,\conone_3;\contwo_5\vdash \termtwo:\typtwo$}
  \noLine
  \UnaryInfC{$\conone_2,\conone_3;\vdash \termthree: \parr{\masp}{\N}{ \parr{\nmasp}{\typtwo}{\typtwo} }$}
  \AxiomC{$\conone_2;\contwo_4 \aleq \masp$}
  \noLine
  \UnaryInfC{\mbox{$\typtwo$ is $\masp$-free}}
  \RightLabel{\textsc{(T-Rec)}}
  \BinaryInfC{$\conone_2,\conone_3;\contwo_4,\contwo_5\vdash \saferec{\typtwo}{\termfour}{\termtwo}{\termthree}:\typtwo$}
  \DisplayProof
  $$
  By definition $\varone:\masp\typone$ can appear in $\conone_1;\contwo_4$. If so, by applying induction hypothesis we 
  can derive easily the proof. In the other cases, we can proceed as in Lemma~\ref{th:nonmodalsubstitution}.
  We will focus on the most 
  interesting case, where $\varone:\masp\typone$ appears in $\conone_2$ and so $\typone \equiv \N$.
  In that case, by the induction hypothesis applied to (type derivations for) $\termtwo$
    and $\termthree$, we obtain that:
    \begin{align*}
      \conone_4,\conone_3;\contwo_5&\vdash \subst{\termtwo}{\varone}{\numeone}:\typtwo\\
      \conone_4,\conone_3;&\vdash \subst{\termthree}{\varone}{\numeone}: \parr{\masp}{\N}{ \parr{\nmasp}{\typtwo}{\typtwo} }
    \end{align*}
    where $\conone_2\equiv\conone_4,\varone:\masp\N$.

\item  
  If last rule is \textsc{(T-Arr-E)},
  $$
  \AxiomC{$\conone;\contwo_{4}\vdash \termone:\parr{\aspone}{\typthree}{\typtwo}$} \AxiomC{$\conone;\contwo_{5}\vdash \termtwo:\typthree$}
  \AxiomC{$\conone,\contwo_{5}\aleq\aspone$}\RightLabel{\textsc{(T-Arr-E)}} \TrinaryInfC{$\conone,\contwo_{4},\contwo_{5}\vdash(\termone\termtwo):\typtwo$}
  \DisplayProof 
  $$
  If $\varone :\typone$ is in $\conone$ then we apply induction hypothesis on both branches, otherwise it is either in $\contwo_4$ or in $\contwo_5$ 
  and we apply induction hypothesis on the relative branch. We prove our thesis by applying \textsc{(T-Arr-E)} at the end.
\end{varitemize}
This concludes the proof.
\end{proof}
Substitution lemmas are necessary ingredients when proving subject reduction. In particular, they allow to prove
that types are preserved along beta reduction steps, the other reduction steps being very easy. We get:
\begin{theorem}[Subject Reduction] 
Suppose that $\Gamma\vdash \termone:\typone$. If $t\red t_1\ldots t_\nattwo$, then 
for every $\indexone\in\{1,\ldots,\nattwo\}$, it holds that 
$\Gamma\vdash \termone_\indexone:\typone$.
\end{theorem}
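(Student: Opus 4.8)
The plan is to argue by induction on the derivation of $\conone\vdash\termone:\typone$, splitting in each case according to whether the reduction $\termone\red\termone_1,\ldots,\termone_\nattwo$ takes place at the root of $\termone$ or inside a proper subterm. The subterm case is uniform: the redex must lie in a subterm matching a premise of the last typing rule --- and, because $\red$ is barred from the second and third arguments of a $\mathtt{recursion}$, in the \textsc{(T-Rec)} case only the first premise is concerned --- so one applies the induction hypothesis to the corresponding subderivation, obtaining reducts of \emph{exactly} the same type, and reapplies the same rule; when the last rule is \textsc{(T-Sub)} one applies the induction hypothesis and then \textsc{(T-Sub)} again, so that this rule is transparently handled even for root redexes. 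The only novelty is the axiom $\rand\red 0,1$, which yields two reducts: but $0$ and $1$ both have type $\N=\type{\rand}$, so substituting either for $\rand$ in an arbitrary context keeps every typing rule applicable, and both reducts again receive type $\typone$.

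For the root reductions not unfolding a binder one inspects the rule that must have produced the type of the redex. For the constant rewrites $\S0\,\numeone\red 2\cdot\numeone$, $\S1\,\numeone\red 2\cdot\numeone+1$, $\P\,0\red 0$ and $\P\,\numeone\red\lfloor\numeone/2\rfloor$ both sides have type $\N$ and nothing beyond \textsc{(T-Const-Aff)} is needed. For the three $\mathtt{case}$ rewrites the last rule is \textsc{(T-Case)}, which types every branch with the common type $\typone$ over a sub-context of the conclusion's context; the selected branch is therefore already typed with $\typone$, and Lemma~\ref{lemma:weakening}, applied repeatedly, restores the context components contributed by the discarded subterms. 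For $\saferec{\typone}{0}{\funtwo}{\funone}\red\funtwo$ one reads off from \textsc{(T-Rec)} that $\funtwo:\typone$ and weakens likewise. The combinatorially richer case of this group is the unfolding $\saferec{\typone}{\numeone}{\funtwo}{\funone}\red\funone\,\numeone\,(\saferec{\typone}{\lfloor\numeone/2\rfloor}{\funtwo}{\funone})$: one rebuilds the derivation by first re-deriving the inner $\mathtt{recursion}$ via \textsc{(T-Rec)} --- with $\lfloor\numeone/2\rfloor$ for $\numeone$, legitimate since numerals are typable in the empty context --- and then applying \textsc{(T-Arr-E)} twice, to $\funone\,\numeone$ and then to the inner recursion. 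Some care is needed: \textsc{(T-Rec)} types $\funone$ with an empty non-base zone, so every free variable of $\funone$ is of base type and $\funone$ may be used twice; moreover, in forming $\funone\,\numeone$ --- whose domain has aspect $\masp$ --- one re-reads the context of $\funone$ so that its non-modal variables sit in the non-base zone of the context split, which makes the aspect side-condition of \textsc{(T-Arr-E)} satisfiable.

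The two $\beta$-rules $(\abstr{\varone:\aspone}{\N}{\termone})\,\numeone\red\subst{\termone}{\varone}{\numeone}$ and $(\abstr{\varone:\aspone}{\htypone}{\termone})\,\termtwo\red\subst{\termone}{\varone}{\termtwo}$ are reduced to the substitution lemmas. The redex is typed by a \textsc{(T-Arr-E)} whose left premise, after a simple generation step reaching the \textsc{(T-Arr-I)} below any intervening \textsc{(T-Sub)}s, introduces exactly the aspect $\aspone$ declared in the abstraction; one thereby extracts a derivation of $\termone$ in a context extended by $\varone:\aspone\N$, respectively $\varone:\aspone\htypone$, and invokes the matching lemma --- part~1 of Lemma~\ref{th:nonmodalsubstitution} or of Lemma~\ref{th:modalsubstitution} for the numeral rule, part~2 of the same lemmas for the higher-order rule, taking the $\nmasp$- or $\masp$-version according to $\aspone$. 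In the $\masp$, higher-order case the extra hypothesis $\conone,\confour\aleq\masp$ required by part~2 of Lemma~\ref{th:modalsubstitution} is precisely the side-condition $\conone,\contwo\aleq\aspone$ already carried by the \textsc{(T-Arr-E)} that typed the redex, so nothing is missing.

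The delicate case, which I expect to carry most of the work, is the commuting conversion $(\abstr{\varone:\aspone}{\typone}{\termone})\,\termtwo\,\termthree\red(\abstr{\varone:\aspone}{\typone}{\termone\,\termthree})\,\termtwo$. No substitution occurs here, so the substitution lemmas are of no direct use, and one must instead perform surgery on the derivation of the left-hand side: peel off the two \textsc{(T-Arr-E)} steps and the \textsc{(T-Arr-I)} step that produced the type of the redex, bring $\varone$ into the scope of $\termthree$'s subderivation by Lemma~\ref{lemma:weakening} --- legitimate since $\varone$ is bound and hence absent from $\termthree$ --- form $\termone\,\termthree$ under the abstraction by a fresh \textsc{(T-Arr-E)}, and finally reapply \textsc{(T-Arr-I)} and \textsc{(T-Arr-E)}. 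The subtle point is that this fresh \textsc{(T-Arr-E)} reimposes an aspect side-condition against a context now carrying $\varone:\aspone\typone$, and one must check it still holds. The key observation is that $\varone$ need not occupy the base zone that the side-condition inspects: if $\typone=\N$ the base-typed $\varone$ can be kept in the non-base zone of the split, while if $\typone$ is higher-order $\varone$ occurs at most once, so, occurring in $\termone$, it does not occur in $\termthree$ and thus lies outside the zone examined by the side-condition. Together with the side-conditions already attached to the two original applications, this yields the required typing. It is precisely in this interplay of linearity, the modal/non-modal stratification and $\beta$-style rewriting that the genuine content of the proof resides.
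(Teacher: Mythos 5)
Your proposal is correct and follows essentially the same route as the paper's proof: induction on the typing derivation, with the substitution lemmas ($\nmasp$- and $\masp$-versions, matched to the declared aspect) handling the two $\beta$-rules, explicit reconstruction of the derivation for the recursion unfolding (including the same trick of shifting the base-typed context of $\funone$ into the non-base zone so that the $\masp$ side-condition of \textsc{(T-Arr-E)} becomes vacuous) and for the ``swap'' rule, and weakening where discarded premises' contexts must be restored. The only differences are cosmetic: you organize by root versus internal redex and spell out a few points the paper leaves as ``trivial'' (e.g.\ the \textsc{(T-Arr-I)} case, where the paper wrongly dismisses reduction under the binder, and the superfluous weakening step in the swap case, which your zone-placement observation already makes unnecessary).
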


\begin{proof}
By induction on the derivation for term $\termone$. We will check the last rule.
\begin{varitemize}
\item 
  If last rule is \textsc{(T-Var-Aff)} or \textsc{(T-Const-Aff)}. The thesis is trivial. 
\item 
  If last rule is \textsc{(T-Sub)}. The thesis is trivial.
\item 
  If last rule is \textsc{(T-Arr-I)}. The term cannot reduce due to is a value.
\item 
  If last rule is \textsc{(T-Case)}.
  $$
  \AxiomC{$\conone;\contwo_1\vdash \termtwo:N$}\noLine\UnaryInfC{$\conone;\contwo_2\vdash \termthree:\typone$}
  \AxiomC{$\conone;\contwo_3\vdash \termfour:{\typone}$}\noLine\UnaryInfC{$\conone;\contwo_4\vdash \termfive:{\typone} $}
  \AxiomC{$\typone$ is $\Box$-free}
  \RightLabel{\textsc{(T-Case)}}
  \TrinaryInfC{$\conone;\contwo_1,\contwo_2,\contwo_3,\contwo_4\vdash \casezeo{\typone}{\termtwo}{\termthree}{\termfour}{\termfive}:\typone$}\DisplayProof
  $$
  Our final term could reduce in two ways. Either we do $\beta$-reduction on $\termtwo,\termthree,\termfour$ or $\termfive$, 
  or we choose one of branches in the case. In all the cases, the proof is trivial.
\item 
  If last rule is \textsc{(T-Rec)}.
  $$
  \AxiomC{$\dertwo:\conone_1;\contwo_1\vdash\termtwo:\N$}
  \noLine
  \UnaryInfC{$\derthree:\conone_1,\conone_2;\contwo_2\vdash \termthree:\typone$}
  \noLine
  \UnaryInfC{$\derfour:\conone_1,\conone_2;\vdash \termfour: \parr{\masp}{\N}{ \parr{\nmasp}{\typone}{\typone} }$}
  \AxiomC{$\conone_1;\contwo_1 \aleq \masp$}
  \noLine
  \UnaryInfC{\mbox{$\typone$ is $\masp$-free}}
  \RightLabel{\textsc{(T-Rec)}}
  \BinaryInfC{$\conone_1,\conone_2;\contwo_1,\contwo_2\vdash \saferec{\typone}{\termtwo}{\termthree}{\termfour}:\typone$}
  \DisplayProof
  $$
  Our term could reduce in three ways. We could evaluate $\termtwo$ (trivial), we could be in the case where $\termtwo\equiv 0$ 
  (trivial) and the other case is where we unroll the recursion (so, where $\termtwo$ is a value $\numeone\geq 1$). We are 
  going to focus on this last option. The term rewrites to $\termfour\numeone(\saferec{\tau}{\lfloor \frac{\numeone}{2}\rfloor}{\termthree}{\termfour})$.
  We could set up the following derivation.
  \begin{align*}\derone \equiv&
    \AxiomC{}\RightLabel{\textsc{(T-Const-Aff)}}
    \UnaryInfC{$\conone_1;\contwo_1\vdash \lfloor \frac{\numeone}{2}\rfloor:\N $}
    \noLine
    \UnaryInfC{$\derfour:\conone_1,\conone_2;\vdash \termfour:\masp\N\red \nmasp\typone\red\typone $}
    \AxiomC{$\derthree:\conone_1,\conone_2;\contwo_2\vdash \termthree:\typone$}
    \RightLabel{\textsc{(T-Rec)}}
    \BinaryInfC{$\conone_1,\conone_2;\contwo_1,\contwo_2\vdash \saferec{\tau}{\lfloor \frac{\numeone}{2}\rfloor}{\termthree}{\termfour}: \typone $}
    \DisplayProof
  \end{align*}
  \begin{align*}\derfive\equiv &
    \AxiomC{$\derfour:\emcon;\conone_1,\conone_2\vdash \termfour:\masp\N\red \nmasp\typone\red\typone $}
    \AxiomC{}\RightLabel{\textsc{(T-Const-Aff)}}
    \UnaryInfC{$\emcon;\emcon\vdash\numeone:\N $}
    \RightLabel{(T-ARR-E)}
    \BinaryInfC{$\emcon;\conone_1,\conone_2\vdash \termfour\numeone:\nmasp\typone\red\typone$}
    \DisplayProof
  \end{align*}
  %Notice that the rule \textsc{(T-Arr-E)} when $\aspone \equiv \masp$ (this case) requires that contexts on the left of the ; 
  %are $\aleq \masp$. Here we can apply the rule because contexts on the left of the ; can always also stay on the right side but not viceversa. 
  By gluing the two derivation with the rule \textsc{(T-Arr-E)} we obtain:
  $$
  \AxiomC{$\derfive:\conone_1,\conone_2;\vdash \termfour\numeone:\nmasp\typone\red\typone$}
  \noLine\UnaryInfC{$\derone:\conone_1,\conone_2;\contwo_1,\contwo_2\vdash \saferec{\tau}{\lfloor \frac{\numeone}{2}\rfloor}{\termthree}{\termfour}: \typone $}
  \RightLabel{\textsc{(T-Arr-E)}}
  \UnaryInfC{$\conone_1,\conone_2,\conone_3;\contwo_1,\contwo_2\vdash \termfour\numeone (\saferec{\tau}{\lfloor \frac{\numeone}{2}\rfloor}{\termthree}{\termfour} ):\typone   $}
  \DisplayProof
  $$

  Notice that in the derivation $\derfour$ we put $\conone_1,\conone_2$ on the left side of ``;'' and also on the right side. Recall the definition \ref{def:contexts}, about ``;''. We would stress out that all the variable on the left side have base type, as $\conone_1,\conone_2$ have. The two contexts could also be \textit{``shifted''} on the right side because no constrains has been set on the variables on the right side.
\item 
  If last rule was \textsc{(T-Sub)} we have the following derivation:
  $$
  \AxiomC{$\conone\vdash \termtwo:\typone$}\AxiomC{$\typone\aleq\typtwo$}\RightLabel{\textsc{\tsub}}\BinaryInfC{$\conone\vdash \termtwo:\typtwo$}\DisplayProof
  $$
  If $\termtwo$ reduces to $\termthree$ we can apply induction hypothesis on the premises and having the following derivation:
  $$
  \AxiomC{$\conone\vdash \termthree:\typone$}\AxiomC{$\typone\aleq\typtwo$}\RightLabel{\textsc{\tsub}}\BinaryInfC{$\conone\vdash \termthree:\typtwo$}\DisplayProof
  $$
\item 
  If last rule was \textsc{(T-Arr-E)}, we could have different cases. 
  \begin{varitemize}
  \item 
    Cases where on the left part of our application we have $\S i$, $\P$ is trivial.
  \item 
    Let's focus on the case where on the left part we find a $\lambda$-abstraction. We will consider the case only where we apply the substitution. 
    The other case are trivial. We could have two possibilities:
    \begin{varitemize}
    \item
      First of all, we can be in the following situation:     
      $$
      \AxiomC{$\conone;\contwo_{1}\vdash \abstr{\varone:\nmasp}{\typone}{\termthree} :\parr{\aspone}{\typthree}{\typtwo}$} 
      \AxiomC{$\conone;\contwo_{2}\vdash \termtwo:\typthree$}
      \AxiomC{$\conone,\contwo_{2}\aleq\aspone$}\RightLabel{\textsc{(T-Arr-E)}} 
      \TrinaryInfC{$\conone,\contwo_{1},\contwo_{2}\vdash ( \abstr{\varone:\nmasp}{\typone}{\termthree})\termtwo:\typtwo$}
      \DisplayProof 
      $$
      where $\typthree \aleq \typone$ and $\aspone \aleq \nmasp$. We have that $( \abstr{\varone:\nmasp}{\typone}{\termthree})\termtwo $ rewrites to 
      $\subst{\termthree}{\varone}{\termtwo} $. By looking at rules in Figure~\ref{fig:typerules} we can deduce that 
      $\conone;\contwo_{1}\vdash \abstr{\varone:\nmasp}{\typone}{\termthree} :\parr{\aspone}{\typthree}{\typtwo}$ derives from 
      $\conone; \varone:\nmasp\typone,\contwo_{1}\vdash {\termthree} :{\typfour}$ (with $\typfour\aleq\typtwo$).      
      For the reason that $\typthree \aleq \typone$ we can apply \textsc{(T-Sub)} rule to $\conone;\contwo_{2}\vdash \termtwo:\typthree$ 
      and obtain $\conone;\contwo_{2}\vdash \termtwo:\typone$
      By applying Lemma~\ref{th:nonmodalsubstitution}, we get to
      $$
      \conone,\contwo_{1},\contwo_{2}\vdash\subst{\termthree}{\varone}{\termtwo}:\typfour
      $$
      from which the thesis follows by applying \textsc{(T-Sub)}.
      %we prove this case. Notice that the lemma \ref{th:nonmodalsubstitution} says that if $\typone \equiv \N$ then we can 
      %make the substitution only if $\termtwo$ is in normal form. Indeed, is exactly what rules in figure \ref{fig:redaxioms} say.
      %%%manca caso in cui si fa lo swap % caso aggiunto sotto.
    \item
      But we can even be in the following situation:
      $$
      \AxiomC{$\conone;\contwo_{1}\vdash \abstr{\varone:\masp}{\typone}{\termthree} :\parr{\masp}{\typthree}{\typtwo}$} 
      \AxiomC{$\conone;\contwo_{2}\vdash \termtwo:\typthree$}
      \AxiomC{$\conone,\contwo_{2}\aleq\masp$}\RightLabel{\textsc{(T-Arr-E)}} \TrinaryInfC{$\conone,\contwo_{1},\contwo_{2}\vdash ( \abstr{\varone:\masp}{\typone}{\termthree})\termtwo:\typtwo$}
      \DisplayProof 
      $$
      where $\typthree \aleq \typone$.
      We have that $( \abstr{\varone:\masp}{\typone}{\termthree})\termtwo $ rewrites in $\subst{\termthree}{\varone}{\termtwo} $. 
      We behave as in the previous point, by applying Lemma \ref{th:modalsubstitution}, and we are done. 
      %Notice that the lemma \ref{th:modalsubstitution} says that if $\typone \equiv \N$ 
      %then we can make the substitution only if $\termtwo$ is in normal form. Indeed, is exactly what rules in figure \ref{fig:redaxioms} say.
    \end{varitemize}
  \item 
    Another interesting case of application is where we perform a so-called ``swap''.
    $( \abstr{\varone:\aspone}{\typone}{\termfour})\termtwo\termthree $ rewrites in $( \abstr{\varone:\aspone}{\typone}{\termfour\termthree})\termtwo $.
    From a typing derivation with
    conclusion
    $\conone,\contwo_{1},\contwo_{2},\contwo_3\vdash ( \abstr{\varone:\aspone}{\typone}{\termfour})\termtwo\termthree:\typthree$
    we can easily extract derivations for the following:
    \begin{align*}
      \conone;\contwo_{1}, \varone:\aspone\typone&\vdash {\termfour} : \asptwo\typfour\red\typfive\\
      \conone;\contwo_{3}&\vdash \termthree:\typtwo\\
      \conone;\contwo_{2}&\vdash \termtwo:\typsix
    \end{align*}
    where $\typtwo\aleq\typfour$, $\typfive\aleq\typthree$ and $\typone\aleq\typsix$ and
    $\conone,\contwo_{3}\aleq\asptwo$ and $\conone,\contwo_{2}\aleq\aspone$.
    %Indeed:
    %$$
    %\AxiomC{$\conone,\contwo_{2}\aleq\aspone'$}\noLine
    %\UnaryInfC{$\conone;\contwo_{2}\vdash \termtwo:\typone'$}\noLine
    %\UnaryInfC{$\dertwo:\conone;\contwo_{1}\vdash \abstr{\varone:\aspone}{\typone}{\termfour} :\aspone'\typone' \red \asptwo\typtwo \red \typthree$}
    %\UnaryInfC{$\conone;\contwo_1,\contwo_2\vdash (\abstr{\varone:\aspone}{\typone}{\termfour})\termtwo :  \asptwo\typtwo \red \typthree $}
    %\AxiomC{$\conone,\contwo_{3}\aleq\asptwo$}\noLine
    %\UnaryInfC{$\derthree:\conone;\contwo_{3}\vdash \termthree:\typtwo$}
    %\RightLabel{\textsc{(T-Arr-E)}} \BinaryInfC{$\conone,\contwo_{1},\contwo_{2},\contwo_3\vdash ( \abstr{\varone:\aspone}{\typone}{\termfour})\termtwo\termthree:\typthree$}
    %\DisplayProof 
    $$
    %where $\aspone'\leq \aspone$ and $\typone'\leq\typone$. 
    %Without losing generality, applications of rule \textsc{(T-Sub)} 
    %are made in the derivation $\dertwo$.
    %By looking at sequent $\conone;\contwo_{1}\vdash \abstr{\varone:\aspone}{\typone}{\termfour} :\aspone'\typone' \red \asptwo\typtwo \red \typthree$ we can easily 
    %deduce that it derives from $\conone;\contwo_{1}, \varone:\aspone\typone  \vdash {\termfour} : \asptwo\typtwo \red \typthree$ with rule \textsc{(T-Arr-I)} plus 
    %eventual application of rule \textsc{(T-Sub)}.
    %So, we are able to build up the following derivation:
    %$$
    \AxiomC{$\conone,\contwo_{3}\aleq\asptwo$}\noLine
    \UnaryInfC{$\conone;\contwo_{3}\vdash \termthree:\typtwo$}\noLine
    \UnaryInfC{$\conone;\contwo_{1}, \varone:\aspone\typone  \vdash {\termfour} : \asptwo\typfour \red \typfive$}
    \RightLabel{\textsc{(T-Arr-E)}}\UnaryInfC{$\conone;\contwo_{1},\contwo_{3},  \varone:\aspone\typone  \vdash {\termfour}\termthree : \typfive  $}
    \RightLabel{\textsc{(T-Arr-I)}}\UnaryInfC{$\conone;\contwo_{1},\contwo_{3},  \vdash \abstr{\varone:\aspone}{\typone}{{\termfour}\termthree} : \aspone\typone\red\typfive$}
    \RightLabel{\textsc{(T-Sub)}}
    \UnaryInfC{$\conone;\contwo_{1},\contwo_{3},  \vdash \abstr{\varone:\aspone}{\typone}{{\termfour}\termthree} : \aspone\typsix\red\typthree$}
    \AxiomC{$\conone,\contwo_{2}\aleq\aspone$}\noLine
    \UnaryInfC{$\conone;\contwo_{2}\vdash \termtwo:\typsix$}
    \RightLabel{\textsc{(T-Arr-E)}} \BinaryInfC{$\conone,\contwo_{1},\contwo_{2},\contwo_3\vdash ( \abstr{\varone:\aspone}{\typone}{\termfour\termthree})\termtwo:\typthree$}
    \DisplayProof
    $$
  \end{varitemize}
\item All the other cases can be brought back to cases that we have considered.
\end{varitemize}
This concludes the proof.
\end{proof}

\begin{example} In the following example we consider an example similar to one by Hofmann~\cite{Hofmann1998}.
Let $\funone$ be a variable of type $\nmasp\N\rightarrow\N$. The function 
$\funthree \equiv \abstr{\funtwo:\nmasp}{(\nmasp\N\rightarrow\N)}{
\abstr{\varone:\nmasp}{\N}
(\funone(\funtwo\varone))}$ gets type $\nmasp(\nmasp\N\rightarrow\N)\rightarrow\nmasp\N \rightarrow\N $.
Thus the function $(\abstr{\funfour:\nmasp}{(\nmasp\N\rightarrow\N)}{\funthree\funfour})\S1 $ 
takes type $\nmasp\N\rightarrow\N $. Let's now execute $\beta$ reductions, by passing the argument 
$\S1$ to the function $\funthree$ and we obtain the following term:
${\abstr{\varone:\nmasp}{\N}(\funone(\S1\varone))}$
It's easy to check that the type has not changed.
\end{example}

%%%%%%%%%%%%%%%%%%%%%%%%
\subsection{Confluence}\label{sect:confluence}
%%%%%%%%%%%%%%%%%%%%%%%%
In view of the peculiar notion of reduction given in Definition~\ref{def:reduction}, let us go back
to the counterexample to confluence given in the Introduction. The term
$\termone=(\abstr{\varone:\nmasp}{\N}{(\termone_{\oplus}\varone\varone)})\rand$ cannot be reduced
to $\termone_{\oplus}\,\rand\,\rand$ anymore, because only numerals can be passed to
functions as arguments of base types. The only possibility is reducing $\termone$ to
the sequence
$$
(\abstr{\varone:\nmasp}{\N}{(\termone_{\oplus}\varone\varone)})0,(\abstr{\varone:\nmasp}{\N}{(\termone_{\oplus}\varone\varone)})1
$$
Both terms in the sequence can be further reduced to $0$. In other words, $\termone\pmred\{0^1\}$.

More generally, the phenomenon of non-convergence of final distributions can no longer happen
in \RSLR. Technically, this is due to the impossibility of duplicating terms that can 
evolve in a probabilistically nontrivial way, i.e., terms containing occurrences of $\rand$.
In the above example and in similar cases we have to evaluate the argument before firing the $\beta$-redex --- it 
is therefore not possible to obtain two different distributions. \RSLR\ can also handle correctly the case 
where $\rand$ is within an argument $\termone$ of higher-order type: terms of higher-order type cannot be 
duplicated and so neither any occurrences of $\rand$ inside them.

Confluence of our system is proved by first show a kind of confluence for the single step arrow; 
then we show the confluence for the multistep arrow. This allows us to certify the confluence of 
our system. 

\begin{lemma}\label{lemma:conflsinglesingle}
Let $\termone$ be a well typed term in \RSLR; if $\termone  \red \termsix$ and $\termone\red \termseven$ 
($\termsix$ and $\termseven$ distinct) then exactly one of the following holds:
\begin{varitemize}
\item $\exists \termeight$ s.t. $\termsix\red \termeight$ and $\termseven\red \termeight$
\item $\termsix\red \termseven$
\item $\termseven\red \termsix$
\end{varitemize}
\end{lemma}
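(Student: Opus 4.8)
The plan is to establish this strong form of local confluence by induction on the structure of the well typed term $\termone$, with a case analysis on the relative positions of the two redex occurrences contracted in $\termone\red\termsix$ and in $\termone\red\termseven$. Subterms of a well typed term are again well typed (under a suitably extended context), so the induction hypothesis is available at every immediate subterm of $\termone$. Observe first that two reductions contracting the \emph{same} redex occurrence produce the same right-hand side: every axiom in Figure~\ref{fig:redaxioms} is deterministic, and this holds also for $\rand$, whose axiom contracts the constant $\rand$ and yields the fixed sequence $0,1$. Since $\termsix$ and $\termseven$ are distinct, we may therefore assume the two contracted occurrences are distinct positions.

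The easy cases are the ``disjoint'' ones. If the two occurrences lie inside two \emph{different} immediate subterms of $\termone$, the steps act on disjoint parts of $\termone$ and commute: contracting the second redex in $\termsix$ and the first in $\termseven$ reaches a common $\termeight$ in one step on each side, which is the first alternative. Here one uses that reduction may be applied in any context except inside the second and third argument of a $\mathtt{recursion}$, so that whenever a subterm contains a redex at all it sits in a position where contraction is allowed, and the two steps can indeed be replayed after the contractum has been plugged back. If instead both occurrences lie inside the \emph{same} immediate subterm $\termone_0$, the induction hypothesis applied to $\termone_0$ yields the trichotomy for the reducts of $\termone_0$, and it propagates through the surrounding context for the same reason.

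It remains to treat the case where the first redex is the whole of $\termone$. If the second one is also the whole of $\termone$, the two steps coincide: no term is, at its root, simultaneously a $\beta$-redex, a swap-redex, a $\mathtt{case}$-redex, a $\mathtt{recursion}$-redex, an application of $\S0$, $\S1$ or $\P$, or $\rand$, these being pairwise incompatible term shapes; this contradicts distinctness. So the second redex occurs strictly inside $\termone$, and we inspect the root axiom. For a $\mathtt{recursion}$-redex the first argument is a numeral and the two recurring subterms lie in blocked positions, so no inner redex exists and the subcase is vacuous; the same holds for an application of $\S0$, $\S1$ or $\P$, whose proper subterms are a constant and a numeral. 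For a $\mathtt{case}$-redex, a step inside a discarded branch deletes that redex and hence $\termseven\red\termsix$, a step inside the selected branch produces a common reduct, and the small overlap between a $\mathtt{case}$-step and the arithmetic rewrite of a scrutinee of the form $\S0\,\numeone$ or $\S1\,\numeone$ again gives $\termseven\red\termsix$. For a $\beta$-redex $(\abstr{\varone:\aspone}{\N}{\termtwo})\numeone$ the argument is a numeral and carries no redex, so the inner redex is inside $\termtwo$ and the two steps commute; for $(\abstr{\varone:\aspone}{\htypone}{\termtwo})\termthree$, well typedness forces $\varone$ to occur at most once in $\termtwo$, so contracting the redex duplicates neither $\termthree$ nor any redex inside it, and one checks in each subcase that $\termsix$ and $\termseven$ either have a common one-step reduct or are directly related by $\red$.

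The one genuinely delicate case --- and the reason the statement is a trichotomy rather than the plain diamond property --- is the overlap, at the root of $\termone$, between the \emph{swap} axiom and the $\beta$-redex it contains: $\termone=(\abstr{\varone:\aspone}{\typone}{\termtwo})\termthree\,\termthree'$ where $(\abstr{\varone:\aspone}{\typone}{\termtwo})\termthree$ is itself a $\beta$-redex. The swap step gives $\termsix=(\abstr{\varone:\aspone}{\typone}{\termtwo\termthree'})\termthree$ and the inner $\beta$-step gives $\termseven=\subst{\termtwo}{\varone}{\termthree}\,\termthree'$. But $\termsix$ is again a $\beta$-redex of the same kind and, since $\varone$ does not occur in $\termthree'$ so that $\subst{\termthree'}{\varone}{\termthree}=\termthree'$, firing it produces $\subst{(\termtwo\termthree')}{\varone}{\termthree}=\subst{\termtwo}{\varone}{\termthree}\,\termthree'=\termseven$; hence $\termsix\red\termseven$, which is the second alternative. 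When the root swap-redex instead overlaps only with a redex strictly inside $\termtwo$, $\termthree$ or $\termthree'$, the two steps commute and we are in the first alternative. Throughout, the joins remain one step wide precisely because higher-order arguments are never duplicated, and the only substitution facts used are the standard commutations already underlying the substitution lemmas.
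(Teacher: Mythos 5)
Your proof is correct and follows essentially the same route as the paper's: induction on the (typing derivation of the) well typed term with a case analysis on where the two contracted redexes sit, joining disjoint or nested steps via the induction hypothesis and resolving the root-level overlaps directly. If anything, you are more explicit than the paper on the genuine critical pairs --- the swap rule against the inner $\beta$-redex (resolved as $\termsix\red\termseven$) and the case-selection step against a step in a discarded branch or in the scrutinee (resolved as $\termseven\red\termsix$) --- which the paper's proof only asserts to be ``easy'' to join.
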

\begin{proof}
 By induction on the structure of the typing derivation for the term $t$.
\begin{varitemize}
 \item If $\termone$ is a constant or a variable, the theorem is easily proved. The premise is always false, so the theorem is always valid. Remember that $\rand \red 0,1$.
\item If last rule was \textsc{T-Sub} or \textsc{T-Arr-I}, by applying induction hypothesis the case is easily proved.
\item If last rule was \textsc{T-Case}. Our derivation will have the following shape:
 $$
  \AxiomC{$\conone;\contwo_1\vdash \termtwo:N$}\noLine\UnaryInfC{$\conone;\contwo_2\vdash \termthree:\typone$}
  \AxiomC{$\conone;\contwo_3\vdash \termfour:{\typone}$}\noLine\UnaryInfC{$\conone;\contwo_4\vdash \termfive:{\typone} $}
  \AxiomC{$\typone$ is $\Box$-free}
  \RightLabel{\textsc{(T-Case)}}
  \TrinaryInfC{$\conone;\contwo_1,\contwo_2,\contwo_3,\contwo_4\vdash \casezeo{\typone}{\termtwo}{\termthree}{\termfour}{\termfive}:\typone$}\DisplayProof
  $$

We could have reduced one of the following $\termtwo,\termthree,\termfour,\termfive$ terms or a combination of them. In the first case we prove by applying induction hypothesis and in the latter case we can easily find $\termeight$ s.t.  $\termsix\red \termeight$ and $\termseven\red \termeight$: is the term where we apply both reductions.
Last case is where from one part we reduce the case, selecting a branch and from the other part we reduce one of the subterms. As can be easily seen, it is trivial to prove this case; we can easily find a common confluent term.

\item If last rule was \textsc{T-Rec}, our derivation will have the following shape:
$$
    \AxiomC{$\conone_2;\contwo_4\vdash\termfour:\N$}
    \noLine
    \UnaryInfC{$\conone_2,\conone_3;\contwo_5\vdash \termtwo:\typtwo$}
    \noLine
    \UnaryInfC{$\conone_2,\conone_3;\vdash \termthree: \parr{\masp}{\N}{ \parr{\nmasp}{\typtwo}{\typtwo} }$}
    \AxiomC{$\conone_2;\contwo_4 \aleq \masp$}
    \noLine
    \UnaryInfC{\mbox{$\typtwo$ is $\masp$-free}}
    \RightLabel{\textsc{(T-Rec)}}
    \BinaryInfC{$\conone_2,\conone_3;\contwo_4,\contwo_5\vdash \saferec{\typtwo}{\termfour}{\termtwo}{\termthree}:\typtwo$}
    \DisplayProof
    $$

By definition, we can have reduction only on $\termfour$ or, if $\termfour$ is a value, we can reduce the recursion by unrolling it. In both cases the proof is trivial.

\item If last rule was \textsc{T-Arr-E}. Our term could have different shapes but the only interesting cases are the following ones. The other cases can be easily brought back to cases that we have considered.

\begin{varitemize}
\item Our derivation will end in the following way:
      $$
	\AxiomC{$\conone;\contwo_{1}\vdash \abstr{\varone:\aspone}{\typone}{\termthree} :\parr{\asptwo}{\typthree}{\typtwo}$} 
	\AxiomC{$\conone;\contwo_{2}\vdash \termtwo:\typthree$}
	\AxiomC{$\conone,\contwo_{2}\aleq\asptwo$}\RightLabel{\textsc{(T-Arr-E)}} \TrinaryInfC{$\conone,\contwo_{1},\contwo_{2}\vdash ( \abstr{\varone:\aspone}{\typone}{\termthree})\termtwo:\typtwo$}
	\DisplayProof 
      $$
      where $\typthree \aleq \typone$ and $\asptwo <: \aspone$.
      We have that $( \abstr{\varone:\aspone}{\typone}{\termthree})\termtwo $ rewrites in $\subst{\termthree}{\varone}{\termtwo} $; if $\typone \equiv \N$ then $\termtwo$ is a value, otherwise we are able to make the substitution whenever we want.
      If we reduce only on $\termtwo$ or only on $\termthree$ we can easily prove our thesis by applying induction hypothesis. 

      The interesting cases are when we perform the substitution on one hand and on the other hand we make a reduction step on one of the two possible terms $\termtwo$ or $\termthree$.

      Suppose $( \abstr{\varone:\aspone}{\typone}{\termthree})\termtwo \red \subst{\termthree}{\varone}{\termtwo} $ and $( \abstr{\varone:\aspone}{\typone}{\termthree})\termtwo \red ( \abstr{\varone:\aspone}{\typone}{\termthree})\termtwo'$, where $\termtwo\red \termtwo'$.
      Let $\termeight $ be $ \subst{\termthree}{\varone}{\termtwo'}$. We have that $( \abstr{\varone:\aspone}{\typone}{\termthree})\termtwo' \red \termeight $ and $\subst{\termthree}{\varone}{\termtwo} \red \termeight$. 
      Indeed if $\typone$ is $\N$, $\termtwo$ is a value (we are making substitutions) but no reduction could be made on $\termtwo$, otherwise there is at least one occurrence of $\termtwo$ in $\subst{\termthree}{\varone}{\termtwo}$ and by executing one reduction step we are able to have $\termeight$.

      Suppose $( \abstr{\varone:\aspone}{\typone}{\termthree})\termtwo \red \subst{\termthree}{\varone}{\termtwo} $ and $( \abstr{\varone:\aspone}{\typone}{\termthree})\termtwo \red ( \abstr{\varone:\aspone}{\typone}{\termthree'})\termtwo$, where $\termthree\red \termthree'$.
      As we have shown in the previous case, we are able to find a confluent term for both terms.

 \item The other interesting case is when we perform the so called ``swap''. $( \abstr{\varone:\aspone}{\typone}{\termfour})\termtwo\termthree $ rewrites in $( \abstr{\varone:\aspone}{\typone}{\termfour\termthree})\termtwo $. If the reduction steps are made only on $\termfour$ or $\termtwo$ or $\termthree$ by applying induction hypothesis we have the thesis. In all the other cases, where we perform one step on subterms and we perform, on the other hand, the swap, it's easy to find a confluent term $\termeight$.
      
\end{varitemize}

\end{varitemize}

\end{proof}

\begin{lemma}\label{lemma:conflmultisingle}
Let $\termone$ be a well typed term in \RSLR; if 
 $\termone  \red \termsix_1,\termsix_2$ and $\termone\red \termseven$ then one of the following sentence is valid:
\begin{varitemize}
\item $\exists \termeight_1,\termeight_2$ s.t. $\termsix_1\red \termeight_1$ and $\termsix_2\red\termeight_2$ and $\termseven\red \termeight_1,\termeight_2$
\item $\forall i.\termsix_i\red \termseven$
\item $\termseven\red \termeight_1,\termeight_2$
\end{varitemize}
\end{lemma}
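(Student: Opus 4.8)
The plan is to follow the same overall strategy as in Lemma~\ref{lemma:conflsinglesingle}, using the crucial observation that a one-step reduction whose right-hand side is a \emph{two}-element sequence can only be an instance of the axiom $\rand\red 0,1$ (every other axiom in Figure~\ref{fig:redaxioms} has a singleton right-hand side). Hence I would first write $\termone\red\termsix_1,\termsix_2$ as the contraction of a designated occurrence of $\rand$ inside $\termone$ --- which, by Definition~\ref{def:reduction}, does not lie in the second or third argument of any $\mathtt{recursion}$ --- with $\termsix_1$ and $\termsix_2$ obtained from $\termone$ by replacing that occurrence with $0$ and $1$ respectively. The reduction $\termone\red\termseven$ is then necessarily the contraction of a \emph{deterministic} redex $R$ (a $\mathtt{case}$, a $\mathtt{recursion}$, an $\S{0}/\S{1}/\P$ step, a $\beta$-step, or a ``swap''), and I would do a case analysis on the relative positions of the designated $\rand$ and of $R$; one can equivalently phrase this as the induction on the typing derivation of $\termone$ used in Lemma~\ref{lemma:conflsinglesingle}, but I will describe the positional version.

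The positions of $\rand$ and of $R$ cannot coincide (that would force $R=\rand$, contradicting that $\termone\red\termseven$ is deterministic), and $R$ cannot lie strictly inside the $\rand$ occurrence, a constant having no proper subterm. If the two redexes are disjoint they commute: contracting $R$ inside each of $\termsix_1,\termsix_2$ yields terms $\termeight_1,\termeight_2$, and contracting the designated $\rand$ inside $\termseven$ yields the same $\termeight_1,\termeight_2$ --- the first alternative. It remains to treat the case in which the $\rand$ sits strictly inside $R$. If $R$ is a $\mathtt{recursion}$ or an $\S{0}/\S{1}/\P$ step, the positions strictly inside $R$ carry only numerals or subterms frozen against reduction (the second and third arguments of a $\mathtt{recursion}$), so no contractible $\rand$ lives there and the case does not arise. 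If $R$ is a $\mathtt{case}$, the $\rand$ is inside one of the four subterms: inside the branch selected by the scrutinee it is preserved exactly once by the contraction, $\termseven$ still contains it, and contracting it there gives $\termeight_1,\termeight_2$ with $\termsix_i\red\termeight_i$ (first alternative); inside a discarded branch it is deleted, and in fact $\termsix_1\red\termseven$ and $\termsix_2\red\termseven$ by contracting that same $\mathtt{case}$ (second alternative). If $R$ is a $\beta$-contraction $(\abstr{\varone:\aspone}{\typone}{\termfour})\termtwo$, the $\rand$ is in $\termfour$ or, when $\typone\neq\N$, in $\termtwo$ (if $\typone=\N$ then $\termtwo$ is a numeral and contains no $\rand$); since the type system forces a higher-order variable to occur at most once in $\termfour$, the contraction keeps the $\rand$ exactly once --- first alternative, the replacements acting at disjoint positions hence commuting --- unless $\varone$ does not occur in $\termfour$ and the $\rand$ was in $\termtwo$, in which case it is erased and $\termsix_i\red\termseven$ (second alternative). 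The ``swap'' $(\abstr{\varone:\aspone}{\typone}{\termfour})\termtwo\termthree\red(\abstr{\varone:\aspone}{\typone}{\termfour\termthree})\termtwo$ leaves $\termfour,\termtwo,\termthree$ untouched, so the $\rand$ survives once --- first alternative. The third alternative of the statement records the mirror situation, analogous to the clause $\termseven\red\termsix$ of Lemma~\ref{lemma:conflsinglesingle}.

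The main obstacle --- and the real content of the lemma --- is to rule out that a single deterministic step \emph{duplicates} the designated $\rand$, which would destroy the two-branch shape and is exactly what goes wrong in full \SLR. Here I would rely on the three design choices of \RSLR: higher-order arguments are affine, so a higher-order subterm, and any $\rand$ buried inside it, is used at most once; base-type arguments consumed by a $\beta$-step are already numerals, hence free of $\rand$; and the second and third arguments of a $\mathtt{recursion}$ are frozen, so an unrolling step, although it copies the step function, never copies a \emph{contractible} $\rand$. Once duplication is excluded, the designated $\rand$ is always either preserved once or erased by a deterministic step, and the remaining work is the routine bookkeeping of commuting a substitution with a $\rand\mapsto 0/1$ replacement at disjoint positions and of checking that a discarded-branch $\rand$ makes both $\termsix_i$ collapse onto $\termseven$. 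A last, entirely routine point: when the designated $\rand$ occupies the scrutinee of a $\mathtt{case}$ or an argument position of an application, contracting it \emph{creates} a new deterministic redex in $\termsix_1,\termsix_2$; it then suffices to contract the original redex $R$ in each $\termsix_i$ to reach the common reduct $\termeight_i$, keeping us within the first alternative.
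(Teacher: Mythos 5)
Your proof is correct, and it takes a somewhat different route from the paper's. The paper argues by induction on the typing derivation of $\termone$, mirroring Lemma~\ref{lemma:conflsinglesingle}: it inspects the last typing rule (\textsc{T-Case}, \textsc{T-Rec}, \textsc{T-Arr-E}, \ldots) and, in the application case, redoes the substitution/swap bookkeeping locally, invoking the induction hypothesis to push the analysis through contexts. You instead make explicit an observation the paper leaves implicit --- that a one-step reduction with a two-element right-hand side can only be a contraction of $\rand$, since every other axiom in Figure~\ref{fig:redaxioms} is deterministic --- and then perform a single global case analysis on the relative positions of the contracted $\rand$ and the deterministic redex $R$ (equal/nested positions excluded, disjoint redexes commute, $\rand$ inside $R$ handled per redex kind). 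The substantive ingredients are the same in both proofs: higher-order arguments are affine so substitution places them at most once, base-type $\beta$-arguments are already numerals and hence $\rand$-free, and the second and third arguments of a $\mathtt{recursion}$ are frozen, so no contractible $\rand$ can be duplicated by unrolling. What your version buys is a cleaner separation of the probabilistic ingredient (the unique $\rand$ source of branching) from the commutation argument, and it avoids the induction altogether; the paper's version stays uniform with the neighbouring confluence lemmas and is the form reused in Lemma~\ref{lemma:conflmultimulti}. A minor remark: your analysis in fact always lands in the first or second alternative, never needing the third --- which is harmless, since the statement is a disjunction (and the paper's own third alternative is stated rather loosely anyway).
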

\begin{proof} By induction on the structure of typing derivation for the term $\termone$.
\begin{varitemize}
  \item $\termone$ cannot be a constant or a variable. Indeed if $\termone$ is $\rand$, rand reduces in $0,1$ and this differs from our hypothesis.
\item If last rule was \textsc{T-Sub} or \textsc{T-Arr-I}, the thesis is easily proved by applying induction hypothesis.
\item If last rule was \textsc{T-Case}, our derivation will have the following shape:
 $$
  \AxiomC{$\conone;\contwo_1\vdash \termtwo:N$}\noLine\UnaryInfC{$\conone;\contwo_2\vdash \termthree:\typone$}
  \AxiomC{$\conone;\contwo_3\vdash \termfour:{\typone}$}\noLine\UnaryInfC{$\conone;\contwo_4\vdash \termfive:{\typone} $}
  \AxiomC{$\typone$ is $\Box$-free}
  \RightLabel{\textsc{(T-Case)}}
  \TrinaryInfC{$\conone;\contwo_1,\contwo_2,\contwo_3,\contwo_4\vdash \casezeo{\typone}{\termtwo}{\termthree}{\termfour}{\termfive}:\typone$}\DisplayProof
  $$

If we perform the two reductions on the single subterms we could be in the following case (all the other cases are similar). for example, if $\termone$ rewrites in  $ \casezeo{\typone}{\termtwo'}{\termthree}{\termfour}{\termfive}$ and $\casezeo{\typone}{\termtwo''}{\termthree}{\termfour}{\termfive}$ and also $\termone\red \casezeo{\typone}{\termtwo}{\termthree}{\termfour}{\termfive'}$.

It is easy to check that if the two confluent terms are $\termeight_1 = \casezeo{\typone}{\termtwo'}{\termthree}{\termfour}{\termfive'}$ and $\termeight_2 = \casezeo{\typone}{\termtwo''}{\termthree}{\termfour}{\termfive'}$ the thesis is valid.

Another possible case is where on one hand we perform a reduction by selecting a branch and on the other case we make a reduction on one branch. As example, $\termone \red \termfour$ and $\termthree \red \termthree_1,\termthree_2$. This case is trivial.
\item If last rule was \textsc{T-Rec}, our derivation will have the following shape:
$$
    \AxiomC{$\conone_2;\contwo_4\vdash\termfour:\N$}
    \noLine
    \UnaryInfC{$\conone_2,\conone_3;\contwo_5\vdash \termtwo:\typtwo$}
    \noLine
    \UnaryInfC{$\conone_2,\conone_3;\vdash \termthree: \parr{\masp}{\N}{ \parr{\nmasp}{\typtwo}{\typtwo} }$}
    \AxiomC{$\conone_2;\contwo_4 \aleq \masp$}
    \noLine
    \UnaryInfC{\mbox{$\typtwo$ is $\masp$-free}}
    \RightLabel{\textsc{(T-Rec)}}
    \BinaryInfC{$\conone_2,\conone_3;\contwo_4,\contwo_5\vdash \saferec{\typtwo}{\termfour}{\termtwo}{\termthree}:\typtwo$}
    \DisplayProof
    $$
By definition, we can have reduction only on $\termfour$. By applying induction hypothesis the thesis is proved.

\item If last rule was \textsc{T-Arr-E}. Our term could have different shapes but the only interesting cases are the following ones. The other cases can be easily brought back to cases that we have considered.

\begin{varitemize}
\item Our derivation will end in the following way:
      $$
	\AxiomC{$\conone;\contwo_{1}\vdash \abstr{\varone:\aspone}{\typone}{\termthree} :\parr{\asptwo}{\typthree}{\typtwo}$} 
	\AxiomC{$\conone;\contwo_{2}\vdash \termtwo:\typthree$}
	\AxiomC{$\conone,\contwo_{2}\aleq\asptwo$}\RightLabel{\textsc{(T-Arr-E)}} \TrinaryInfC{$\conone,\contwo_{1},\contwo_{2}\vdash ( \abstr{\varone:\aspone}{\typone}{\termthree})\termtwo:\typtwo$}
	\DisplayProof 
      $$
      where $\typthree \aleq \typone$ and $\asptwo <: \aspone$.
      We have that $( \abstr{\varone:\aspone}{\typone}{\termthree})\termtwo $ rewrites in $\subst{\termthree}{\varone}{\termtwo} $; if $\typone \equiv \N$ then $\termtwo$ is a value, otherwise we are able to make the substitution whenever we want.
      If we reduce only on $\termtwo$ or only on $\termthree$ we can easily prove our thesis by applying induction hypothesis. 

      The interesting cases are when we perform the substitution on one hand and on the other hand we make a reduction step on one of the two possible terms $\termtwo$ or $\termthree$.

      Suppose $( \abstr{\varone:\aspone}{\typone}{\termthree})\termtwo \red \subst{\termthree}{\varone}{\termtwo} $ and $( \abstr{\varone:\aspone}{\typone}{\termthree})\termtwo \red ( \abstr{\varone:\aspone}{\typone}{\termthree})\termtwo',( \abstr{\varone:\aspone}{\typone}{\termthree})\termtwo'' $, where $\termtwo\red \termtwo',\termtwo''$.
      Let $\termeight_1 $ be $ \subst{\termthree}{\varone}{\termtwo'}$ and $\termeight_2 $ be $ \subst{\termthree}{\varone}{\termtwo''}$.

 We have that $( \abstr{\varone:\aspone}{\typone}{\termthree})\termtwo' \red \termeight_1 $,
	      $( \abstr{\varone:\aspone}{\typone}{\termthree})\termtwo'' \red \termeight_2 $
 and $\subst{\termthree}{\varone}{\termtwo} \red \termeight_1,\termeight_2$. Indeed if $\typone$ is $\N$ then $\termtwo$ is a value (because we are making substitutions) and we cannot have the reductions on $\termtwo$, otherwise there is at least one occurrence of $\termtwo$ in $\subst{\termthree}{\varone}{\termtwo}$ and by performing one reduction step on the subterm $\termtwo$ we are able to have $\termeight_1,\termeight_2$.

      Suppose $( \abstr{\varone:\aspone}{\typone}{\termthree})\termtwo \red \subst{\termthree}{\varone}{\termtwo} $ and $( \abstr{\varone:\aspone}{\typone}{\termthree})\termtwo \red ( \abstr{\varone:\aspone}{\typone}{\termthree'})\termtwo,( \abstr{\varone:\aspone}{\typone}{\termthree''})\termtwo   $, where $\termthree\red \termthree',\termthree''$.
      As we have shown in the previous case, we are able to find a confluent term for both terms.

      \item The other interesting case is when we perform the so called ``swap''. $( \abstr{\varone:\aspone}{\typone}{\termfour})\termtwo\termthree $ rewrites in $( \abstr{\varone:\aspone}{\typone}{\termfour\termthree})\termtwo $. If the reduction steps are made only on $\termfour$ or $\termtwo$ or $\termthree$ by applying induction hypothesis we have the thesis. In all the other cases, where we perform one step on subterms and we perform, on the other hand, the swap, it's easy to find a confluent term $\termeight$.
 \end{varitemize}
 \end{varitemize}
 
\end{proof}

\begin{lemma}\label{lemma:conflmultimulti}
Let $\termone$ be a well typed term in \RSLR; if 
 $\termone  \red \termsix_1,\termsix_2$ and $\termone\red \termseven_1,\termseven_2$ ($\termsix_1,\termsix_2$ and $\termseven_1,\termseven_2$ different) then
$\exists \termeight_1,\termeight_2,\termeight_3,\termeight_4$ s.t. 
$\termsix_1\red \termeight_1,\termeight_2$ and $\termsix_2\red\termeight_3,\termeight_4$ and
$\exists i. \termseven_i\red \termeight_1,\termeight_3$ and $\termseven_{1-i}\red\termeight_2,\termeight_4$.

\end{lemma}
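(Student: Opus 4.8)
The plan is to exploit the fact that the only rule in Figure~\ref{fig:redaxioms} which produces a sequence of length two is $\rand\red 0,1$. Hence both hypotheses $\termone\red\termsix_1,\termsix_2$ and $\termone\red\termseven_1,\termseven_2$ must be obtained by firing an occurrence of the constant $\rand$ located at a \emph{reducible} position of $\termone$ (one not lying in the second or third argument of a $\mathtt{recursion}$), the two results of each reduction being $\termone$ with that occurrence replaced by $0$ and by $1$, respectively. Writing $\termone=C[\rand]$ and $\termone=D[\rand]$ for the one-hole contexts isolating the two fired occurrences, we have $\termsix_1=C[0]$, $\termsix_2=C[1]$, $\termseven_1=D[0]$, $\termseven_2=D[1]$. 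If the two occurrences coincided, the two reductions would yield the same ordered pair of terms, contradicting the assumption that $\termsix_1,\termsix_2$ and $\termseven_1,\termseven_2$ differ; so the occurrences are distinct, and since $\rand$ is a constant with no proper subterm, neither occurrence lies inside the other. Their positions in $\termone$ are therefore disjoint, and there is a two-hole context $E[\cdot,\cdot]$ with $\termone=E[\rand,\rand]$ filling exactly those two positions, so that $C[\cdot]=E[\cdot,\rand]$ and $D[\cdot]=E[\rand,\cdot]$.

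Now I would simply set $\termeight_1:=E[0,0]$, $\termeight_2:=E[0,1]$, $\termeight_3:=E[1,0]$ and $\termeight_4:=E[1,1]$, and observe that firing the remaining $\rand$ in each of the four ``half-substituted'' terms gives $\termsix_1=E[0,\rand]\red\termeight_1,\termeight_2$, $\termsix_2=E[1,\rand]\red\termeight_3,\termeight_4$, $\termseven_1=E[\rand,0]\red\termeight_1,\termeight_3$ and $\termseven_2=E[\rand,1]\red\termeight_2,\termeight_4$. This is precisely the statement of the lemma, with $i=1$ (the index $1-i$ in the statement being read as the remaining index; had the first reduction fired the other of the two occurrences, one would obtain $i=2$ after the evident relabelling).

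The only point requiring a little care --- and essentially the only obstacle --- is checking that each of these four one-step reductions is legal, i.e., that after plugging a numeral into one hole of $E$ the $\rand$ still occupying the other hole is at a reducible position. This holds because replacing an occurrence of $\rand$ by a numeral is a purely local rewrite, performed at a position disjoint from that of the other occurrence; hence the branch of the syntax tree running from the root down to the other occurrence --- and in particular which $\mathtt{recursion}$ nodes it passes through, and in which argument --- is left unchanged, so that occurrence's reducibility is preserved. Since the occurrence fired in $\termone\red\termseven_1,\termseven_2$ is reducible in $\termone$ by hypothesis, it remains reducible in $\termsix_1=E[0,\rand]$ and in $\termsix_2=E[1,\rand]$, and symmetrically for the occurrence fired in the other reduction. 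Well-typedness of $\termone$ plays only a marginal role here (e.g., it guarantees that $\rand$, having type $\N$, is never in head position of an application), and --- unlike in Lemmas~\ref{lemma:conflsinglesingle} and \ref{lemma:conflmultisingle} --- no induction on typing derivations is needed. Collecting the four reductions displayed above finishes the proof.
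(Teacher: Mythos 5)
Your proof is correct, and it takes a genuinely different route from the paper's. You observe that the only axiom in Figure~\ref{fig:redaxioms} producing a two-element sequence is $\rand\red 0,1$, so both hypotheses must come from firing two distinct (hence, since $\rand$ is a leaf, disjoint) occurrences of $\rand$; writing $\termone=E[\rand,\rand]$ you then exhibit the four terms $E[0,0],E[0,1],E[1,0],E[1,1]$ explicitly, and you correctly handle the one delicate point, namely that substituting a numeral at one position does not move the other occurrence into a forbidden (second or third argument of a $\mathtt{recursion}$) position, because the path from the root to that occurrence is untouched. The paper instead proves the lemma by induction on the typing derivation, running through the cases \textsc{(T-Case)}, \textsc{(T-Rec)}, \textsc{(T-Arr-E)}, etc., and in the \textsc{(T-Arr-E)} case it dismisses the $\beta$ and ``swap'' redexes precisely because they yield a single term on the right of $\red$ --- which is the same structural fact you promote to the centrepiece of the whole argument. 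Your version buys a shorter, non-inductive proof that makes the combinatorial content (a permutation of two disjoint $\rand$-redexes) transparent and does not really use typability; the paper's version buys uniformity with Lemmas~\ref{lemma:conflsinglesingle} and~\ref{lemma:conflmultisingle}, where other kinds of redexes genuinely interact and an induction is needed. One small matter of bookkeeping on both sides: the index $1-i$ in the statement should be read as ``the other index'' (the paper's indices range over $\{1,2\}$), which you note and which is consistent with your explicit choice $i=1$ up to relabelling the holes of $E$.
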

\begin{proof} By induction on the structure of typing derivation for term $\termone$.
\begin{varitemize}
\item If $\termone$  is a variable or a constant the thesis is trivial.
\item If last rule was \textsc{(T-Sub)} or \textsc{(T-Arr-I)} the thesis is trivial, by applying induction hypothesis.
\item If last rule was \textsc{(T-Case)} our derivation will have the following shape:
 $$
  \AxiomC{$\conone;\contwo_1\vdash \termtwo:N$}\noLine\UnaryInfC{$\conone;\contwo_2\vdash \termthree:\typone$}
  \AxiomC{$\conone;\contwo_3\vdash \termfour:{\typone}$}\noLine\UnaryInfC{$\conone;\contwo_4\vdash \termfive:{\typone} $}
  \AxiomC{$\typone$ is $\Box$-free}
  \RightLabel{\textsc{(T-Case)}}
  \TrinaryInfC{$\conone;\contwo_1,\contwo_2,\contwo_3,\contwo_4\vdash \casezeo{\typone}{\termtwo}{\termthree}{\termfour}{\termfive}:\typone$}\DisplayProof
  $$

Also this case is easy to prove. Indeed if the reduction steps are made only on single subterms: $\termtwo$ or $\termthree$ or $\termfour$ or $\termfive$ we can prove by using induction hypothesis.
Otherwise we are in the case where one reduction step is made on some subterm and the other is made considering a different subterm. Suppose $\termtwo \red \termtwo',\termtwo''$ and $\termfour\red \termfour',\termfour''$. We could have two possible reduction. One is
$\termone \red \casezeo{\typone}{\termtwo'}{\termthree}{\termfour}{\termfive},\casezeo{\typone}{\termtwo''}{\termthree}{\termfour}{\termfive}$ and the other is
$\termone \red \casezeo{\typone}{\termtwo}{\termthree}{\termfour'}{\termfive},\casezeo{\typone}{\termtwo}{\termthree}{\termfour''}{\termfive}$. 

It is easy to find the common confluent terms: are the ones in which we have performed both
$\termtwo \red \termtwo',\termtwo''$ and $\termfour\red \termfour',\termfour''$.

\item If last rule was \textsc{(T-Rec)} our derivation will have the following shape:
$$
    \AxiomC{$\conone_2;\contwo_4\vdash\termfour:\N$}
    \noLine
    \UnaryInfC{$\conone_2,\conone_3;\contwo_5\vdash \termtwo:\typtwo$}
    \noLine
    \UnaryInfC{$\conone_2,\conone_3;\vdash \termthree: \parr{\masp}{\N}{ \parr{\nmasp}{\typtwo}{\typtwo} }$}
    \AxiomC{$\conone_2;\contwo_4 \aleq \masp$}
    \noLine
    \UnaryInfC{\mbox{$\typtwo$ is $\masp$-free}}
    \RightLabel{\textsc{(T-Rec)}}
    \BinaryInfC{$\conone_2,\conone_3;\contwo_4,\contwo_5\vdash \saferec{\typtwo}{\termfour}{\termtwo}{\termthree}:\typtwo$}
    \DisplayProof
    $$

By definition, we can have reduction only on $\termfour$. By applying induction hypothesis the thesis is proved.

\item If last rule was \textsc{(T-Arr-E)}. Our term could have different shapes but all of them are trivial or can be easily brought back to cases that we have considered. Also the case where we consider the so called ``swap'' and the usual application with a lambda abstraction are not interesting in this lemma. Indeed, we cannot consider the ``swap'' or the substitution case because the reduction relation gives only one term on the right side of the arrow $\red$.

\end{varitemize}
\end{proof}

It is not trivial to prove confluence for $\mred{}$. For this purpose we will prove our statement on a different definition of multistep arrow. This new definition is laxer than the standard one. Being able to prove our theorems for this new definition, allows us to conclude that theorems hold also for $\mred$.

\begin{definition}
In order to prove the following statements we define a new multistep reduction arrow $\mnewred$ as in Figure \ref{fig:multistepnew}.
\begin{figure*}[htbp]
\begin{center}
\fbox{
\begin{minipage}{.95\textwidth}
$$
\AxiomC{$\termone \rightarrow t_1,\ldots,t_n$}\AxiomC{$t_i \mnewred{} \distrone_i$}
\BinaryInfC{$\termone\mnewred{}  \sum_{i=1}^n{\frac{1}{n}\distrone_i} $}\DisplayProof
\hspace{10pt}
\AxiomC{}\UnaryInfC{$\termone\mnewred\distdeg{\termone}$}\DisplayProof
$$
\end{minipage}}
\end{center}
\caption{New Multistep Reduction: Inference Rules}\label{fig:multistepnew}
\end{figure*}
As usual, $\distdeg\termone$ is the distribution that associate to the term $\termone$ probability $1$. With this relation,
distribution are functions $\distrone:\Lambda \red [0,1]$. It is easy to check that if $\termone \mred \distrone$ then 
$\termone \mnewred \distrone$ (but not vice-versa).
\end{definition}

\begin{definition}[Size of distribution derivation]
%SIZE DERIVAZIONE
We define the size of a derivation $\termone\mnewred\distrone$, written $|\termone\mnewred\distrone|$, in a inductive way. If the last rule was the axiom, $|\termone\mnewred\distdeg{\termone}| = 0$; otherwise, $|\termone\mnewred{}  \sum_{i=1}^n{\frac{1}{n}\distrone_i}|=
\max_i \size{t_i \mnewred{} \distrone_i} +1$.
\end{definition}

\begin{lemma}\label{lemma:sizeincreasingderivation}
 If $\termone\mnewred\distrone$, be $\distrone\equiv\{\elemone_1^{\probone_1},\ldots,\elemone_\numeone^{\probone_\numeone}\}$, and if for all $\indexone$ $\elemone_\indexone \mnewred \distrtwo_\indexone$  then $\termone \mnewred \sum_i \probone_i\distrtwo_i$ and $\size{\termone \mnewred \sum_i \probone_i\distrtwo_i } \le \size{ \termone\mnewred\distrone} + \max_\indexone \size{ \elemone_\indexone \mnewred \distrtwo_\indexone }$.
\end{lemma}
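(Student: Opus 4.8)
The plan is to prove this by induction on the size of the derivation $\termone \mnewred \distrone$, following the same structure used to define that size, and using Lemma~\ref{lemma:conflmultimulti} and the single/multi confluence lemmas only implicitly (the heavy lifting is just bookkeeping with distributions). Concretely, I would set up the statement: given $\termone \mnewred \distrone$ with $\distrone = \{\elemone_1^{\probone_1},\ldots,\elemone_\numeone^{\probone_\numeone}\}$ and derivations $\elemone_\indexone \mnewred \distrtwo_\indexone$ for each $\indexone$, I want to build a derivation of $\termone \mnewred \sum_\indexone \probone_\indexone \distrtwo_\indexone$ whose size is bounded by $\size{\termone \mnewred \distrone} + \max_\indexone \size{\elemone_\indexone \mnewred \distrtwo_\indexone}$. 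This is essentially a "cut elimination" / composition statement for the derivation trees of $\mnewred$, and composing derivations is exactly where the size adds up.

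First I would handle the base case, where $\termone \mnewred \distrone$ is the axiom, so $\distrone = \distdeg{\termone}$, i.e. $\numeone = 1$, $\elemone_1 = \termone$, $\probone_1 = 1$. Then $\sum_\indexone \probone_\indexone \distrtwo_\indexone = \distrtwo_1$, and the hypothesis $\elemone_1 \mnewred \distrtwo_1$ is literally a derivation of $\termone \mnewred \distrtwo_1$; its size is $\size{\elemone_1 \mnewred \distrtwo_1} = 0 + \max_\indexone \size{\elemone_\indexone \mnewred \distrtwo_\indexone}$, so the bound holds with equality. For the inductive case, $\termone \mnewred \distrone$ ends with $\termone \red s_1,\ldots,s_m$ and $s_j \mnewred \distrone_j$ with $\distrone = \sum_j \frac{1}{m}\distrone_j$. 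Each $\elemone_\indexone$ in $\distrone$ gets its probability $\probone_\indexone$ as $\sum_j \frac{1}{m}\distrone_j(\elemone_\indexone)$; the key observation is that the multiset of terms appearing in $\distrone$ is the union of those appearing in the $\distrone_j$, so the hypotheses $\elemone_\indexone \mnewred \distrtwo_\indexone$ restrict to give, for each $j$, a family of derivations indexed by the support of $\distrone_j$. Writing $\distrone_j = \{\elemone^{\probone^j_\elemone}\}$, the induction hypothesis applied to $s_j \mnewred \distrone_j$ (which has strictly smaller size) yields $s_j \mnewred \sum_\elemone \probone^j_\elemone \distrtwo_\elemone$ with size at most $\size{s_j \mnewred \distrone_j} + \max_\elemone \size{\elemone \mnewred \distrtwo_\elemone}$. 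Then re-applying the inference rule for $\red$ with these new second premises gives $\termone \mnewred \sum_j \frac{1}{m}\sum_\elemone \probone^j_\elemone \distrtwo_\elemone = \sum_\indexone \probone_\indexone \distrtwo_\indexone$, and its size is $\max_j(\text{size of the }j\text{-th new subderivation}) + 1 \le \max_j \size{s_j \mnewred \distrone_j} + \max_\indexone \size{\elemone_\indexone \mnewred \distrtwo_\indexone} + 1 = \size{\termone \mnewred \distrone} + \max_\indexone \size{\elemone_\indexone \mnewred \distrtwo_\indexone}$, as desired.

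The main obstacle I anticipate is purely notational: keeping straight the double indexing (terms in the overall support $\distrone$ versus terms in each intermediate $\distrone_j$) and verifying that the linear combination $\sum_j \frac{1}{m}\sum_\elemone \probone^j_\elemone \distrtwo_\elemone$ genuinely equals $\sum_\indexone \probone_\indexone \distrtwo_\indexone$ — this is just the distributivity of scalar multiplication and addition over distributions combined with $\probone_\indexone = \sum_j \frac{1}{m}\probone^j_{\elemone_\indexone}$, but it must be stated carefully because the same normal form $\elemone_\indexone$ may be reached through several branches $s_j$. A minor subtlety is that if some $\elemone_\indexone$ has probability $0$ in a given $\distrone_j$ we simply omit it from that inner sum, which does not affect the bound since $\max$ is taken over a superset. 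Once the arithmetic on distributions is set up, the size bound follows immediately from the definition of derivation size as "$\max$ of subderivation sizes plus one". Finally, since $\termone \mred \distrone$ implies $\termone \mnewred \distrone$, the lemma transfers to the original $\mred$ where needed.
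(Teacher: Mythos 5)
Your proposal is correct and follows essentially the same route as the paper: induction on the derivation of $\termone\mnewred\distrone$, with the axiom case handled directly and the inductive case obtained by applying the induction hypothesis to each premise $s_j\mnewred\distrone_j$ and re-applying the last rule, the size bound then falling out of the definition of derivation size. The only difference is presentational: the paper splits the inductive case into $n=1$ and $n=2$ and does explicit bookkeeping on which normal forms belong to which $\distrone_j$, whereas your double-indexed probabilities $\probone^j_\elemone$ handle this uniformly.
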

\begin{proof}
 By induction on the structure of the derivation for $\termone\mnewred\distrone$.

\begin{varitemize}
 \item If last rule was the axiom, then $\termone\mnewred\distdeg{\termone}$. Suppose $\termone\mnewred\distrtwo$. The thesis is easily proved.
 \item The derivation finishes with the following rule:
    $$
      \AxiomC{$\termone \rightarrow t_1,\ldots,t_n$}\AxiomC{$t_i \mnewred{} \distrone_i$}
      \BinaryInfC{$\termone\mnewred{}  \sum_{i=1}^n{\frac{1}{n}\distrone_i} $}\DisplayProof
    $$
  Let's analyse all the possible cases, depending on the value $\numeone$.
  \begin{varitemize}
   \item If $\numeone\equiv 1$. 
    $$
      \AxiomC{$\termone \rightarrow t_1$}\AxiomC{$t_1 \mnewred{} \distrone$}
      \BinaryInfC{$\termone\mnewred{} \distrone $}\DisplayProof
    $$
    By using induction hypothesis on the premise, we prove our thesis.
   \item If $\numeone\equiv 2$.
    $$
      \AxiomC{$\termone \rightarrow t_1,t_2$}\AxiomC{$t_1 \mnewred{} \distrone_1$}\AxiomC{$t_2 \mnewred{} \distrone_2$}
      \TrinaryInfC{$\termone\mnewred{} \frac{1}{2}(\distrone_1+\distrone_2 )$}\DisplayProof
    $$

    Be $\distrone\equiv \{ \elemone_1^{\probone_1},\ldots,\elemone_\numeone^{\probone_\numeone}  \}$ and for all $\indexone$ $\elemone_\indexone \mnewred \distrtwo_\indexone$. By construction, we have some elements that belong to $\distrone_1$, other to $\distrone_2$ and some element that belong to both of them.
    Without loosing generality, let's say that elements $\elemone_1,\ldots,\elemone_\numetwo$ belongs to $\distrone_1$ and elements $\elemone_\numethree,\ldots,\elemone_\numeone$, where $1\le\numethree\le\numetwo\le\numeone$. 
    
    So, we have that $\distrone_1\equiv\{ \elemone_1^{2\probone_1},\ldots,\elemone_{\numethree-1}^{2\probone_{\numethree-1}},\elemone_\numethree^{\probone_\numethree}  ,\ldots,\elemone_\numetwo^{\probone_\numetwo} \}$ and we have that
    $\distrone_2$ is $\{\elemone_\numethree^{\probone_\numethree}  ,\ldots,\elemone_\numetwo^{\probone_\numetwo}, \elemone_{\numetwo+1}^{2\probone_\numetwo},\ldots,\elemone_\numeone^{2\probone_\numeone}\}$.

    By applying induction hypothesis on the two premises we have that $\termone_1\mnewred\distrthree_1$ and $\termone_2\mnewred\distrthree_2$, where $\distrthree_1 \equiv \sum_{i=1}^{\numetwo-1} 2\probone_i\distrtwo_i + \sum_{i=\numetwo}^{\numethree} \probone_i\distrtwo_i$ 
    and $\distrthree_2 \equiv \sum_{i=\numetwo}^{\numethree} \probone_i\distrtwo_i + \sum_{i=\numethree+1}^{\numeone} 2\probone_i\distrtwo_i$
    
    So, we can derive that $\termone \mnewred \frac{1}{2}(\distrthree_1 + \distrthree_2)$ that is our thesis.
    
    Concerning the bound on the derivation, the induction hypothesis applied to the premises gives us $\size{ \termone_1\mnewred\distrthree_1 } \le \size{\termone_1\mnewred\distrone_1} + \max_{0,\ldots,\numetwo}{\size{\elemone_i \mnewred \distrtwo_i} }$ and 
    $\size{ \termone_2\mnewred\distrthree_2 } \le \size{\termone_2\mnewred\distrone_2} + \max_{\numethree,\ldots,\numeone}{\size{\elemone_i \mnewred \distrtwo_i} }$.
    We have:    
      \begin{align*}
      \size{ \termone \mnewred \sum_i \probone_i\distrtwo_i  } &\equiv \max\{\distrthree_1 , \distrthree_2 \} + 1 \\
	      &\le \max\{ \size{\termone_1\mnewred\distrone_1} + \max_{0,\ldots,\numetwo}{\size{\elemone_i \mnewred \distrtwo_i} }, \size{\termone_2\mnewred\distrone_2} + \max_{\numethree,\ldots,\numeone}{\size{\elemone_i \mnewred \distrtwo_i} }    \} +1 \\
	      &\le \max\{ \size{\termone_1\mnewred\distrone_1}, \size{\termone_2\mnewred\distrone_2}\} +1+ \max\{\max_{\numethree,\ldots,\numeone}{\size{\elemone_i \mnewred \distrtwo_i} }   , \max_{0,\ldots,\numetwo}{\size{\elemone_i \mnewred \distrtwo_i} }  \}   \\
	      &\le\size{\termone\mnewred\distrone  } + \max_i{\size{\elemone_i \mnewred \distrtwo_i} }
      \end{align*}
    and the lemma is proved.
  \end{varitemize}

\end{varitemize}

\end{proof}

% After having proved some technical lemmas, we get the following:

\begin{theorem}[Multistep Confluence]\label{theo:confluence-multi-step}
Let $\termone$ be a closed, typable, term. Then if $\termone\pmred\distrone$ and 
$\termone\pmred\distrtwo$ then $\distrone\equiv\distrtwo$.
\end{theorem}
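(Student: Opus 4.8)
The plan is to route the argument through the laxer multistep arrow $\mnewred$ rather than $\pmred$ directly. The point of $\mnewred$ is that it may stop at \emph{any} term, not only at a normal form, so it is closed under taking ``partial'' reductions; this makes a diamond-style inductive argument possible, whereas $\pmred$ --- which only ever relates $\termone$ to distributions supported on normal forms --- is awkward to induct on. Since every derivation of $\termone\pmred\distrone$ is in particular a derivation of $\termone\mnewred\distrone$, it suffices to establish a confluence property for $\mnewred$ and then specialise.

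Step one is to lift $\mnewred$ to a relation on distributions in the obvious way: if $\distrone=\{\elemone_1^{\probone_1},\dots,\elemone_\numeone^{\probone_\numeone}\}$ and $\elemone_\indexone\mnewred\distrtwo_\indexone$ for each $\indexone$, then $\distrone$ reduces to $\sum_\indexone\probone_\indexone\distrtwo_\indexone$. Lemma~\ref{lemma:sizeincreasingderivation} already establishes exactly this closure, together with the size bound that will keep the later induction well-founded. Step two, the core of the proof, is a \emph{quantitative diamond property} for $\mnewred$: if $\termone\mnewred\distrone$ and $\termone\mnewred\distrtwo$ then there is a distribution $\distrthree$ to which both $\distrone$ and $\distrtwo$ reduce, via derivations whose sizes are controlled in terms of $\size{\termone\mnewred\distrone}$ and $\size{\termone\mnewred\distrtwo}$. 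I would prove this by induction on $\size{\termone\mnewred\distrone}+\size{\termone\mnewred\distrtwo}$. If either derivation is the axiom, say $\distrone=\distdeg{\termone}$, take $\distrthree=\distrtwo$; otherwise both derivations begin with a reduction step, $\termone\red\termtwo_1,\dots,\termtwo_\numeone$ and $\termone\red\termthree_1,\dots,\termthree_\numetwo$ with $\numeone,\numetwo\in\{1,2\}$, and the two one-step redexes are related by exactly one of the single-step confluence lemmas: Lemma~\ref{lemma:conflsinglesingle} when $\numeone=\numetwo=1$, Lemma~\ref{lemma:conflmultisingle} when $\{\numeone,\numetwo\}=\{1,2\}$, and Lemma~\ref{lemma:conflmultimulti} when $\numeone=\numetwo=2$. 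These lemmas either give common one-step reducts of the $\termtwo_\indexone$ and $\termthree_\indextwo$ or direct reduction links between them; in each configuration one uses those links, the induction hypothesis on the strictly smaller subderivations, and the recombination of Lemma~\ref{lemma:sizeincreasingderivation} to produce $\distrthree$ and the two closing derivations with the right weights.

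The main obstacle is the recombination in the case $\numeone=\numetwo=2$: the two binary branchings of $\termone$ split the probability mass of the eventual distribution into two different pairs of subdistributions, and one must exhibit a common refinement of these two partitions along which the subderivations can be joined. Lemma~\ref{lemma:conflmultimulti} is phrased precisely for this --- its four reducts $\termeight_1,\dots,\termeight_4$ and the ``crossing'' indices $\indexone,1-\indexone$ describe exactly how the $2\times 2$ square of possibilities must be matched up --- and the explicit bound in Lemma~\ref{lemma:sizeincreasingderivation} is what guarantees that the derivations obtained by joining are small enough for the induction to close. Keeping the size bookkeeping consistent through all the cases is the delicate part; the probabilistic content itself reduces, case by case, to the already-established one-step lemmas.

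Finally, to obtain the theorem: given $\termone\pmred\distrone$ and $\termone\pmred\distrtwo$, read both as $\termone\mnewred\distrone$ and $\termone\mnewred\distrtwo$ and apply the diamond property to get $\distrthree$ with $\distrone$ and $\distrtwo$ both $\mnewred$-reducing to $\distrthree$. But $\distrone$ and $\distrtwo$ are supported on normal forms, and for a normal form $\elemone$ the only applicable rule is the axiom --- the reduction rule would require $\elemone\red\cdots$, which is impossible --- so $\elemone$ reduces only to $\distdeg{\elemone}$; hence a distribution supported on normal forms reduces only to itself. Therefore $\distrone=\distrthree=\distrtwo$, which is the claim.
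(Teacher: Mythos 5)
Your proposal is correct and follows essentially the same route as the paper: it works with the laxer arrow $\mnewred$, proves a quantitative, size-bounded confluence statement (the paper's ``strengthening'') by induction on the sum of the sizes of the two derivations, discharges the inductive cases via Lemmas~\ref{lemma:conflsinglesingle}, \ref{lemma:conflmultisingle}, \ref{lemma:conflmultimulti} together with the recombination Lemma~\ref{lemma:sizeincreasingderivation}, and finally specialises to $\pmred$ by observing that distributions supported on normal forms only reduce to themselves. The only differences are presentational (you phrase the strengthening as a lifted diamond property on distributions, and you spell out the final specialisation step which the paper leaves as an easy corollary).
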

\begin{proof}

We are going to prove the following strengthening of the thesis:
% NUOVA TESI
Be $\termone$ a closed term. If $\termone\mnewred\distrone$ and $\termone\mnewred\distrtwo$, be $\distrone\equiv\{\elemone_1^{p_1},\cdots,\elemone_n^{p_n}\}$ and $\distrtwo\equiv\{\elemtwo_1^{q_1},\cdots,\elemtwo_k^{q_k}\}$
then exist $\distrfour_1,\ldots,\distrfour_n,\distrfive_1,\ldots,\distrfive_k$ such that
 $\elemone_1\mnewred \distrfour_1,\cdots,\elemone_n\mnewred \distrfour_n$ and 
$\elemtwo_1\mnewred \distrfive_1,\cdots,\elemtwo_k\mnewred \distrfive_k$,
%bound on size
$\max_i (|M_i\mnewred\distrfour_i|) \le |\termone\mnewred \distrtwo| $, $ \max_j (|N_j\mnewred\distrfive_j| )\le |\termone\mnewred \distrone| $ and
%equivalenza-somma
 $\sum_i (p_i\times\distrfour_i) \equiv \sum_j (q_j\times\distrfive_j)$.

%INDUZIONE SU COSA
We are going to prove on induction on the sum of the length of the two derivation of $\termone\mnewred\distrone$ and $\termone\mnewred\distrtwo$.
% Be $\distrone\equiv\{\elemone_1^{p_1},\cdots,\elemone_n^{p_n}\}$ such that $\elemone_1\mred \distrfour_1,\cdots,\elemone_n\mred \distrfour_n$, we use the following syntactic sugar $\sumdistr\distrone$ to intend $\sum_i p_i\times\distrfour_i$.

\begin{varitemize}
 \item If both derivations end with the axiom rule,we are in the following case:

$$
\AxiomC{}\UnaryInfC{$\termone\mnewred\distdeg{\termone1}$}\DisplayProof
\hspace{10pt}
\AxiomC{}\UnaryInfC{$\termone\mnewred\distdeg{\termone2}$}\DisplayProof
$$

we can associate to $r$ the distribution $\distdeg{\termone}$ and the thesis is proved.

 \item If $\termone$ is $\rand$, it's easy to check the validity of the thesis (independently from the structure of the two derivations).
 \item If only one of the derivation consists of the axiom rule, we are in the following case:

$$
\AxiomC{$\termone \rightarrow \termone_1,\ldots,\termone_\numeone$}\AxiomC{$\termone_i \mnewred{} \distrone_i$}
     \BinaryInfC{$\termone\mnewred{}  \sum_{i=1}^\numeone{\frac{1}{\numeone}\distrone_i} $}\DisplayProof
\hspace{10pt}
\AxiomC{}\UnaryInfC{$\termone\mnewred\distdeg{\termone}$}\DisplayProof
$$

  If $\distrone \equiv \sum_{i=1}^\numeone{\frac{1}{\numeone}\distrone_i} \equiv \{\elemone_1^{p_1},\cdots,\elemone_n^{p_n}\}$ and $\distdeg{\termone}\equiv \{ t^1\}$, then it's easy to find the ``confluent'' distribution. For each $M_i$ we associate the relative $\distdeg{M_i}$ and to $\termone$ we associate $\distrone$. The thesis is proved.

  \item Otherwise we are in the case where the sum of the two length is more than $2$ and so, where the last rule, for both derivations, is not the axiom one.
   $$
     \AxiomC{$\termone \rightarrow \termone_1,\ldots,\termone_\numeone$}\AxiomC{$\termone_i \mnewred{} \distrone_i$}
     \BinaryInfC{$\termone\mnewred{}  \sum_{i=1}^\numeone{\frac{1}{\numeone}\distrone_i} $}\DisplayProof
     \hspace{10pt}
     \AxiomC{$\termone \rightarrow \termtwo_1,\ldots,\termtwo_\numetwo$}\AxiomC{$\termtwo_i \mnewred{} \distrtwo_i$}
     \BinaryInfC{$\termone\mnewred{}  \sum_{i=1}^\numetwo{\frac{1}{\numetwo}\distrtwo_i} $}\DisplayProof
   $$
    \begin{varitemize}
     \item If $\termone_1,\ldots,\termone_\numeone$ is equal to $\termtwo_1,\ldots,\termtwo_\numetwo$ (modulo sort) then by using induction hypothesis we are done. Let's consider the most interesting case, where the terms on the right side of $\rightarrow$ are different.
    \item If $\numeone = \numetwo = 1 $. By lemma \ref{lemma:conflsinglesingle} we could have three possible configurations:
       \begin{varitemize}
        \item $\termone_1 \red \termtwo_1$. We have that $\termone_1 \mnewred \distrone_1$ and $\termone_1 \mnewred \distrtwo_1$. So the thesis is derived by induction.
 	\item $\termtwo_1 \red \termone_1$. Same as before.
 	\item $\exists \termthree$ s.t. $\termone_1 \red \termthree$ and $\termtwo_1 \red \termthree$. Be $\distrone\equiv\{\elemone_1^{p_1},\cdots,\elemone_n^{p_n}\}$ and $\distrtwo\equiv\{\elemtwo_1^{q_1},\cdots,\elemtwo_k^{q_k}\}$. By using axiom rule, we can associate a distribution to $\termthree$; let's call it $\distrthree$, such that $\termthree \mnewred \distrthree$.
 	So, $\termone_1 \mnewred \distrone_1$ and $\termone_1 \mnewred \distrthree$. By induction
	exist $\distrfour_1,\ldots,\distrfour_n,\distrsix$ such that
	$\elemone_1\mnewred \distrfour_1,\cdots,\elemone_n\mnewred \distrfour_n$ and $\termthree \mnewred \distrsix$,
	$\max_i (|M_i\mnewred\distrfour_i|) \le |\termone \mnewred \distrthree|$ and  $|\termthree\mnewred\distrsix|\le  |\termone\mnewred \distrone|$ and $\sum_i (p_i\times\distrfour_i) \equiv \distrsix$.

	Similar we have that exist $\distrfive_1,\ldots,\distrfive_k,\distrseven$ such that
	$\elemtwo_1\mnewred \distrfive_1,\cdots,\elemtwo_k\mnewred \distrfive_k$ and $\termthree \mnewred \distrseven$,
	$\max_i (|N_i\mnewred\distrfive_i|) \le |\termone \mnewred \distrthree|$ and $|\termthree\mnewred\distrseven|\le  |\termone\mnewred \distrtwo|$ and
	$\sum_i (q_i\times\distrfive_i) \equiv \distrseven$.

	Merging the two disambiguation, we obtain that $|\termthree\mnewred\distrsix|+  |\termthree\mnewred\distrseven|   \le |\termone\mnewred \distrone|+  |\termone\mnewred \distrtwo|$. Be $\distrsix\equiv\{\elemthree_1^{\probthree_1},\ldots,\elemthree_\numethree^{\probthree_\numethree} \}$ and $\distrseven\equiv\{\elemfour_1^{\probfour_1},\ldots,\elemfour_\numefour^{\probfour_\numefour} \}$

	We can apply induction hypothesis and obtain that	
	exist $\distreight_1,\ldots,\distreight_\numethree,\distrnine_1,\ldots,\distrnine_\numefour$ such that
	$\elemthree_1\mnewred \distreight_1,\cdots,\elemthree_n\mnewred \distreight_\numethree$ and $\elemfour_1\mnewred \distrnine_1,\cdots,\elemfour_k\mnewred \distrnine_k$,
	$\max_i (|\elemthree_i\mnewred\distreight_i|)\le  |\termthree\mnewred \distrseven|$ and 
	$\max_j (|\elemfour_j\mnewred\distrnine_j| )\le  |\termthree\mnewred \distrsix|$ and
	$\sum_i (\probthree_i\times\distreight_i) \equiv \sum_j (\probfour_j\times\distrnine_j)$.
	
	Notice that the cardinality of $\distrone$ and $\distrsix$ may differs but for sure they have the same terms with non zero probability. Similar, $\distrtwo$ and $\distrseven$ have the same terms with non zero probability.

	By using lemma \ref{lemma:sizeincreasingderivation} and using transitive property of equality we obtain that 
	$\termone\mnewred \sum_i p_i\distreight_i \equiv \sum_i \probthree_i\distreight_i = \sum_j \probfour_j\distrnine_j$ 
	and 
	$\termone\mnewred \sum_i q_i\distrnine_i \equiv \sum_j \probfour_j\distrnine_j$. Moreover we have:
	
	\begin{align*}	
	\max_i(| \elemone_i\mnewred \distreight_i |) &\le |\termthree\mnewred \distrseven| \le   |\termone\mnewred \distrtwo|\\
	\max_i(| \elemtwo_i\mnewred \distrnine_i |)  &\le |\termthree\mnewred \distrsix| \le   |\termone\mnewred \distrone|\\
	\end{align*}

	The thesis is proved.

       \end{varitemize}
     \item If $\numeone = 2$ and $\numetwo = 1$. By lemma \ref{lemma:conflmultisingle} we could have three possible configurations:
       \begin{varitemize}
        \item $\forall \indexone . \termone_i \red \termtwo_1$. If so, $\termone_1 \mnewred \distrtwo$ and $\termone_2 \mnewred \distrtwo$ (recall $\numetwo =1$, so $\termtwo_1 \mnewred \distrtwo$). 
	Be $\distrone\equiv \{ \elemone_1^{\probone_1},\ldots,\elemone_\numeone^{\probone_\numeone}  \}$ and $\distrtwo\equiv \{ \elemtwo_1^{\probtwo_1},\ldots,\elemtwo_\numesix^{\probtwo_\numesix}  \}$. By construction, we have some elements that belong to $\distrone_1$, other to $\distrone_2$ and some element that belong to both of them.
	Without loosing generality, let's say that elements $\elemone_1,\ldots,\elemone_\numetwo$ belongs to $\distrone_1$ and elements $\elemone_\numethree,\ldots,\elemone_\numeone$, where $1\le\numethree\le\numetwo\le\numeone$. 

	So, we have that $\distrone_1\equiv\{ \elemone_1^{2\probone_1},\ldots,\elemone_{\numethree-1}^{2\probone_{\numethree-1}},\elemone_\numethree^{\probone_\numethree}  ,\ldots,\elemone_\numetwo^{\probone_\numetwo} \}$ and we have that
	$\distrone_2$ is $\{\elemone_\numethree^{\probone_\numethree}  ,\ldots,\elemone_\numetwo^{\probone_\numetwo}, \elemone_{\numetwo+1}^{2\probone_\numetwo},\ldots,\elemone_\numeone^{2\probone_\numeone}\}$.

	By using induction we have that exist $\distrfour_1,\ldots,\distrfour_n,\distrfive_1,\ldots,\distrfive_k$ such that
	$\elemone_1\mnewred \distrfour_1,\cdots,\elemone_n\mnewred \distrfour_n$ and 
	$\elemtwo_1\mnewred \distrfive_1,\cdots,\elemtwo_k\mnewred \distrfive_k$,
	%bound on size
	$\max_{0\le i\le \numetwo}(|M_i\mnewred\distrfour_i|) \le |\termone\mnewred \distrtwo| $, $ \max_j (|N_j\mnewred\distrfive_j| )\le |\termone_1\mnewred \distrone_1| $,
	$\max_{\numethree\le i\le \numeone}(|M_i\mnewred\distrfour_i|) \le |\termone\mnewred \distrtwo| $, $ \max_j (|N_j\mnewred\distrfive_j| )\le |\termone_2\mnewred \distrone_2| $,
	%equivalenza-somma
	$ \sum_{i=1}^{\numetwo-1} 2\probone_i\distrfour_i + \sum_{i=\numetwo}^{\numethree} \probone_i\distrfour_1  \equiv \sum_j (\probtwo_j\times\distrfive_j)$
	and 
	$ \sum_{i=\numetwo}^{\numethree} \probone_i\distrfour_i + \sum_{i=\numethree+1}^{\numeone} 2\probone_i\distrfour_i  \equiv \sum_j (\probtwo_j\times\distrfive_j)$.

	Merging all, we have that exist $\distrfour_1,\ldots,\distrfour_n,\distrfive_1,\ldots,\distrfive_k$ such that
	$\elemone_1\mnewred \distrfour_1,\ldots,$ $\elemone_n\mnewred \distrfour_n$ and 
	$\elemtwo_1\mnewred \distrfive_1,\ldots,\elemtwo_k\mnewred \distrfive_k$,
	%bound on size
	$\max_i (|M_i\mnewred\distrfour_i|) \le |\termone\mnewred \distrtwo| $, 
	$\max_j (|N_j\mnewred\distrfive_j| )\le |\termone\mnewred \distrone| $,
	%equivalenza-somma
	$\sum_{i} (p_i\times\distrfour_i) \equiv \sum_j (q_j\times\distrfive_j)$.
	
        \item $\termtwo \red \termone_1,\termone_2$. We have that $\termtwo \mnewred \frac{1}{2}(\distrone_1 + \distrone_2)$ and $\termtwo \mnewred \distrtwo$. By applying induction hypothesis we prove out thesis. Notice that $\size{ \termtwo\mnewred \distrone } = \size{\termone \mnewred \distrone }$.
 
        \item $\exists \termeight_1,\termeight_2$ s.t. $\termone_1\red \termeight_1$ and $\termone_2\red\termeight_2$ and $\termtwo_1\red \termeight_1,\termeight_2$. Be $\distrone\equiv \{ \elemone_1^{\probone_1},\ldots,\elemone_\numeone^{\probone_\numeone}  \}$ and $\distrtwo\equiv \{ \elemtwo_1^{\probtwo_1},\ldots,\elemtwo_\numesix^{\probtwo_\numesix}  \}$. By construction, we have some elements that belong to $\distrone_1$, other to $\distrone_2$ and some element that belong to both of them.
	Without loosing generality, let's say that elements $\elemone_1,\ldots,\elemone_\numetwo$ belongs to $\distrone_1$ and elements $\elemone_\numethree,\ldots,\elemone_\numeone$, where $1\le\numethree\le\numetwo\le\numeone$. 

	So, we have that $\distrone_1\equiv\{ \elemone_1^{2\probone_1},\ldots,\elemone_{\numethree-1}^{2\probone_{\numethree-1}},\elemone_\numethree^{\probone_\numethree}  ,\ldots,\elemone_\numetwo^{\probone_\numetwo} \}$ and we have that
	$\distrone_2$ is $\{\elemone_\numethree^{\probone_\numethree}  ,\ldots,\elemone_\numetwo^{\probone_\numetwo}, \elemone_{\numetwo+1}^{2\probone_\numetwo},\ldots,\elemone_\numeone^{2\probone_\numeone}\}$.

	By using the axiom rule, we associate to every $\termeight_i$ a distribution $\distrthree_i$ s.t. $\termeight_i \mnewred \distrthree_i$. Be 
	$\distrthree_1\equiv\{\elemthree_1^{\probthree_1},\ldots,\elemthree_\numethree^{\probthree_\numethree} \}$ and be 
	$\distrthree_2\equiv\{\elemfour_1^{\probfour_1},\ldots,\elemfour_\numefour^{\probfour_\numethree} \}$.

	So, we have, for all $\indexone$, $\termone_\indexone \mnewred \distrone_\indexone$ and $\termone_\indexone \mnewred \distrthree_\indexone$, $\termtwo\mnewred \distrtwo$ and $\termtwo \mnewred \frac{1}{2}(\distrthree_1+\distrthree_2)$.

	By applying induction hypothesis on all the three cases we have that exist
	$\distrfour_1 , \ldots,$ $\distrfour_n , \distrfive_1,\ldots,\distrfive_k, \distrsix , \distrseven,\distreight,\distrnine$ such that $\elemone_1\mnewred \distrfour_1,\cdots,\elemone_n\mnewred \distrfour_n$, $\elemtwo_1\mnewred \distrfive_1,\cdots,\elemtwo_k\mnewred \distrfive_k$,
	and $\termeight_1\mnewred \distrsix$ and $\termeight_2\mnewred \distrseven$
	and $\termeight_1\mnewred \distreight$ and $\termeight_2\mnewred \distrnine$
	such that:
	\begin{varitemize}
	 \item $\max_{1\le i\le \numetwo}(|M_i\mnewred\distrfour_i|) \le |\termone_1\mnewred \distrthree_1| $,

	$\size{\termeight_1\mnewred\distrsix} \le \size{\termone_1 \mnewred \distrone_1}$,
	
	$ \sum_{i=1}^{\numetwo-1} 2\probone_i\distrfour_i + \sum_{i=\numetwo}^{\numethree} \probone_i\distrfour_i  \equiv \distrsix$
	\item	$\max_{\numethree\le i\le \numeone}(|M_i\mnewred\distrfour_i|) \le |\termone_2\mnewred \distrthree_2| $,
	
	$\size{\termeight_2\mnewred\distrseven} \le \size{\termone_2 \mnewred \distrone_2}$,

	$ \sum_{i=\numetwo}^{\numethree} \probone_i\distrfour_i + \sum_{i=\numethree+1}^{\numeone} 2\probone_i\distrfour_i  \equiv \distrseven$
	\item 	$\max_{i}(|N_i\mnewred\distrfive_i|) \le |\termtwo\mnewred \frac{1}{2}(\distrthree_1+\distrthree_2)| $,
	
	$\max\{\size{\termeight_1\mnewred\distreight}, \size{\termeight_2\mnewred\distrnine}\} \le \size{\termtwo \mnewred \distrtwo}$

	$ \sum_i \probtwo_i\distrfive_i \equiv \frac{1}{2}(\distreight + \distrnine)$
	\end{varitemize}

	Notice that $\size{ \termeight_1 \mnewred \distreight} + \size{ \termeight_1\mnewred\distrsix } < \size{\termone\mnewred \distrone}+ \size{\termone\mnewred \distrtwo}$. Moreover, notice also that the following inequality holds:  
	$\size{ \termeight_2 \mnewred \distrnine} + \size{ \termeight_2\mnewred\distrseven} < \size{\termone\mnewred \distrone}+ \size{\termone\mnewred \distrtwo}$. We are allowed to apply, again, induction hypothesis and have a confluent distribution for both cases.
	Lemma \ref{lemma:sizeincreasingderivation} then allows us to connect the first two main derivations and by transitive property of equality we have the thesis.

       \end{varitemize}
     \item If $\numeone = 1$ and $\numetwo = 2$. This case is similar to the previous one.
     \item If $\numeone =\numetwo = 2$. By lemma \ref{lemma:conflmultimulti} we have:
       $\exists \termeight_1,\termeight_2,\termeight_3,\termeight_4$ s.t. 
       $\termone_1\red \termeight_1,\termeight_2$ and $\termone_2\red\termeight_3,\termeight_4$ and
       $\exists i. \termtwo_i\red \termeight_1,\termeight_3$ and $\termtwo_{1-i}\red\termeight_2,\termeight_4$.
 
       At each $\termeight_i$ we associate, by using the axiom rule, the relative distribution $\distrthree_i$ s.t. $\termeight \mnewred \distrthree_i$. 

	Without loosing generality, let's say that elements $\elemone_1,\ldots,\elemone_\numetwo$ belongs to $\distrone_1$ and elements $\elemone_\numethree,\ldots,\elemone_\numeone$ to $\distrone_2$, where $1\le\numethree\le\numetwo\le\numeone$; 
	$\elemtwo_1,\ldots,\elemone_\numefour$ belongs to $\distrtwo_1$ and elements $\elemtwo_\numefive,\ldots,\elemtwo_\numesix$ to $\distrtwo_2$, where $1\le\numefive\le\numefour\le\numesix$.

	So, we have that $\distrone_1\equiv\{ \elemone_1^{2\probone_1},\ldots,\elemone_{\numethree-1}^{2\probone_{\numethree-1}},\elemone_\numethree^{\probone_\numethree}  ,\ldots,\elemone_\numetwo^{\probone_\numetwo} \}$ and we have that
	$\distrone_2$ is $\{\elemone_\numethree^{\probone_\numethree}  ,\ldots,\elemone_\numetwo^{\probone_\numetwo}, \elemone_{\numetwo+1}^{2\probone_\numetwo},\ldots,\elemone_\numeone^{2\probone_\numeone}\}$
	and
	$\distrtwo_1\equiv\{ \elemtwo_1^{2\probtwo_1},\ldots,\elemtwo_{\numefive-1}^{2\probtwo_{\numefive-1}},\elemtwo_\numefive^{\probtwo_\numefive}  ,\ldots,\elemtwo_\numefour^{\probtwo_\numefour} \}$ and 
	$\distrtwo_2 \equiv \{\elemtwo_\numefive^{\probtwo_\numefive}  ,\ldots,\elemtwo_\numefour^{\probtwo_\numefive}, \elemtwo_{\numefive+1}^{2\probtwo_\numefive},\ldots,\elemtwo_\numesix^{2\probtwo_\numesix}\}$.

       This case it's very similar to two previous ones. 
	We have that $\termone_1\mnewred\distrone_1$ and $\termone_1\mnewred \frac{1}{2}(\distrthree_1+\distrthree_2)$,
	$\termone_2\mnewred\distrone_2$ and $\termone_2\mnewred \frac{1}{2}(\distrthree_3+\distrthree_4)$,
	$\termtwo_1\mnewred\distrtwo_1$ and $\termtwo_1\mnewred \frac{1}{2}(\distrthree_1+\distrthree_3)$,
	$\termtwo_2\mnewred\distrtwo_2$ and $\termtwo_2\mnewred \frac{1}{2}(\distrthree_2+\distrthree_4)$. We can apply the induction hypothesis to the four cases and have that exist 
	$\distrfour_1 , \ldots,$ $\distrfour_n , \distrfive_1,\ldots,\distrfive_k, \distrsix_1, \distrsix_2, \distrsix_3, \distrsix_4,  \distrseven_1, \distrseven_2, \distrseven_3, \distrseven_4$ such that $\elemone_1\mnewred \distrfour_1,\cdots,\elemone_n\mnewred \distrfour_n$, $\elemtwo_1\mnewred \distrfive_1,\cdots,\elemtwo_k\mnewred \distrfive_k$,
	 $\termeight_i\mnewred \distrsix_i$ and $\termeight_i\mnewred \distrseven_i$
	such that:
	\begin{varitemize}
	 \item $\max_{1\le i\le \numetwo}(|M_i\mnewred\distrfour_i|) \le |\termone_1\mnewred \frac{1}{2}(\distrthree_1+\distrthree_2)| $,

	$\max\{ \size{\termeight_1\mnewred\distrsix_1},\size{\termeight_2\mnewred\distrsix_2}\} \le \size{\termone_1 \mnewred \distrone_1}$,
	
	$ \sum_{i=1}^{\numetwo-1} 2\probone_i\distrfour_i + \sum_{i=\numetwo}^{\numethree} \probone_i\distrfour_i  \equiv \frac{1}{2}(\distrsix_1+\distrsix_2)$
	\item	$\max_{\numethree\le i\le \numeone}(|M_i\mnewred\distrfour_i|) \le |\termone_2\mnewred \frac{1}{2}(\distrthree_3+\distrthree_4)| $,
	
	$\max\{ \size{\termeight_3\mnewred\distrsix_3},\size{\termeight_4\mnewred\distrsix_4}\} \le \size{\termone_2 \mnewred \distrone_2}$,

	$ \sum_{i=\numetwo}^{\numethree} \probone_i\distrfour_i + \sum_{i=\numethree+1}^{\numeone} 2\probone_i\distrfour_i  \equiv  \frac{1}{2}(\distrsix_3+\distrsix_4)$
	\item 	$\max_{1\le\indexone\le \numefour}(|N_i\mnewred\distrfive_i|) \le |\termtwo\mnewred \frac{1}{2}(\distrthree_1+\distrthree_3)| $,
	
	$\max\{\size{\termeight_1\mnewred\distrseven_1}, \size{\termeight_3\mnewred\distrseven_3}\} \le \size{\termtwo_1 \mnewred \distrtwo_1}$

	$ \sum_{i=1}^{\numefive-1} 2\probtwo_i\distrfive_i + \sum_{i=\numefive}^{\numefour} \probtwo_i\distrfive_i  \equiv \frac{1}{2}(\distrseven_1+\distrseven_2)$
	\item 	$\max_{\numefive\le\indexone\le \numesix}(|N_i\mnewred\distrfive_i|) \le |\termtwo\mnewred \frac{1}{2}(\distrthree_2+\distrthree_4)| $,
	
	$\max\{\size{\termeight_2\mnewred\distrseven_2}, \size{\termeight_4\mnewred\distrseven_4}\} \le \size{\termtwo_2 \mnewred \distrtwo_2}$

	$ \sum_{i=\numefive}^{\numefour} \probtwo_i\distrfive_i + \sum_{i=\numefour+1}^{\numesix} 2\probtwo_i\distrfive_i  \equiv \frac{1}{2}(\distrseven_2+\distrseven_4)$
	\end{varitemize}

	Now, notice that for all $i$, $\size{\termeight_i\mnewred \distrsix_i} +\size{\termeight_i\mnewred \distrseven_i} \le \size{\termone\mnewred \distrone}+ \size{\termone\mnewred \distrtwo}$. As we have done in the previous cases, we are now able to apply the induction hypothesis on the four cases. Then we use the lemma \ref{lemma:sizeincreasingderivation} and find confluent distributions. Sum everything and we are able to prove our thesis.

    \end{varitemize}
\end{varitemize}

It is easy to check that original thesis is a corollary of the strengthening thesis. This concludes the proof.
\end{proof}

\begin{example}
Consider again the term
$$
\termone=(\abstr{\varone:\nmasp}{\N}{(\termone_{\oplus}\varone\varone)})\rand
$$
where $\termone_{\oplus}$ is a term computing $\oplus$ on natural numbers seen as
booleans ($0$ stands for ``false'' and everything else stands for ``true''):
\begin{align*}
\termone_{\oplus}&=\abstr{\varone:\nmasp}{\N}{\casezeo{\nmarr{\N}{\N}}{\varone}{\termtwo_\oplus}{\termthree_\oplus}{\termthree_\oplus}};\\
\termtwo_{\oplus}&=\abstr{\vartwo:\nmasp}{\N}{\casezeo{\N}{\vartwo}{0}{1}{1}};\\
\termthree_{\oplus}&=\abstr{\vartwo:\nmasp}{\N}{\casezeo{\N}{\vartwo}{1}{0}{0}}.
\end{align*}
In order to simplify reading, let us define:
\begin{varitemize}
 \item $\funone \equiv (\termone_{\oplus}\varone\varone)$
 \item $\funtwo_0 \equiv (\casezeo{\nmarr{\N}{\N}}{0}{\termtwo_\oplus}{\termthree_\oplus}{\termthree_\oplus})$ 
 \item $\funtwo_1 \equiv (\casezeo{\nmarr{\N}{\N}}{1}{\termtwo_\oplus}{\termthree_\oplus}{\termthree_\oplus})$
 \item $\funthree_0 \equiv \casezeo{\N}{0}{0}{1}{1}$
 \item $\funthree_1 \equiv \casezeo{\N}{1}{1}{0}{0}$ 
\end{varitemize}
We can produce the following derivation tree:
\begin{scriptsize}
\begin{align*}
\derone_0:\,\,
\AxiomC{$(\abstr{\varone:\nmasp}{\N}{\funone})0 \red \termone_\oplus 00$}
\AxiomC{$\termone_\oplus 0\,0 \red \funtwo_0 0 $}  
 \AxiomC{$\funtwo_0 0 \red \termtwo_\oplus 0 $}
    \AxiomC{$\termtwo_\oplus 0 \red \funthree_0$}
      \AxiomC{$\funthree_0 \red 0$}
      \AxiomC{$0\mred \{0^1\} $}
    \BinaryInfC{$\funthree_0 \mred \{0^1\}$}
  \BinaryInfC{$\termtwo_\oplus 0 \mred \{0^1\} $}
\BinaryInfC{$\funtwo_0 0\mred \{0^1\} $}
\BinaryInfC{$(\abstr{\varone:\nmasp}{\N}{\casezeo{\nmarr{\N}{\N}}{\varone}{\termtwo_\oplus}{\termthree_\oplus}{\termthree_\oplus}})0\,0 \mred \{0^1\}$}
\BinaryInfC{$(\abstr{\varone:\nmasp}{\N}{\funone})0 \mred \{0^1\}$}  
\DisplayProof
\end{align*}

\begin{align*}
\derone_1:\,\,
\AxiomC{$(\abstr{\varone:\nmasp}{\N}{\funone})1 \red \termone_\oplus 11$}
\AxiomC{$\termone_\oplus 1\,1 \red \funtwo_1 1 $}  
 \AxiomC{$\funtwo_1 1 \red \termthree_\oplus 1 $}
    \AxiomC{$\termthree_\oplus 1 \red \funthree_1$}
      \AxiomC{$\funthree_1 \red 0$}
      \AxiomC{$0\mred \{0^1\} $}
    \BinaryInfC{$\funthree_1 \mred \{0^1\}$}
  \BinaryInfC{$\termthree_\oplus 1 \mred \{0^1\} $}
\BinaryInfC{$\funtwo_1 1\mred \{0^1\} $}
\BinaryInfC{$(\abstr{\varone:\nmasp}{\N}{\casezeo{\nmarr{\N}{\N}}{\varone}{\termtwo_\oplus}{\termthree_\oplus}{\termthree_\oplus}})1\,1 \mred \{0^1\}$}
\BinaryInfC{$(\abstr{\varone:\nmasp}{\N}{\funone})1 \mred \{0^1\} $}
\DisplayProof
\end{align*}

\begin{prooftree}
\AxiomC{$ (\abstr{\varone:\nmasp}{\N}{\funone})\rand \red 
	  (\abstr{\varone:\nmasp}{\N}{\funone})0,(\abstr{\varone:\nmasp}{\N}{\funone })1$}
  \AxiomC{$\derone_0:(\abstr{\varone:\nmasp}{\N}{\funone})0 \mred \{0^1\}$}  
  \AxiomC{$\derone_1:(\abstr{\varone:\nmasp}{\N}{\funone})1 \mred \{0^1\}$}  
\TrinaryInfC{$ (\abstr{\varone:\nmasp}{\N}{(\termone_{\oplus}\varone\varone)})\rand \mred \{0^1\} $}
\end{prooftree}
\end{scriptsize}
\end{example}

%%%%%%%%%%%%%%%%%%%%%%%%%%%%%%%%%%%%%%%%%%%%%%%%%%%%%%%%%
\section{Probabilistic Polytime Soundness}
%%%%%%%%%%%%%%%%%%%%%%%%%%%%%%%%%%%%%%%%%%%%%%%%%%%%%%%%%
The most difficult (and interesting!) result about \RSLR\ is definitely polytime
soundness: every (instance of) a first-order term can be reduced to a numeral in a polynomial
number of steps by a probabilistic Turing machine. Polytime
soundness can be proved, following~\cite{Bellantoni2000a}, by showing that:
\begin{varitemize}
\item
  Any explicit term of base type can be reduced to its normal form
  with very low time complexity;
\item
  Any term (non necessarily of base type) can be put in explicit form
  in polynomial time.
\end{varitemize}
By gluing these two results together, we obtain what we need, namely an effective and efficient
procedure to compute the normal forms of terms. Formally, two notions of evaluation for terms 
correspond to the two steps defined above:
\begin{varitemize}
\item
  On the one hand, we need a ternary relation $\nfu$ between closed terms of type $\N$, probabilities
  and numerals. Intuitively, $\nff{\termone}{\probone}{\numeone}$ holds when
  $\termone$ is explicit and rewrites to $\numeone$ with probability
  $\probone$. The inference rules for $\nfu$ are defined in Figure~\ref{fig:nff};
\item
  On the other hand, we need a ternary relation $\rfu$ between terms of non modal type, probabilities and
  terms. We can derive $\rff{\termone}{\probone}{\termtwo}$ only if
  $\termone$ can be transformed into $\termtwo$ with probability $\probone$ consistently 
  with the reduction relation. The inference rules for $\rfu$ are in Figure~\ref{fig:rff}.
\end{varitemize}
\begin{figure*}[htbp]
\begin{center}
\fbox{
\begin{minipage}{.95\textwidth}
$$
\AxiomC{}\UnaryInfC{$\nff{\numeone}{1}{\numeone}$}\DisplayProof
\hspace{10pt}
\AxiomC{}\UnaryInfC{$\nff{\rand}{1/2}{0}$}\DisplayProof
\hspace{10pt}
\AxiomC{}\UnaryInfC{$\nff{\rand}{1/2}{1}$}\DisplayProof
$$
$$
\AxiomC{$\nff{\termone}{\probone}{\numeone}$}\UnaryInfC{$\nff{\S0\termone}{\probone}{2\cdot\numeone}$}\DisplayProof
\hspace{10pt}
\AxiomC{$\nff{\termone}{\probone}{\numeone}$}\UnaryInfC{$\nff{\S1\termone}{\probone}{2\cdot\numeone+1}$}\DisplayProof
\hspace{10pt}
\AxiomC{$\nff{\termone}{\probone}{0}$}\UnaryInfC{$\nff{\P\termone}{\probone}{0}$}\DisplayProof
\hspace{10pt}
\AxiomC{$\nff{\termone}{\probone}{\numeone}$}\AxiomC{$\numeone\geq 1$}\BinaryInfC{$\nff{\P\termone}{\probone}{\lfloor\frac{\numeone}{2}\rfloor}$}\DisplayProof
$$
$$
\AxiomC{$\nff{\termone}{\probone}{0}$}\AxiomC{$\nff{\termtwo\multi{\termfive}}{\probtwo}{\numeone} $}
  \BinaryInfC{$\nff{(\casezeo{\typone}{\termone}{\termtwo}{\termthree}{\termfour})\multi{\termfive}}{\probone\probtwo}{\numeone}$}\DisplayProof
$$
$$
\AxiomC{$\nff{\termone}{\probone}{2\numeone}$}\AxiomC{$\nff{ 
  \termthree\multi{\termfive}   }{\probtwo}{\numetwo} $}
\AxiomC{$\numeone\geq 1$}
  \TrinaryInfC{$\nff{(\casezeo{\typone}{\termone}{\termtwo}{\termthree}{\termfour})\multi{\termfive}}{\probone\probtwo}{\numetwo}$}\DisplayProof
$$
$$
\AxiomC{$\nff{\termone}{\probone}{2\numeone+1}$}\AxiomC{$\nff{  \termfour\multi{\termfive} }{\probtwo}{\numetwo}$}
% \AxiomC{$\numeone\geq 1$}
  \BinaryInfC{$\nff{(\casezeo{\typone}{\termone}{\termtwo}{\termthree}{\termfour})\multi{\termfive}}{\probone\probtwo}{\numetwo}$}\DisplayProof
$$
$$
\AxiomC{$\nff{\termtwo}{\probone}{\numeone}$}\AxiomC{$\nff{(\subst{\termone}{\varone}{\numeone})\multi{\termthree}}{\probtwo}{\numetwo}$}
  \BinaryInfC{$\nff{(\abstr{\varone:\aspone}{\N}{\termone})\termtwo\multi{\termthree}}{\probone\probtwo}{\numetwo}$}\DisplayProof
\hspace{10pt}
\AxiomC{$\nff{(\subst{\termone}{\varone}{\termtwo})\multi{\termthree}}{\probtwo}{\numeone}$}
  \UnaryInfC{$\nff{(\abstr{\varone:\aspone}{\htypone}{\termone})\termtwo\multi{\termthree}}{\probtwo}{\numeone}$}\DisplayProof
$$
\end{minipage}}
\end{center}
\caption{The relation $\nfu$: Inference Rules}\label{fig:nff}
\end{figure*}

\begin{figure*}[htbp]
\begin{center}
\fbox{
\begin{minipage}{.95\textwidth}
$$
\AxiomC{}\UnaryInfC{$\rff{\constone}{1}{\constone}$}\DisplayProof
\hspace{10pt}
\AxiomC{$\rff{\termone}{\probone}{\termsix}$}\UnaryInfC{$\rff{\S0\termone}{\probone}{\S0 \termsix}$}\DisplayProof
\hspace{10pt}
\AxiomC{$\rff{\termone}{\probone}{\termsix}$}\UnaryInfC{$\rff{\S1\termone}{\probone}{\S1\termsix}$}\DisplayProof
\hspace{10pt}
\AxiomC{$\rff{\termone}{\probone}{\termsix}$}\UnaryInfC{$\rff{\P\termone}{\probone}{\P\termsix}$}\DisplayProof
$$
$$
  \AxiomC{$\rff{\termone}{\probone}{\termsix}$}
 \noLine\UnaryInfC{$\rff{\termtwo}{\probtwo}{\termseven}$}
  \AxiomC{$\rff{\termthree}{\probthree}{\termeight}$}
 \noLine\UnaryInfC{$\rff{\termfour}{\probfour}{\termnine}$}
  \AxiomC{}
 \noLine\UnaryInfC{$\rff{\forall \termfive_i \in \multi\termfive, \termfive_i  }{\probfive_i}{\termten_i}$}
\TrinaryInfC{$\rff{(\casezeo{\typone}{\termone}{\termtwo}{\termthree}{\termfour})\multi{\termfive}}{\probone\probtwo\probthree\probfour\prod_i{\probfive_i}}
  {(\casezeo{\typone}{\termsix}{\termseven}{\termeight}{\termnine})\multi{\termten}}$}\DisplayProof
$$
$$
\AxiomC{$\rff{\termone}{\probone}{\termsix}$}\noLine
\UnaryInfC{$\nff{\termsix}{\probtwo}{\numeone}$}
\AxiomC{$\rff{\termtwo}{\probthree}{\termseven}$}\noLine
\UnaryInfC{$\rff{\forall \termfour_i \in \multi\termfour,\termfour_i}{\probfour_i}{\termnine_i}$}
\AxiomC{}
\noLine\UnaryInfC{$\rff{\termthree {\lfloor \frac{\numeone}{2^0}  \rfloor}}{\probthree_0}{\termthree_0}$\,\,\,\ldots\,\,\,
$\rff{\termthree {\lfloor \frac{\numeone}{2^{\size{\numeone} -1}}  \rfloor}}{\probthree_{\size{\numeone} -1}}{\termthree_{\size{\numeone} -1}}$  }
\TrinaryInfC{$\rff{(\saferec{\typone}{\termone}{\termtwo}{\termthree}) \multi\termfour }{\probone\probtwo\probthree(\prod_j{\probthree_j})(\prod_i{\probfour_i}) }{ \termthree_0 ( \ldots (\termthree_{(\size{\numeone}-1)} \termseven) \ldots )\multi{\termnine}  }$}\DisplayProof
$$
$$
\AxiomC{$\rff{\termtwo}{\probone}{\termseven}$}\noLine
\UnaryInfC{$\nff{\termseven}{\probthree}{\numeone}$}
\AxiomC{$\rff{(\subst{\termone}{\varone}{\numeone})\multi{\termthree}}{\probtwo}{ \termfive }$}
  \BinaryInfC{$\rff{(\abstr{\varone:\masp}{\N}{\termone})\termtwo\multi{\termthree}}{\probone\probthree\probtwo}{\termfive}$}\DisplayProof
\hspace{10pt}
\AxiomC{$\rff{\termtwo}{\probone}{\termseven}$}\noLine
\UnaryInfC{$\nff{\termseven}{\probthree}{\numeone}$}
\AxiomC{$\rff{\termone\multi\termthree}{\probtwo}{ \termfive }$}
  \BinaryInfC{$\rff{(\abstr{\varone:\nmasp}{\N}{\termone})\termtwo\multi{\termthree}}
    {\probone\probthree\probtwo}{(\abstr{\varone:\nmasp}{\N}{\termfive})\numeone}$}\DisplayProof
$$
$$
\AxiomC{$\rff{(\subst{\termone}{\varone}{\termtwo})\multi{\termthree}}{\probtwo}{\termfive}$}
  \UnaryInfC{$\rff{(\abstr{\varone:\aspone}{\htypone}{\termone})\termtwo\multi{\termthree}}{\probtwo}{\termfive}$}\DisplayProof
\hspace{10pt}
\AxiomC{$\rff{\termone}{\probtwo}{\termfive}$}
  \UnaryInfC{$\rff{\abstr{\varone:\aspone}{\typone}{\termone}}{\probtwo}{\abstr{\varone:\aspone}{\typone}\termfive}$}\DisplayProof
\hspace{10pt}
\AxiomC{$\rff{\termone_j}{\probone_j}{\termtwo_j}$}
  \UnaryInfC{$\rff{\varone\multi{\termone}}{\prod_i\probone_i}{\varone\multi{\termtwo}}$}\DisplayProof
$$
\end{minipage}}
\end{center}
\caption{The relation $\rfu$: Inference Rules}\label{fig:rff}
\end{figure*}
\newcommand{\rfnfu}{\Downarrow}
\newcommand{\rfnff}[3]{#1\Downarrow^{#2}#3}
Moreover, a third ternary relation $\rfnfu$ between closed terms of type $\N$, probabilities
and numerals can be defined by the rule below:
$$
\AxiomC{$\rff{\termone}{\probone}{\termtwo}$}\AxiomC{$\nff{\termtwo}{\probtwo}{\numeone}$}\BinaryInfC{$\rfnff{\termone}{\probone\probtwo}{\numeone}$}\DisplayProof
$$
A peculiarity of the just introduced relations with respect to similar ones is the following:
whenever a statement in the form $\nff{\termone}{\probone}{\termtwo}$ is an immediate
premise of another statement $\nff{\termthree}{\probtwo}{\termfour}$, then $\termone$ needs
to be structurally smaller than $\termthree$, provided all numerals are assumed to have
the same internal structure. A similar but weaker statement holds for $\rfu$. This relies on the
peculiarities of \RSLR, and in particular on the fact that variables of higher-order types
can appear free at most once in terms, and that terms of base types cannot be passed
to functions without having been completely evaluated. In other words, the just described
operational semantics is structural in a very strong sense, and this allows to prove
properties about it by induction on the structure of \emph{terms}, as we will experience
in a moment.

Before starting to study the combinatorial properties of $\rfu$ and $\nfu$, it is necessary
to show that, at least, $\rfnfu$ is adequate as a way to evaluate lambda terms:
\begin{theorem}[Adequacy]\label{theo:adequacy}
For every term $\termone$ such that $\vdash\termone:\N$, the following two conditions
are equivalent:
\begin{varenumerate}
\item\label{cond:der}
There are $\nattwo$ distinct derivations with conclusions $\rfnff{\termone}{\probone_1}{\numeone_1},\ldots,\rfnff{\termone}{\probone_\nattwo}{\numeone_\nattwo}$
(respectively) such that $\sum_{\natone=1}^\nattwo\probone_\natone=1$;
\item\label{cond:distr}
$\termone\pmred\distrone$, where for every $\numetwo$, $\distrone(\numetwo)=\sum_{\numeone_i=\numetwo}\probone_i$.
\end{varenumerate}
\end{theorem}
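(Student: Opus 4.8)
The plan is to prove the two implications separately; in both cases the crux is a pair of \emph{simulation lemmas} relating a single step of $\red$ to the big-step relations $\rfu$, $\nfu$ and $\rfnfu$. The property to exploit is the one stressed just above: these relations are syntax-directed and commit to a fixed evaluation order --- base-type arguments are evaluated before being substituted, higher-order arguments are substituted unevaluated, and the guard of a $\mathtt{recursion}$ is evaluated before the recursion is unrolled. Hence a closed term $\termone$ of type $\N$ that is \emph{not} a numeral determines a \emph{canonical redex}, namely the one contracted first in \emph{every} derivation of a judgement $\rfnff{\termone}{\probone}{\numeone}$, whose contraction yields a single term or, when the redex is an occurrence of $\rand$, exactly two terms. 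We also use the routine fact that a closed, well-typed normal form of type $\N$ is a numeral.

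The first ingredient is an \textbf{Expansion Lemma}: if $\termone\red\termtwo_1,\dots,\termtwo_\natthree$ and, for each $\natone\le\natthree$, we are given a finite family of $\rfnfu$-judgements with source $\termtwo_\natone$ whose probabilities sum to $1$ and which induces the distribution $\distrone_\natone$, then the family consisting of $\rfnff{\termone}{\probone/\natthree}{\numeone}$ for each $\rfnff{\termtwo_\natone}{\probone}{\numeone}$ above has probabilities summing to $1$ and induces $\frac{1}{\natthree}\sum_\natone\distrone_\natone$. This is proved by cases on the contracted redex. The reductions for $\S0,\S1,\P$ and $\mathtt{case}$, and the ``swap'' rule, are immediate or transparent --- the multi-argument $\nfu$- and $\rfu$-rules already absorb swaps, using that the moved argument does not contain the bound variable; the $\beta$-rules use the substitution lemmas (Lemmas~\ref{th:nonmodalsubstitution} and~\ref{th:modalsubstitution}) together with a re-assembly of the appropriate $\nfu$- or $\rfu$-rule; and the unrolling rule $\saferec{\typone}{\numeone}{\termtwo}{\termthree}\to\termthree\,\numeone\,(\saferec{\typone}{\lfloor\numeone/2\rfloor}{\termtwo}{\termthree})$ is handled by a nested induction on $\size{\numeone}$, showing that the recursion rule of Figure~\ref{fig:rff} applied to the left-hand side can be recovered from $\rfnfu$-derivations of the right-hand side, the recursive call having a strictly shorter guard. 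Symmetrically I would prove a \textbf{Splitting Lemma}: if the $\rfnfu$-judgements with source $\termone$ form a family of total probability $1$ and $\termone\red\termtwo_1,\dots,\termtwo_\natthree$ contracts the canonical redex of $\termone$, then this family partitions according to which $\termtwo_\natone$ each derivation proceeds through, and each block, with its probabilities multiplied by $\natthree$, is again a total-probability-$1$ family, now for $\termtwo_\natone$, consisting of strictly smaller derivations. Here one inverts the syntax-directed $\rfu$- and $\nfu$-rules, the recursion rule again requiring the nested induction on the guard.

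Granting these lemmas, I would prove \ref{cond:distr}$\Rightarrow$\ref{cond:der} by induction on a derivation of $\termone\pmred\distrone$: if $\termone\in\NF$ then $\termone$ is a numeral $\numeone$ and the single axiom $\rfnff{\numeone}{1}{\numeone}$ is the required family; otherwise $\termone\red\termtwo_1,\dots,\termtwo_\natthree$ with $\termtwo_\natone\pmred\distrone_\natone$ and $\distrone=\frac{1}{\natthree}\sum_\natone\distrone_\natone$, and the Expansion Lemma glues the complete families supplied by the induction hypothesis. Conversely, \ref{cond:der}$\Rightarrow$\ref{cond:distr} is an induction on the total size of the given family of $\rfnfu$-derivations: if $\termone$ is a numeral $\numeone$ the family is the single axiom $\rfnff{\numeone}{1}{\numeone}$ and $\termone\pmred\distrone_\termone$; otherwise I contract the canonical redex to get $\termone\red\termtwo_1,\dots,\termtwo_\natthree$, apply the Splitting Lemma to obtain strictly smaller total-probability-$1$ families for the $\termtwo_\natone$, invoke the induction hypothesis to get $\termtwo_\natone\pmred\distrone_\natone$, and close with the first rule of Figure~\ref{fig:multiredrules}; that $\frac{1}{\natthree}\sum_\natone\distrone_\natone$ is exactly the $\distrone$ of the statement is immediate from $\distrone(\numetwo)=\sum_{\numeone_i=\numetwo}\probone_i$, and Theorem~\ref{theo:confluence-multi-step} guarantees the outcome does not depend on the choices made.

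The step I expect to be the main obstacle is the $\textsc{(T-Rec)}$ case of the two simulation lemmas: bridging the one-step unrolling of $\red$ and the complete unrolling performed within a single $\rfu$-rule forces the nested induction on $\size{\numeone}$, and one must track carefully how the probabilities of the iterated premises $\rff{\termthree\,\lfloor\numeone/2^j\rfloor}{\probthree_j}{\termthree_j}$ in Figure~\ref{fig:rff} recombine with those coming from the $\nfu$-evaluation of the guard. The structural smallness of the premises of $\nfu$ and the affine use of higher-order variables --- both noted above --- are what keep all these inductions well-founded; the remaining cases are bookkeeping.
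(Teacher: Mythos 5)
Your overall strategy is sound, and for the implication \ref{cond:der}$\;\Rightarrow\;$\ref{cond:distr} it essentially coincides with the paper's: the paper also proceeds by induction on the sum of the sizes of the $\nattwo$ derivations, which in practice requires exactly the kind of inversion along one reduction step that your Splitting Lemma packages (your appeal to Theorem~\ref{theo:confluence-multi-step} in this direction is superfluous, since you construct one specific derivation of $\termone\pmred\distrone$). Where you genuinely diverge is the converse. The paper does \emph{not} prove any subject-expansion property: it only observes that \emph{some} complete family of $\rfnfu$-derivations exists (a totality statement, proved by induction on $\sizewonum{\termone}$, which is much weaker than your Expansion Lemma because it need not match a prescribed distribution), then applies the already-established implication \ref{cond:der}$\;\Rightarrow\;$\ref{cond:distr} to that family and invokes multistep confluence (Theorem~\ref{theo:confluence-multi-step}) to identify the induced distribution with the given $\distrone$. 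Your route instead inducts on the derivation of $\termone\pmred\distrone$ and rebuilds a family for $\termone$ from families for the reducts; this buys independence from confluence in this direction, but at a real cost: since the rule of Figure~\ref{fig:multiredrules} may fire an \emph{arbitrary} redex in an arbitrary context, your Expansion Lemma must hold for all such steps, not just the ``canonical'' one your framing suggests, so besides the \textsc{(T-Rec)} unrolling you single out, you also need the congruence cases and an \emph{operational} commutation-with-substitution fact for the deferred substitution in the $\nmasp\N$ rule of Figure~\ref{fig:rff} (the typing Lemmas~\ref{th:nonmodalsubstitution} and~\ref{th:modalsubstitution} you cite do not supply this), and your size measures need care on probability-preserving steps such as swaps, where the big-step derivations of redex and reduct have essentially the same size. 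None of this makes a step false, but it is substantially more machinery than the paper's existence-plus-confluence argument; conversely, the paper's shortcut is only available because confluence has already been established.
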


\begin{proof}
Implication $\ref{cond:der}\;\Rightarrow\;\ref{cond:distr}$ can be proved by an induction
on the sum of the sizes of the $\nattwo$ derivations. 
About the converse, just observe that, \emph{some} derivations like
the ones required in Condition $\ref{cond:der}$ need to exist. 
This can be formally proved by induction on $\sizewonum{\termone}$, where 
$\sizewonum{\cdot}$ is defined as follows:
$\sizewonum{\varone}=1$,
$\sizewonum{\termone\termtwo}=\sizewonum{t}+\sizewonum{s}$,
$\sizewonum{\abstr{\varone:\aspone}{\typone}{\termone}}=\sizewonum{\termone}+1$,
$\sizewonum{\casezeo{\typone}{\termone}{\termtwo}{\termthree}{\termfour}}=
  \sizewonum{\termone}+\sizewonum{\termtwo}+\sizewonum{\termthree}+\sizewonum{\termfour}+1$,
$\sizewonum{\linrec{\typone}{\termone}{\termtwo}{\termthree}}=
 \sizewonum{\termone}+\sizewonum{\termtwo}+\sizewonum{\termthree}+1$,
$\sizewonum{\numeone}=1$,
$\sizewonum{\S0}=\sizewonum{\S1}=\sizewonum{\P}=\sizewonum{\rand}=1$.
Thanks to multistep confluence, we can conclude.
\end{proof}

It's now time to analyse how big derivations for $\nfu$ and $\rfu$ can be with respect
to the size of the underlying term. Let us start with $\nfu$ and prove that, since
it can only be applied to explicit terms, the sizes of derivations must be very small:
\begin{proposition}\label{th:normalform}
Suppose that 
%$\varone_1:\aspone_1\N,\ldots,\varone_\natone:\aspone_\natone\N\vdash\termone:\N$, where
$\vdash\termone:\N$, where
$\termone$ is explicit.
%Then there is a polynomial $p_\termone$ such that for every
%$\numeone_1,\ldots,\numeone_\natone$ and for every
Then for every
$\derone:\nff{\termone}{\probone}{\numetwo}$
it holds that
\begin{varenumerate}
\item
  $\size{\derone}\leq 2\cdot\size{\termone}$;
\item
  If $\termtwo\in\derone$, then $\size{\termtwo}\leq2\cdot\size{\termone}^2$;
  %$\size{\termtwo}\leq p_\termone(\size{\numeone_1},\ldots,\size{\numeone_k})$;
%\item
%  $\size{\numetwo}\leq\size{\termone}$.
\end{varenumerate}
\end{proposition}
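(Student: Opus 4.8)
The plan is to prove both items by induction on the structure of the (explicit) term $\termone$, taking as induction measure the numeral-blind size $\sizewonum{\cdot}$ already used in the proof of Theorem~\ref{theo:adequacy}, in which every numeral weighs $1$. What makes this work is the structural character of $\nfu$ announced just above: in every rule of Figure~\ref{fig:nff} each premise has the form $\nff{\termseven}{\cdot}{\cdot}$ with $\sizewonum{\termseven}<\sizewonum{\termone}$, $\termone$ being the subject of the conclusion. For all rules except the two $\lambda$-rules this is immediate, since $\termseven$ is a subterm of $\termone$ (in the $\mathtt{case}$ rules $\termseven$ is one branch applied to the trailing arguments $\multi\termfive$, so the $\mathtt{case}$-node and the other branches are dropped). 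For the $\lambda$-rule on a base-type argument, $\termseven$ is either $\termtwo$, a subterm, or $(\subst{\termone'}{\varone}{\numeone})\multi\termthree$: here $\varone$ has type $\N$ and $\numeone$ is a numeral, so replacing the occurrences of $\varone$ by $\numeone$ leaves $\sizewonum{\cdot}$ unchanged, while the abstraction and the head application disappear, whence $\sizewonum{(\subst{\termone'}{\varone}{\numeone})\multi\termthree}=\sizewonum{\termone'}+\sizewonum{\multi\termthree}<\sizewonum{\termone}$; the $\lambda$-rule on a higher-order argument is analogous, using that such a $\varone$ occurs \emph{at most once} in $\termone'$. Since $\termone$ is explicit, the recursion rule never fires, so these are all the cases.

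Item~1 then follows mechanically. An axiom gives $\size{\derone}=0$. For a rule with immediate subderivations $\derone_1,\ldots,\derone_m$ ($m\le 2$) on subjects $\termseven_1,\ldots,\termseven_m$ one has $\size{\derone}=1+\sum_i\size{\derone_i}\le 1+\sum_i 2\sizewonum{\termseven_i}$; since each rule discards at least one node of the subject (a constant, an abstraction, or a $\mathtt{case}$-node) we get $\sum_i\sizewonum{\termseven_i}\le\sizewonum{\termone}-1$, hence $\size{\derone}\le 2\sizewonum{\termone}-1\le 2\cdot\size{\termone}$, using that every numeral weighs at least $1$ in $\size{\cdot}$.

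Item~2 is the delicate point, and a naive structural induction does not suffice: a single base-type $\beta$-step can \emph{multiply} the size of a body by the size of the numeral substituted, so iterating it along a derivation of length $\Theta(\size{\termone})$ would only give an exponential bound. I would instead combine two observations. First, the argument above in fact shows that, reading $\derone$ from the root downwards, the subject strictly decreases in numeral-blind size, so every subject $\termtwo$ appearing in $\derone$ satisfies $\sizewonum{\termtwo}\le\sizewonum{\termone}$, i.e.\ it is built from at most $\sizewonum{\termone}$ leaves and $\lambda/\mathtt{case}$-nodes. Second --- and this is where the absence of duplication in \RSLR{} is crucial --- no numeral occurring anywhere in $\derone$ has size larger than $2\cdot\size{\termone}$: numerals are enlarged only by the successor constants $\S0,\S1$, one bit at a time, so it suffices to bound by $\size{\termone}$ the total number of $\S0/\S1$ reductions in $\derone$; and no rule ever increases the stock of successor symbols, because a base-type substitution inserts numerals (which contain none) and a higher-order substitution merely relocates a subterm occurring at most once. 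Putting the two together, any $\termtwo\in\derone$ has size at most $\sizewonum{\termone}\cdot 2\cdot\size{\termone}\le 2\cdot\size{\termone}^2$.

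The genuine obstacle is making the second observation precise --- the bookkeeping that no numeral can outgrow $O(\size{\termone})$ --- since a per-step analysis cannot see why this holds. The clean route is to carry, alongside items~1 and~2, the auxiliary invariant ``every numeral appearing in $\derone$ has size at most $2\cdot\size{\termone}$'' and prove it by the same induction, the only interesting case being the base-type $\lambda$-rule, where the numeral produced for the argument $\termtwo$ is controlled by the invariant applied to the strictly smaller subderivation $\nff{\termtwo}{\probone}{\numeone}$, and the remaining numeral occurrences already lie below the bound because they either come from $\termone$ unchanged or are produced by at most $\size{\termone}$ successor steps in total.
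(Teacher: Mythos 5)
Your decomposition is essentially the paper's own: split $\size{\cdot}$ into the numeral-blind size $\sizewonum{\cdot}$ and the maximal numeral size $\sizenum{\cdot}$, prove by structural induction that the derivation size is linear in $\sizewonum{\termone}$ and that the numeral-blind size of the terms occurring in $\derone$ never exceeds $\sizewonum{\termone}$, bound numeral growth additively, and multiply the two bounds at the end. Item~1 and the skeleton half of Item~2 are fine and match the paper (which in fact obtains $\size{\derone}\leq\sizewonum{\termone}$, slightly sharper than your $2\sizewonum{\termone}-1$; either suffices).

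The weak point is the auxiliary invariant you introduce for the numerals. As you state it --- ``every numeral appearing in $\derone$ has size at most $2\cdot\size{\termone}$,'' with $\termone$ fixed as the root subject --- it is not closed under the structural induction you intend to run: in the base-type $\beta$-rule the second premise is a derivation for $(\subst{\termfive}{\varone}{\numeone})\multi{\termthree}$, where $\termfive$ is the body of the abstraction and $\numeone$ is the numeral just computed for the argument, and this subject already contains $\numeone$, possibly in several copies; the induction hypothesis for that subderivation bounds its numerals only in terms of \emph{its own} size, which is not comparable to $2\cdot\size{\termone}$. Your justification at exactly this point (``at most $\size{\termone}$ successor steps in total, and no rule increases the stock of $\S0/\S1$'') is a global conservation argument that the local induction hypothesis, as you formulated it, does not supply. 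The paper's strengthening repairs precisely this by making the numeral bound additive and relative to the subject of each subderivation: for every $\termtwo\in\derone$, $\sizenum{\termtwo}\leq\sizenum{\termone}+\sizewonum{\termone}$, proved alongside $\sizewonum{\termtwo}\leq\sizewonum{\termone}$ and $\size{\derone}\leq\sizewonum{\termone}$. This form is stable under the induction because substituting a numeral for a base variable leaves $\sizewonum{\cdot}$ unchanged and affects $\sizenum{\cdot}$ only through a max with $\size{\numeone}$, which the invariant for the first premise controls; your conservation-of-successors intuition is exactly the reason the additive bound holds, but it has to be packaged in this parametric form (or as a separate, explicitly global count of $\S0/\S1$ occurrences) for the induction to go through. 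With that restatement your argument becomes the paper's proof, and the conclusion is the multiplication you describe: $\size{\termtwo}\leq\sizewonum{\termtwo}\cdot\sizenum{\termtwo}\leq\sizewonum{\termone}\cdot(\sizenum{\termone}+\sizewonum{\termone})\leq 2\cdot\size{\termone}^2$.
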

\begin{proof}
Given any term $\termone$, $\sizewonum{\termone}$ and
$\sizenum{\termone}$ are defined, respectively, as the
size of $\termone$ where every numeral
counts for $1$ and the maximum size of the numerals that occour in 
$\termone$. For a formal definition of $\sizewonum{\cdot}$,
see the proof of Theorem~\ref{theo:adequacy}. On the other hand,
$\sizenum{\cdot}$ is defined as follows:
$\sizenum{\varone}=0$,
$\sizenum{\termone\termtwo}=\max\{\sizenum{t},\sizenum{s}\}$,
$\sizenum{\abstr{\varone:\aspone}{\typone}{\termone}}=\sizenum{\termone}$,
$\sizenum{\casezeo{\typone}{\termone}{\termtwo}{\termthree}{\termfour}}= 
  \max\{\sizenum{\termone},\sizenum{\termtwo},\sizenum{\termthree},\sizenum{\termfour}\}$,
$\sizenum{\linrec{\typone}{\termone}{\termtwo}{\termthree}}
  =\max\{\sizenum{\termone},\sizenum{\termtwo},\sizenum{\termthree}\}$,
$\sizenum{\numeone}=\lceil\log_2(\numeone)\rceil$, and
$\sizenum{\S0}=\sizenum{\S1}=\sizenum{\P}=\sizenum{\rand}=0$.
Clearly, $\size{\termone}\leq\sizewonum{\termone}\cdot\sizenum{\termone}$.
We prove the following strengthening of the statements above by induction on $\sizewonum{\termone}$: 
\begin{varenumerate}
\item
  $\size{\derone}\leq \sizewonum{\termone}$;
\item
  If $\termtwo\in\derone$, then $\sizewonum{\termtwo}\leq\sizewonum{\termone}$
  and $\sizenum{\termtwo}\leq\sizenum{\termone}+\sizewonum{\termone}$;
\end{varenumerate}
Some interesting cases:
\begin{varitemize}
\item 
  Suppose $\termone$ is $\rand$. We could have two derivations:
  $$
  \AxiomC{}\UnaryInfC{$\nff{\rand}{1/2}{0}$}\DisplayProof
  \hspace{10pt}
  \AxiomC{}\UnaryInfC{$\nff{\rand}{1/2}{1}$}\DisplayProof
  $$
  The thesis is easily proved. 
\item 
  Suppose $\termone$ is $\S i \termtwo$. Depending on $\S i$ we could have two 
  different derivations:
  $$
  \AxiomC{$\dertwo:\nff{\termtwo}{\probone}{\numeone}$}\UnaryInfC{$\nff{\S0\termtwo}{\probone}{2\cdot\numeone}$}\DisplayProof
  \hspace{10pt}
  \AxiomC{$\dertwo:\nff{\termtwo}{\probone}{\numeone}$}\UnaryInfC{$\nff{\S1\termtwo}{\probone}{2\cdot\numeone+1}$}\DisplayProof
  $$
  Suppose we are in the case where $\S i \equiv \S0$. Then, for every $\termthree\in\derone$,
  \begin{align*}
    \size{\derone}&=\size{\dertwo}+1\leq \sizewonum{\termtwo}+1=\sizewonum{\termone};\\
    \sizewonum{\termthree}&\leq\sizewonum{\termtwo}\le\sizewonum{\termone}\\
    \sizenum{\termthree}&\leq\sizenum{\termtwo}+\sizewonum{\termtwo}+1=\sizenum{\termtwo}+\sizewonum{\termone}\\
        &=\sizenum{\termone}+\sizewonum{\termone}
  \end{align*}
  The case where $\S i \equiv \S1$ is proved in the same way.
\item 
  Suppose $\termone$ is $\P \termtwo$.
  $$
  \AxiomC{$\dertwo:\nff{{\termtwo} }{\probone}{0}$}\UnaryInfC{$\nff{\P{\termtwo}}{\probone}{0}$}\DisplayProof
  \hspace{10pt}
  \AxiomC{$\dertwo:\nff{ {\termtwo}}{\probone}{\numeone}$}\AxiomC{$\numeone\geq 1$}\BinaryInfC{$\nff{\P {\termtwo}}{\probone}{\lfloor\frac{\numeone}{2}\rfloor}$}\DisplayProof
  $$
  We focus on case where $n > 1$, the other case is similar. For every $\termthree\in\derone$ we have
  
  \begin{align*}
    \size{\derone} &= \size{\dertwo} +1 \le \sizewonum{\termtwo} +1=\sizewonum{\termone}\\
    \sizewonum{\termthree}&\leq\sizewonum{\termtwo}\le\sizewonum{\termone}\\
    \sizenum{\termthree}&\leq\sizenum{\termtwo}+\sizewonum{\termtwo}+1=\sizenum{\termtwo}+\sizewonum{\termone}\\
    &=\sizenum{\termone}+\sizewonum{\termone}
  \end{align*}
\item 
  Suppose $\termone$ is $\numeone$.
  
  $$
  \AxiomC{}\UnaryInfC{$\nff{\numeone}{1}{\numeone}$}\DisplayProof
  $$
  
  By knowing $\size{\derone}=1$, $\sizewonum{\numeone}=1$ and $\sizenum{\numeone}=\size{\numeone}$, the proof is trivial.
  % APPLICAZIONE CON N
\item
  Suppose that $\termone$ is $(\abstr{\vartwo:\aspone}{\N}{\termtwo})\termthree\multi{\termfour}$. 
  All derivations $\derone$ for $\termone$ are in the following form:
  $$
   \AxiomC{$\dertwo:\nff{\termthree}{\probone}{\numethree}$}
   \AxiomC{$\derthree:\nff{(\subst{\termtwo}{\vartwo}{\numethree})\multi{\termfour}}{\probtwo}{\numetwo}$}
     \BinaryInfC{$\nff{\termone}{\probone\probtwo}{\numetwo}$}\DisplayProof
   $$
   Then, for every $\termfive\in\derone$,
   \begin{align*}
     \size{\derone}&\leq\size{\dertwo}+\size{\derthree}+1\leq \sizewonum{\termthree}
       +\sizewonum{\subst{\termtwo}{\vartwo}{\numethree}\multi{\termfour}}+1\\
           &=\sizewonum{\termthree}+\sizewonum{\termtwo\multi{\termfour}}+1\leq\sizewonum{\termone};\\
     \sizenum{\termfive}&\leq\max\{\sizenum{\termthree}+\sizewonum{\termthree},\sizenum{\subst{\termtwo}{\vartwo}{\numethree}\multi{\termfour}}+\sizewonum{\subst{\termtwo}{\vartwo}{\numethree}\multi{\termfour}}\}\\
           &=\max\{\sizenum{\termthree}+\sizewonum{\termthree},\sizenum{\subst{\termtwo}{\vartwo}{\numethree}\multi{\termfour}}+\sizewonum{\termtwo\multi{\termfour}}\}\\
           &=\max\{\sizenum{\termthree}+\sizewonum{\termthree},\max\{\sizenum{\termtwo\multi{\termfour}},\size{\numethree}\}+\sizewonum{\termtwo\multi{\termfour}}\}\\
           &=\max\{\sizenum{\termthree}+\sizewonum{\termthree},\sizenum{\termtwo\multi{\termfour}}+\sizewonum{\termtwo\multi{\termfour}},\size{\numethree}+\sizewonum{\termtwo\multi{\termfour}}\}\\
           &\leq\max\{\sizenum{\termthree}+\sizewonum{\termthree},\sizenum{\termtwo\multi{\termfour}}+\sizewonum{\termtwo\multi{\termfour}},\sizenum{\termthree}+\sizewonum{\termthree}+\sizewonum{\termtwo\multi{\termfour}}\}\\
           &\leq\max\{\sizenum{\termthree},\sizenum{\termtwo\multi{\termfour}}\}+\sizewonum{\termthree}+\sizewonum{\termtwo\multi{\termfour}}\\
           &\leq\max\{\sizenum{\termthree},\sizenum{\termtwo\multi{\termfour}}\}+\sizewonum{\termone}\\
           &=\sizenum{\termone}+\sizewonum{\termone};\\
     \sizewonum{\termfive}&\leq\max\{\sizewonum{\termthree},\sizewonum{\subst{\termtwo}{\vartwo}{\numethree}\multi{\termfour}},\sizewonum{\termone}\}\\
           &=\max\{\sizewonum{\termthree},\sizewonum{\termtwo\multi{\termfour}},\sizewonum{\termone}\}\leq\sizewonum{\termone}.
   \end{align*}
If $\termfive\in\derone$, then either $\termfive\in\dertwo$
or $\termfive\in\derthree$ or simply $\termfive=\termone$. This, together with the
induction hypothesis, implies
$\sizewonum{\termfive}\leq\max\{\sizewonum{\termthree},
\sizewonum{\subst{\termtwo}{\vartwo}{\numethree}\multi{\termfour}},\sizewonum{\termone}\}$.
Notice that $\sizewonum{\termtwo\multi{\termfour}} = \sizenum{ \subst{\termtwo}{\vartwo}{\numethree}\multi{\termfour}}$ 
holds because any occurrence of $\vartwo$ in $\termtwo$ counts for $1$,
but also $\numethree$ itself counts for $1$ (see the definition of $\sizewonum{\cdot}$ above).
More generally, duplication of \emph{numerals} for a variable in $\termone$ does not make
$\sizewonum{\termone}$ bigger. 
%APPLICAZIONE HIGH ORDER 
\item 
  Suppose $\termone$ is $(\abstr{\vartwo:\aspone}{\htypone}{\termtwo})\termthree\multi{\termfour}$.
  Without loosing generality we can say that it derives from the following derivation:
  $$
  \AxiomC{$\dertwo:\nff{ (\subst{\termtwo}{\vartwo}{\termthree})\multi{\termfour}}{\probtwo}{\numeone}$}
  \UnaryInfC{$\nff{(\abstr{\vartwo:\aspone}{\htypone}{\termtwo})\termthree\multi{\termfour}}{\probtwo}{\numeone}$}\DisplayProof
  $$
  For the reason that $\vartwo$ has type $\htypone$ we can be sure that it appears at most once in $\termtwo$. 
  So, $\size{\subst{\termtwo}{\vartwo}{\termthree}} \leq \size{\termtwo\termthree}$ and, moreover, 
  $\sizewonum{ \subst{\termtwo}{\vartwo}{\termthree} \multi{\termfour}} \leq \sizewonum{\termtwo\termthree\multi{\termfour} }$ and 
  $\sizenum{ \subst{\termtwo}{\vartwo}{\termthree}\multi{\termfour} } \leq \sizenum{\termtwo\termthree \multi{\termfour}}$.
  We have, for all $\termfive\in\dertwo$:
  \begin{align*}
   \size{\derone} &= \size{\dertwo} + 1 \le \sizewonum{ \subst{\termtwo}{\vartwo}{\termthree} \multi{\termfour}} +1 \le 			\sizewonum{\termone} \\
   \sizewonum{\termfive} &\le \sizewonum{ \subst{\termtwo}{\vartwo}{\termthree} \multi{\termfour}} \leq 						\sizewonum{\termtwo\termthree\multi{\termfour} } \le \sizewonum{\termone}\\
   \sizenum{\termfive} &\le \sizenum{  \subst{\termtwo}{\vartwo}{\termthree} \multi{\termfour}}  + \sizewonum{  \subst{\termtwo}{\vartwo}{\termthree} \multi{\termfour}}\leq 
	\sizenum{\termtwo\termthree\multi{\termfour} } + \sizewonum{\termtwo\termthree\multi{\termfour} }
	  \le \sizenum{\termone}+\sizewonum{\termone}
  \end{align*}
  and this means that the same inequalities hold for every $\termfive\in\derone$.
\item 
  Suppose $\termone$ is $\casezeo{\typone}{\termtwo}{\termthree}{\termfour}{\termfive}$.
  We could have three possible derivations:
  $$
  \AxiomC{$\dertwo:\nff{{\termtwo}}{\probone}{0}$}\AxiomC{$\derthree:\nff{{\termthree\multi{\termsix}}}{\probtwo}{\numeone} $}
  \BinaryInfC{$\nff{{(\casezeo{\typone}{\termtwo}{\termthree}{\termfour}{\termfive})\multi{\termsix}}}{\probone\probtwo}{\numeone}$}\DisplayProof
  $$
  $$
  \AxiomC{$\dertwo:\nff{{\termtwo}}{\probone}{2\numeone}$}
  \AxiomC{$\derthree:\nff{{\termfour\multi{\termsix}}}{\probtwo}{\numetwo}$}\AxiomC{$\numeone\geq 1$}
  \TrinaryInfC{$\nff{{(\casezeo{\typone}{\termtwo}{\termthree}{\termfour}{\termfive})\multi{\termsix}}}{\probone\probtwo}{\numetwo}$}\DisplayProof
  $$
  $$
  \AxiomC{$\dertwo:\nff{{\termtwo}}{\probone}{2\numeone+1}$}
  \AxiomC{$\derthree:\nff{{\termfive\multi{\termsix}}}{\probtwo}{\numetwo}$}
  % \AxiomC{$\numeone\geq 1$}
  \BinaryInfC{$\nff{{(\casezeo{\typone}{\termtwo}{\termthree}{\termfour}{\termfive})\multi{\termsix}}}{\probone\probtwo}{\numetwo}$}\DisplayProof
  $$
  we will focus on the case where the value of $\termtwo$ is odd. All the other cases are similar.
  For all $\termseven\in\derone$ we have:
  \begin{align*}
    \size{\derone} &\le \size{\dertwo}+\size{\derthree} + 1\\
    &\le \sizewonum{\termtwo} + \sizewonum{\termfive\multi\termsix} + 1 \le \sizewonum{\termone}\\
    \sizewonum{\termseven} &\le \sizewonum{\termtwo} + \sizewonum{\termthree} +\sizewonum{\termfour} +\sizewonum{\termfive\multi\termsix} 				\le \sizewonum{\termone}\\
    \sizenum{\termseven}   &= \max{\{ \sizenum{\termtwo} + \sizewonum{\termtwo},\sizenum{\termfive\multi\termsix} 							+\sizewonum{\termfive\multi\termsix}, \sizenum{\termthree}, \sizenum{\termfour}\}}\\
    &\le  \max{\{ \sizenum{\termtwo},\sizenum{\termfive\multi\termsix} 			    ,      		   \sizenum{\termthree}, \sizenum{\termfour}\}} + \sizewonum{\termtwo} +\sizewonum{\termfive\multi\termsix}\\
    & \le  \sizewonum{\termone} + \sizenum{\termone}
  \end{align*}
\end{varitemize}
This concludes the proof.
\end{proof}
As opposed to $\nfu$, $\rfu$ unrolls instances of primitive recursion, and thus cannot have
the very simple combinatorial behaviour of $\nfu$. Fortunately, however, everything stays
under control: 
\begin{proposition}\label{th:recfree}
Suppose that $\varone_1:\masp\N,\ldots,\varone_\natone:\masp\N
%,\vartwo_1:\nmasp\N,\ldots,\vartwo_\nattwo:\nmasp\N
\vdash\termone:\typone$, where
$\typone$ is $\masp$-free type. Then there are polynomials $p_\termone$ and $q_\termone$ 
such that for every $\numeone_1,\ldots,\numeone_\natone$ %and $\numetwo_1,\ldots,\numetwo_\nattwo$
and  for every $\derone:\rff{\subst{\termone}{\multi{\varone}}{\multi{\numeone}}}{\probone}{\termtwo}$ it holds that:
\begin{varenumerate}
\item
  $\size{\derone}\leq p_\termone(\sum_i{\size{n_i}})$;
\item
  If $\termtwo\in\derone$, then $\size{\termtwo}\leq q_\termone(\sum_i{\size{n_i}})$.
\end{varenumerate}
\end{proposition}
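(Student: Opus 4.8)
The plan is to prove, by induction on the size measure $\sizewonum{\termone}$ in which every numeral counts for $1$ (the measure defined in the proof of Theorem~\ref{theo:adequacy}), a strengthening of the statement in which $\termone$ is allowed to carry, besides the $\natone$ distinguished free variables $\varone_1:\masp\N,\ldots,\varone_\natone:\masp\N$ that are replaced by numerals, also free variables of non-modal base type and of higher-order type, which are left untouched; the polynomials $p_\termone,q_\termone$ still depend only on $\termone$, may be taken monotone, and bound everything in terms of $\sum_\indexone\size{\numeone_\indexone}$. This generalization is forced on us because several $\rfu$-rules expose a bound variable: descending under a standalone abstraction, or into the body of $(\abstr{\varone:\nmasp}{\N}{\termtwo})\termthree\multi\termfive$, frees $\varone$, and by the affine discipline of \RSLR\ such a variable, when of higher-order type, occurs at most once, so after $\rfu$ it can be duplicated only by recursion unrolling --- whose multiplicity will turn out polynomial --- and hence contributes polynomially, harmlessly absorbed into $q_\termone$. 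Two preliminary facts are used silently: the term on the right of any $\rfu$-rule is explicit (trivial induction), so Proposition~\ref{th:normalform} applies to it; and $\masp$-freeness of the target type $\typone$ is preserved when we pass, in each rule, to the subterms (and numeral-instances of subterms) to which the induction hypothesis is applied.

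The measure $\sizewonum{\cdot}$ strictly decreases in every recursive appeal: replacing a base variable by a numeral leaves it unchanged (a numeral counts exactly as one variable occurrence), replacing a higher-order argument for a variable that occurs at most once strictly lowers it, and stripping an application layer does too; thus the induction is well founded. For the non-recursive rules --- constants, (possibly applied) variables, $\S0$/$\S1$/$\P$, $\mathtt{case}$, standalone abstraction, and the three $\beta$-style rules --- the argument is routine: $\size{\derone}$ is the sum of the sizes of its immediate subderivations plus $1$, the resulting term is a fixed constructor applied to the results of those subderivations, and the polynomials handed back by the induction hypothesis combine, through sums and compositions, into polynomial bounds for $\derone$ and its result. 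Whenever a rule first evaluates an argument through a premise of the form $\nff{\cdot}{\cdot}{\numeone}$ --- as in the two rules for a base-type abstraction applied to an argument --- Proposition~\ref{th:normalform} bounds that $\nfu$-subderivation by twice, and $\size{\numeone}$ by twice the square of, the size of the (already $\rfu$-evaluated) argument, which the induction hypothesis controls; one only has to note that the evaluated argument is closed, which holds because in that situation the argument is typed over modal variables only.

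The crux is the rule for a recursion $\saferec{\typone}{\termtwo}{\termthree}{\termfour}$ applied to a spine $\multi\termfive$, with recursion argument $\termtwo$, base case $\termthree$ and step function $\termfour$; $\rfu$ unrolls the recursion $\size{\numeone}$ times, where $\numeone$ arises from $\rff{\termtwo}{\cdot}{\termsix}$ (with the distinguished numerals already substituted) followed by $\nff{\termsix}{\cdot}{\numeone}$. The first move is to bound $\size{\numeone}$: the induction hypothesis on the subterm $\termtwo$ gives $\size{\termsix}\le q_{\termtwo}(\sum_\indexone\size{\numeone_\indexone})$, and Proposition~\ref{th:normalform} then yields $\size{\numeone}\le 2\,q_{\termtwo}(\sum_\indexone\size{\numeone_\indexone})^{2}=:M$, a polynomial in $\sum_\indexone\size{\numeone_\indexone}$; it is crucial that $\termsix$ is closed, which holds because by \textsc{(T-Rec)} the recursion argument is typed over modal --- hence distinguished --- variables only, and these vanish under the substitution. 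Each of the $\size{\numeone}\le M$ premises $\rff{\termfour\,\lfloor\numeone/2^\indexone\rfloor}{\cdot}{w_\indexone}$ is then treated by the induction hypothesis applied to $\termfour\,\varthree$, with $\varthree$ a fresh distinguished variable mapped to $\lfloor\numeone/2^\indexone\rfloor$ (a term of $\masp$-free type and of $\sizewonum{\cdot}$ equal to $\sizewonum{\termfour}+1<\sizewonum{\termone}$, so the hypothesis applies), and since $\size{\lfloor\numeone/2^\indexone\rfloor}\le M$ the polynomial it returns, evaluated at $\sum_\indexone\size{\numeone_\indexone}+M$, is again polynomial in $\sum_\indexone\size{\numeone_\indexone}$. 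Summing the contributions of the $\le M$ unrollings with those of $\termthree$ and $\multi\termfive$ delivers the polynomial bounds on $\size{\derone}$ and on the size of the result term, from which $p_\termone$ and $q_\termone$ are read off. The expected obstacle is precisely keeping this polynomial: the step function is $\rfu$-evaluated $M$ times with numerals of size up to $M$ plugged into its modal variables, and were the value of a recursion allowed to reach the \emph{argument} position of a further recursion these blow-ups would iterate into a tower of exponentials; it is exactly the requirements that $\typone$ be $\masp$-free and that $\termfour$ have type $\parr{\masp}{\N}{\parr{\nmasp}{\typone}{\typone}}$ --- recursive call in a non-modal slot --- that forbid this, bounding the depth of such nested numeral-substitutions by a constant depending only on $\termone$ and keeping the overall bound polynomial.
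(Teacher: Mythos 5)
Your overall skeleton matches the paper's: strengthen the statement so that the term may carry extra free variables that are left untouched, induct (the paper uses the type derivation, you use $\sizewonum{\cdot}$, which is a harmless variation), invoke Proposition~\ref{th:normalform} to bound the $\nfu$-premises and the numerals they produce, and handle recursion by applying the induction hypothesis to $\termfour\,\varthree$ with a fresh modal variable instantiated at $\lfloor\numeone/2^\indexone\rfloor$, whose size is polynomially bounded via $q_\termtwo$ and Proposition~\ref{th:normalform}. The treatment of the modal/non-modal base-type $\beta$ rules and the closedness argument for the recursion argument (via the $\aleq\masp$ side conditions) are also as in the paper.

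However, there is a genuine gap in the invariant you induct on. You allow untouched free variables ``of higher-order type'' with no restriction on those types, and you use ``silently'' the claim that $\masp$-freeness of the target type is preserved when passing to the subterms to which the induction hypothesis is applied. Both fail. First, plain $\masp$-freeness is not downward closed under subtyping: since $\aleq$ is contravariant in aspects on the left of arrows, one has for instance $\parr{\nmasp}{(\marr{\N}{\N})}{\N}\aleq\parr{\nmasp}{(\nmarr{\N}{\N})}{\N}$, so in the \textsc{(T-Sub)} case a term of $\masp$-free type may have been typed at a non-$\masp$-free type, and your induction hypothesis no longer applies to it. Second, in the case $\termone\equiv\varone\,\multi{\termtwo}$ with $\varone$ one of your untouched higher-order variables, the $\rfu$-rule evaluates each argument $\termtwo_\indextwo$, whose type is an argument type of $\varone$'s type; since you place no constraint on that type, these arguments need not have $\masp$-free types, and again the induction hypothesis is not applicable, leaving those subderivations unbounded. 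This is exactly the point where the paper's proof does its real technical work: it replaces plain $\masp$-freeness by the \emph{positively}/\emph{negatively} $\masp$-free refinement, requires the term's type to be positively $\masp$-free and the untouched variables' types to be negatively $\masp$-free, and proves that positive $\masp$-freeness is preserved downward (and negative upward) along $\aleq$; with that invariant, the abstraction, subsumption and variable-spine cases all stay inside the class. Without this (or some equivalent) refinement, your induction does not close, so as written the proof has a hole precisely at the key step.
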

\begin{proof}
The following strengthening of the result can be proved by induction on the structure of a type
derivation $\tdone$ for $\termone$: if
$\varone_1:\masp\N,\ldots,\varone_\natone:\masp\N,
\vartwo_1:\nmasp\typone_1,\ldots,\vartwo_\nattwo:\nmasp\typone_\nattwo
\vdash\termone:\typone$, where
$\typone$ is positively $\masp$-free and $\typone_1,\ldots,\typone_\nattwo$
are negatively $\masp$-free. Then there are polynomials $p_\termone$ and $q_\termone$ 
such that for every $\numeone_1,\ldots,\numeone_\natone$ 
and  for every $\derone:\rff{\subst{\termone}{\multi{\varone}}{\multi{\numeone}}}{\probone}{\termtwo}$ 
it holds that
\begin{varenumerate}
\item
  $\size{\derone}\leq p_\termone(\sum_i{\size{n_i}})$;
\item
  If $\termtwo\in\derone$, then $\size{\termtwo}\leq q_\termone(\sum_i{\size{n_i}})$.
\end{varenumerate}
In defining positively and negatively $\masp$-free types, let us proceed by
induction on types:
\begin{varitemize}
\item
  $\N$ is both positively and negatively $\masp$-free;
\item
  $\marr{\typone}{\typtwo}$ is \emph{not} positively $\masp$-free, and
  is negatively $\masp$-free whenever $\typone$ is positively $\masp$-free
  and $\typtwo$ is negatively $\masp$-free;
\item
  $\typthree=\nmarr{\typone}{\typtwo}$ is positively $\masp$-free if
  $\typone$ is negatively and $\typtwo$ is positively $\masp$-free.
  $\typthree$ is negatively $\masp$-free if $\typone$ is positively $\masp$-free
  and $\typtwo$ is negatively $\masp$-free.
\end{varitemize}
Please observe that if $\typone$ is positively $\masp$-free and $\typtwo\aleq\typone$,
then $\typtwo$ is positively $\masp$-free. Conversely, if $\typone$ is negatively
$\masp$-free and $\typone\aleq\typtwo$, then $\typtwo$ is negatively $\masp$-free.
This can be easily proved by induction on the structure of $\typone$.
%%%%%%%%%%%%%%%%% INIZIO DIMOSTRAZIONE REC FREE %%%%%%%%%%%%%%%%%%%%%%%%%%%%%%
We are ready to start the proof, now. Let us consider some cases, depending on the shape of $\tdone$
\begin{varitemize}
%\item 
%  $\termone \equiv \varone$. Independently if $\varone$ has type $\nmasp\N$ or $\masp\N$ 
%  it is easy to check the validity of our thesis.
\item 
  If the only typing rule in $\tdone$ is \textsc{(T-Const-Aff)},
  then $\termone \equiv \constone$, 
  $p_\termone(\varone)\equiv 1$ and $q_\termone(\varone) \equiv 1$. The thesis is proved.
%\item 
%  $\termone \equiv \rand$. Also this case is trivial. Let 
%  $p_\termone(\varone)\equiv q_\termone(\varone) \equiv 1$. The thesis is proved.
\item If the last rule was \textsc{(T-Var-Aff)}
  then $\termone\equiv \varone$, $p_\termone(\varone)\equiv 1$ and $q_\termone(\varone)\equiv \varone$. The thesis is proved
\item If the last rule was \textsc{(T-Arr-I)}
  then $\termone\equiv \abstr{\varone:\nmasp}{\typone}{ \termtwo  }$. Notice that the aspect is $\nmasp$ because the type of our term has to be positively $\masp$-free. So, we have the following derivation:
$$
\AxiomC{$\dertwo: \rff{\subst{\termtwo}{\multi{\varone}}{\multi{\numeone}}}{\probtwo}{\termsix}$}
  \UnaryInfC{$\rff{ \subst{ \abstr{\varone:\aspone}{\typone}{\termtwo}}{\multi{\varone}}{\multi{\numeone} }  }{\probtwo}{\abstr{\varone:\aspone}{\typone}\termsix}$}\DisplayProof
$$

If the type of $\termone$ is positively $\masp$-free, then also the type of $\termtwo$ is positively $\masp$-free. We can apply induction hypothesis. Define $p_\termone$ and $q_\termone$ as:

\begin{align*}
p_\termone(\varone) &\equiv p_\termtwo(\varone)  +1\\
q_\termone(\varone) &\equiv q_\termtwo(\varone) +1  \\
\end{align*}

Indeed, we have:

\begin{align*}
 \size{\derone} &\equiv \size{\dertwo} + 1\\
		&\le  p_\termtwo(\sum_i{\size{n_i}}) + 1
\end{align*}

\item If last rule was \textsc{(T-Sub)} then we have a typing derivation that ends in the following way:

$$
\AxiomC{$\conone \vdash \termone:\typone$}
 \AxiomC{$\typone \aleq \typtwo$}
\BinaryInfC{$\conone\vdash \termone:\typtwo$}
\DisplayProof
$$

we can apply induction hypothesis on $\termone:\typone$ because if $\typtwo$ is positively $\masp$-free, then also $\typone$ will be too.
Define $p_{\termone:\typtwo}(\varone) \equiv p_{\termone:\typone}(\varone)$ and $q_{\termone:\typtwo}(\varone) \equiv q_{\termone:\typone}(\varone)$.

\item If the last rule was \textsc{(T-Case)}.  Suppose $\termone \equiv (\casezeo{\typone}{\termtwo}{\termthree}{\termfour}{\termfive})$.
  The constraints on the typing rule \textsc{(T-Case)} ensure us that 
  the induction hypothesis can be applied to $\termtwo,\termthree,\termfour,\termfive$.
  The definition of $\rfu$ tells us that any derivation of 
  $\subst{\termone}{\multi{\varone}}{\multi{\numeone}}$ must have the following shape:
  $$
  \AxiomC{$\dertwo:\rff{\subst{\termtwo}{\multi{\varone}}{\multi{\numeone}}}{\probone}{\termseven}$}
  \noLine\UnaryInfC{$\derthree:\rff{\subst{\termthree}{\multi{\varone}}{\multi{\numeone}}}{\probtwo}{\termeight}$}
  \AxiomC{$\derfour:\rff{\subst{\termfour}{\multi{\varone}}{\multi{\numeone}}}{\probthree}{\termnine}$}
  \noLine\UnaryInfC{$\derfive:\rff{\subst{\termfive}{\multi{\varone}}{\multi{\numeone}}}{\probfour}{\termten}$}
  \BinaryInfC{$\rff{\subst{\termone}{\multi{\varone}}{\multi{\numeone}}}  
    {\probone\probtwo\probthree\probfour}
    {(\casezeo{\typone}{\termseven}{\termeight}{\termnine}{\termten}) }$}\DisplayProof
  $$
  Let us now define $\polyone_\termone$ and $\polytwo_\termone$ as follows:
  \begin{align*}
    \polyone_\termone(\varone)&=\polyone_\termtwo(\varone)+\polyone_\termthree(\varone)
       +\polyone_\termfour(\varone)+\polyone_\termfive(\varone)+1\\
    \polytwo_\termone(\varone)&=\polytwo_\termtwo(\varone)+\polytwo_\termthree(\varone)
       +\polytwo_\termfour(\varone)+\polytwo_\termfive(\varone)+1
  \end{align*}
  We have:
  \begin{eqnarray*}
    \size{\derone} &\le& \size{\dertwo}+\size{\derthree}+\size{\derfour}+\size{\derfive}+1\\
    &\le& p_\termtwo(\sum_i{\size{n_i}}) +    p_\termthree(\sum_i{\size{n_i}})+ p_\termfour(\sum_i{\size{n_i}})
          +p_\termfive(\sum_i{\size{n_i}})+1\\
    &=&\polyone_\termone(\sum_i{\size{n_i}}).
  \end{eqnarray*}
  Similarly, if $\termseven\in\derone$, it is easy to prove that $\size{\termseven}\leq\polytwo_\termseven(\sum_i{\size{n_i}})$.

\item If the last rule was \textsc{(T-Rec)}. Suppose $\termone \equiv (\saferec{\typone}{\termtwo}{\termthree}{\termfour})$. By looking at the typing rule (figure \ref{fig:typerules}) for \textsc{(T-Rec)} we are sure to be able to apply induction hypothesis on $\termtwo,\termthree,\termfour$. Definition of $\rfu$ ensure also that any derivation for $\subst{\termone}{\multi\varone}{\multi\numeone}$ must have the following shape:
$$
\AxiomC{$\dertwo: \rff{\subst{\termtwo}{\multi{\varone}}{\multi{\numeone} }}{\probone}{\termseven}$
\,\,\,\,\,$\derthree:\nff{\subst{\termseven}{\multi{\varone}}{\multi{\numeone} }}{\probtwo}{\numeone}$}\noLine
\UnaryInfC{$\derfour:\rff{\subst{\termthree}{\multi{\varone}}{\multi{\numeone} }}{\probthree}{\termeight}$}
% da qui inizia l'applicazione nella ricorsione
\noLine\UnaryInfC{$\dersix_0:\rff{\subst{\termfour \varthree }{\multi{\varone},\varthree}{\multi{\numeone},{\lfloor \frac{\numeone}{2^0}  \rfloor} }}{\probthree_0}{\termfour_0}$}
\noLine\UnaryInfC{\ldots}
\noLine\UnaryInfC{$\dersix_{\size{\numeone} -1}:\rff{\subst{\termfour \varthree  }{\multi{\varone},\varthree}{\multi{\numeone},{\lfloor \frac{\numeone}{2^{\size{\numeone} -1}}  \rfloor} }  }{\probthree_{\size{\numeone} -1}}{\termfour_{\size{\numeone} -1}}$  }
% qui finisce la ricorsione
\UnaryInfC{$\rff{\subst{(\saferec{\typone}{\termtwo}{\termthree}{\termfour}) }{\multi{\varone}}{\multi{\numeone} }}{\probone\probtwo\probthree(\prod_j{\probthree_j}) }{ \termfour_0 ( \ldots (\termfour_{(\size{\numeone}-1)} \termeight) \ldots ) }$}\DisplayProof
$$

Notice that we are able to apply $\nff{}{}{}$ on term $\termseven$ because, by definition, $\termtwo$ has only free variables of type $\masp\N$ (see figure \ref{fig:typerules}). So, we are sure that $\termseven$ is a closed term of type $\N$ and we are able to apply the $\nff{}{}{}$ algorithm.

Let define $p_t$ and $q_t$ as follows:
  \begin{eqnarray*}
   p_\termone(\varone) &\equiv& p_\termtwo(\varone) + 2\cdot q_\termtwo(\varone) + p_\termthree(\varone) + q_\termtwo(\varone)\cdot 			p_\termfour(\varone) +  1
\\
    q_\termone(\varone) &\equiv&  q_{\termtwo}(\varone )+q_{\termthree}( \varone )+
	   2\cdot q_{\termtwo}( \varone )^2 + q_{\termfour}(\varone + 2\cdot q_{\termtwo}( \varone )^2 ) 
  \\
  \end{eqnarray*}

Notice that $\size{\termseven}$ is bounded by $q_\termtwo(\varone)$
Notice that by applying theorem \ref{th:normalform} on $\derthree$ ($\termseven$ has no free variables) we have that every $\termsix \in \derthree$ is s.t.$\termsix\le p_{\termseven}(\size{n_1},\ldots,\size{n_\indexone})$.
We will refer to  $p_{\termseven}( \varone )$ to intend $p_{\termseven}(\varone,\ldots,\varone)$.

We have:

\begin{eqnarray*}
 \size{\derone} &\le& \size{\dertwo}+\size{\derthree}+\size{\derfour}+\sum_i{(\size{\dersix_i})} +1\\
		&\le& p_\termtwo(\sum_i{\size{n_i}})+ 2\cdot\size{\termseven} +p_\termthree(\sum_i{\size{n_i}}) 
		      + \size{\numeone}\cdot {p_{  \termfour  }   (\sum_i{\size{n_i}})} +1 \\
		&\le& p_\termtwo(\sum_i{\size{n_i}})+ 2\cdot q_\termtwo(\sum_i{\size{n_i}})   +p_\termthree(\sum_i{\size{n_i}})  	
		      + q_{\termtwo}(\sum_i{\size{n_i}})  \cdot {p_{  \termfour  }   (\sum_i{\size{n_i}})} +1 \\\\
\end{eqnarray*}

Similarly, for every $\termseven\in\derone$:
\begin{eqnarray*}
 \size\termseven &\le& q_\termtwo(\sum_i{\size{n_i}})  +  2\cdot q_{\termtwo}(\sum_i{\size{n_i}})^2 + q_\termthree(\sum_i{\size{n_i}}) + q_{\termfour\varthree}(\sum_i{\size{n_i}}  +  \size{\numeone} )  \\
		 &\le& q_\termtwo(\sum_i{\size{n_i}})  +  2\cdot q_{\termseven}(\sum_i{\size{n_i}})^2 + q_\termthree(\sum_i{\size{n_i}}) + q_{\termfour\varthree}(\sum_i{\size{n_i}}  +  q_{\termtwo}(\sum_i{\size{n_i}})^2  )  \\
\end{eqnarray*}

\item In the following cases the last rule is \textsc{(T-Arr-E)}.

\item $\termone \equiv \varone \multi\termtwo$. In this case, obviously, the free variable $\varone$ has type $\nmasp\typone_\indexone$ ($1\le \indexone\le \indextwo$). By definition $\varone$ is negatively $\masp$-free. This it means that every term in $\multi\termtwo$ has a type that is positively $\masp$-free. By knowing that the type of $\varone$ is negatively $\masp$-free, we conclude that the type of our term $\termone$ is $\masp$-free (because is both negatively and positively $\masp$-free at the same time).

Definition of $\rfu$ ensures us that the derivation will have the following shape:
$$
\AxiomC{$\dertwo_i:\rff{\subst{\termtwo_j}{\multi\varone  }{\multi\numeone} }{\probone_j}{\termthree_j}$}
  \UnaryInfC{$\rff{  \subst{\varone\multi{\termtwo}}{\multi\varone  }{\multi\numeone}   }{\prod_i\probone_i}{\varone\multi{\termthree}}$}\DisplayProof
$$
We define $p_\termone$ and $q_\termone$ as:
\begin{align*}
 p_\termone(\varone) &\equiv \sum_j p_{\termtwo_j}(\varone) +1 \\
q_\termone(\varone) &\equiv \sum_j q_{\termtwo_j}(\varone) +1
\end{align*}
Indeed we have 
\begin{align*}
 \size{\derone} &\le \sum_j\size{\dertwo_j} +1\\
		&\le \sum_j\{ p_{\termone_j}(\sum_i{\size{n_i}}) \}+1
 \end{align*}

  Similarly, if $\termseven\in\derone$, it is easy to prove that $\size{\termseven}\leq\polytwo_\termseven(\sum_i{\size{n_i}})$.
\item 
  If $\termone \equiv \S{0} \termtwo$, then $\termtwo$ have type $\N$ in
  the context $\conone$. The derivation $\derone$ has the following form
  $$
  \AxiomC{$\dertwo:\rff{\subst{\termtwo}{\multi{\varone}}{\multi{\numeone}}}{\probone}{\termseven}$}
  \UnaryInfC{$\rff{\S0\subst{\termtwo}{\multi{\varone}}{\multi{\numeone}}}{\probone}{\S0 \termseven}$}
  \DisplayProof
  $$
  Define $\polyone_\termone(\varone)=\polyone_\termtwo(\varone)+1$ and
  $\polytwo_\termone(\varone)=\polytwo_\termtwo(\varone)+1$.
  One can easily check that, by induction hypothesis
  \begin{align*}
    \size{\derone} &\le \size{\dertwo} +1\le \polyone_\termtwo(\sum_i{\size{n_i}}) +1\\
        &=\polyone_\termone(\sum_i{\size{n_i}}).
  \end{align*}
  Analogously, if $\termthree\in\derone$ then 
  $$
  \size{\termtwo}\leq\polytwo_\termtwo(\sum_i{\size{n_i}})+1\leq\polytwo_\termone(\sum_i{\size{n_i}}).
  $$
\item
  If $\termone \equiv \S{1} \termtwo$ or $\termone\equiv \P\termtwo$, then we can proceed exactly as
  in the previous case.
\item Cases where we have on the left side a case or a recursion with some arguments, is trivial: can be brought back to cases that we have considered.
%%%%SUBSTIT CON EMPTY BOX
\item If $\termone$ is ${(\abstr{\varone:\masp}{\N}{\termtwo})\termthree\multi{\termfour}}$,
then we have the following derivation:
$$
\AxiomC{$\dertwo: \rff{\subst{\termthree}{\multi{\varone}}{\multi{\numeone} }}{\probone}{\termeight}$}\noLine
\UnaryInfC{$\derthree: \nff{\subst{\termeight}{\multi{\varone}}{\multi{\numeone}}}{\probthree}{\numeone}$}
\AxiomC{$\derfour: \rff{ \subst{ (\subst{\termtwo}{\varone}{\numeone})\multi{\termfour} }{\multi{\varone}}{\multi{\numeone}}}  {\probtwo}{ \termsix }$}
  \BinaryInfC{$\rff{ \subst{   (\abstr{\varone:\masp}{\N}{\termtwo})\termthree\multi{\termfour} }{\multi{\varone}}{\multi{\numeone} }}  {\probone\probthree\probtwo}{\termsix}$}\DisplayProof
$$

By hypothesis $\termone$ is positively $\masp$-free and so also $\termthree$ (whose type is $\N$) and $\termtwo\multi\termfour$ are positively $\masp$-free.
So, we are sure that we are able to use induction hypothesis. 

Let $p_\termone$ and $q_\termone$ be:
\begin{align*}
 p_\termone(\varone)&\equiv  p_\termthree(\varone) + 2\cdot q_\termthree(\varone) + p_{\termtwo\multi\termfour}(\varone +  2\cdot q_\termthree(\varone) ) + 1\\
 q_\termone(\varone)&\equiv  q_{\termtwo\multi\termfour}(\varone +  2\cdot q_\termthree(\varone)^2   )  + q_\termthree(\varone)+ 2\cdot q_{\termthree}(\varone)^2 +1  \\
\end{align*}

We have:
\begin{align*}
 \size{\derone} &\equiv \size{\dertwo} + \size{\derthree} + \size{\derfour}+ 1\\
		&\le p_\termthree(\sum_i{\size{n_i}}) + 2\cdot\size{\termeight}+  p_{\termtwo\multi\termfour}(\sum_i{\size{n_i}}  + \size{\numeone} ) + 1\\
		&\le p_\termthree(\sum_i{\size{n_i}}) + 2\cdot q_\termthree(\sum_i{\size{n_i}})  +  p_{\termtwo\multi\termfour}(\sum_i{\size{n_i}} +  2\cdot q_\termthree(\sum_i{\size{n_i}}) ) + 1
\end{align*}

By applying induction hypothesis we have that 
every $\termsix \in \dertwo$ is s.t. $\size{\termsix}$ $\le$ $q_\termthree(\sum_i{\size{n_i}}) $, 
every $\termsix \in \derfour$ is s.t. 
\begin{eqnarray*}
\size{\termsix} &\le& q_{\termtwo\multi\termfour}(\sum_i{\size{n_i}} + \size{\numeone}) \\
&\le& q_{\termtwo\multi\termfour}(\sum_i{\size{n_i}} + 2\cdot \size{\termeight}^2 \\
&\le& q_{\termtwo\multi\termfour}(\sum_i{\size{n_i}} +  2\cdot q_\termthree(\sum_i{\size{n_i}})^2   ) 
\end{eqnarray*}

By construction, remember that $\termtwo$ has no free variables of type $\nmasp\N$.

For theorem \ref{th:normalform} ($\termseven$ has no free variables) we have $\termsix \in \derthree$ is s.t. $\size{\termsix}$ $\le$ $q_{\termeight}(\sum_i{\size{n_i}})$.

We can prove the second point of our thesis by setting $q_\termone(\sum_i{\size{n_i}})$ as $q_{\termtwo\multi\termfour}(\sum_i{\size{n_i}} +  q_\termthree(\sum_i{\size{n_i}})   )  + q_\termthree(\sum_i{\size{n_i}})+ q_{\termeight}(\sum_i{\size{n_i}}) +1$.

%%%%%%%%%%%%%%%%% ABSTR CON FULLBOX
\item If $\termone$ is ${(\abstr{\varone:\nmasp}{\N}{\termtwo})\termthree\multi{\termfour}}$, then we have the following derivation:

$$
\AxiomC{$\dertwo: \rff{\subst{\termthree}{\multi{\varone}}{\multi{\numeone} }}{\probone}{\termeight}$}\noLine
\UnaryInfC{$\derthree: \nff{\subst{\termeight}{\multi{\varone}}{\multi{\numeone} }}{\probthree}{\numeone}$}
\AxiomC{$\derfour: \rff{\subst{\termtwo\multi\termfour }{\multi{\varone}}{\multi{\numeone} } }{\probtwo}{ \termfive }$}
  \BinaryInfC{$\rff{\subst{(\abstr{\varone:\nmasp}{\N}{\termtwo})\termthree\multi{\termfour} }{\multi{\varone}}{\multi{\numeone} }   }{\probone\probthree\probtwo}{(\abstr{\varone:\nmasp}{\N}{\termfive})\numeone}$}\DisplayProof
$$
By hypothesis we have $\termone$ that is positively $\masp$-free. So, also $\termthree$ and $\termeight$ (whose type is $\N$) and $\termtwo\multi\termfour$ are positively $\masp$-free.
We define $p_\termone$ and $q_\termone$ as:
\begin{align*}
 p_\termone(\varone) &\equiv p_{\termthree}(\varone) + 2\cdot q_\termthree(\varone) + p_{ \termtwo\multi\termfour  }(\varone) +1;\\
 q_\termone(\varone) &\equiv q_{\termthree}(\varone) + 2\cdot q_{\termthree}(\varone)^2 +  q_{\termtwo\multi\termfour}(\varone) +1.
\end{align*}
We have:
\begin{align*}
 \size{\derone} &\equiv \size{\dertwo} + \size{\derthree} + \size{\derfour} + 1\\
		&\le  p_\termthree(\sum_i{\size{n_i}}) + 2\cdot q_\termthree(\sum_i{\size{n_i}}) + p_{\termtwo\multi\termfour}(\sum_i{\size{n_i}}) +1
\end{align*}

Similarly, if $\termseven\in\derone$, it is easy to prove that $\size{\termseven}\leq\polytwo_\termseven(\sum_i{\size{n_i}})$.
%%%HIGHORDER
\item If $\termone$ is ${(\abstr{\varone:\aspone}{\htypone}{\termtwo})\termthree\multi{\termfour}}$,
then we have the following derivation:
$$
\AxiomC{$\dertwo: \rff{ \subst{ (\subst{\termtwo}{\varone}{\termthree})\multi{\termfour}}{\multi{\varone}}{\multi{\numeone} }}{\probtwo}{\termsix}$}
  \UnaryInfC{$\rff{ \subst{ (\abstr{\varone:\aspone}{\htypone}{\termtwo})\termthree\multi{\termfour} }{\multi{\varone}}{\multi{\numeone} }  }{\probtwo}{\termsix}$}\DisplayProof
$$

By hypothesis we have $\termone$ that is positively $\masp$-free. So, also $\termtwo\multi\termfour$ is positively $\masp$-free. $\termthree$ 
has an higher-order type $\htypone$ and so we are sure that $\size{ (\subst{\termtwo}{\varone}{\termthree})\multi{\termfour} } < 
\size{(\abstr{\varone:\aspone}{\htypone}{\termtwo})\termthree\multi{\termfour}}$. Define $p_\termone$ and $q_\termone$ as:
\begin{align*}
 p_\termone(\varone ) &\equiv p_{(\subst{\termtwo}{\varone}{\termthree})\multi{\termfour} }(\varone) + 1;\\
q_\termone(\varone) &\equiv q_{ (\subst{\termtwo}{\varone}{\termthree})\multi{\termfour}}(\varone) +1.
\end{align*}
By applying induction hypothesis we have:
$$
 \size{\derone}\equiv \size{\dertwo} +1\le p_{(\subst{\termtwo}{\varone}{\termthree})\multi{\termfour} }(\sum_i{\size{n_i}}) + 1
$$
By using induction we are able also to prove the second point of our thesis.
\end{varitemize}
This concludes the proof.
\end{proof}

Following the definition of $\rfnfu$, it is quite easy to obtain, given a first order term $\termone$, of arity $k$, a probabilistic
Turing machine that, when receiving on input (an encoding of) $n_1\ldots n_k$, produces on output $\numetwo$ with probability
equal to $\distrone(\numetwo)$, where $\distrone$ is the (unique!) distribution such that $\termone\pmred\distrone$. Indeed,
$\rfu$ and $\nfu$ are designed in a very algorithmic way. Moreover, the obtained Turing machine works in polynomial time, due
to propositions~\ref{th:normalform} and~\ref{th:recfree}. Formally:
\begin{theorem}[Soundness]\label{theo:sound}
Suppose $\termone$ is a first order term of arity $k$. Then there is a probabilistic Turing machine $M_t$ running in polynomial time 
such that $M_t$ on input $\numeone_1\ldots \numeone_k$ returns $\numetwo$ with probability exactly 
$\distrone(m)$, where $\distrone$ is a probability distribution such that $\termone n_1\ldots n_k \pmred \distrone$.
\end{theorem}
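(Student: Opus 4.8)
The plan is to read the big--step operational semantics $\rfnfu$ as the specification of an algorithm: construct a probabilistic Turing machine that evaluates $\termone\,\numeone_1\cdots\numeone_k$ by applying the inference rules for $\rfu$ and $\nfu$, tossing a fair coin each time it must fire an occurrence of $\rand$, then read off the time bound from Propositions~\ref{th:normalform} and~\ref{th:recfree} and correctness from the Adequacy Theorem~\ref{theo:adequacy}.

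First I would record the size bounds. Because $\termone$ is first-order of arity $k$ and $\masp\aleq\aspone$ holds for every aspect $\aspone$, the term $u\equiv\termone\,\varone_1\cdots\varone_k$ is typable with $\varone_1:\masp\N,\ldots,\varone_k:\masp\N\vdash u:\N$, and $\subst{u}{\multi{\varone}}{\multi{\numeone}}=\termone\,\numeone_1\cdots\numeone_k$. Since $\N$ is $\masp$-free, Proposition~\ref{th:recfree} applied to $u$ supplies polynomials $p_u,q_u$ with $\size{\derone}\le p_u(\sum_i\size{\numeone_i})$ for every $\derone:\rff{\termone\,\numeone_1\cdots\numeone_k}{\probone}{\termtwo}$, and $\size{\termthree}\le q_u(\sum_i\size{\numeone_i})$ for every $\termthree\in\derone$; in particular the produced $\termtwo$ is explicit, closed, of type $\N$ (typing and closedness are preserved, $\rfu$ being consistent with $\red$), and of size at most $q_u(\sum_i\size{\numeone_i})$. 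Feeding such a $\termtwo$ to Proposition~\ref{th:normalform} bounds every $\derfive:\nff{\termtwo}{\probtwo}{\numetwo}$ by $\size{\derfive}\le 2\,q_u(\sum_i\size{\numeone_i})$, with every term appearing in it of size at most $2\,q_u(\sum_i\size{\numeone_i})^2$. Composing the two relations, every derivation of $\rfnff{\termone\,\numeone_1\cdots\numeone_k}{\probone}{\numetwo}$ has size, and mentions only terms of size, polynomial in $\sum_i\size{\numeone_i}$, hence polynomial in the length of the input, since $k$ is fixed and encoding $\numeone_1\ldots\numeone_k$ takes $\Theta(\sum_i\size{\numeone_i})$ bits.

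Next I would describe $M_\termone$. On input (an encoding of) $\numeone_1\ldots\numeone_k$ it evaluates $\termone\,\numeone_1\cdots\numeone_k$ by applying the rules of $\rfu$ and, where those call for it, of $\nfu$. A routine inspection shows these rules are syntax-directed --- exactly one applies to a given term, the three case-clauses and two $\P$-clauses of $\nfu$ being disambiguated by which numeral the scrutinee evaluates to --- so the run is deterministic except that each evaluation of an occurrence of $\rand$ (axioms $\nff{\rand}{1/2}{0}$ and $\nff{\rand}{1/2}{1}$) consumes one fresh random bit. A maximal run of $M_\termone$ thus follows exactly one derivation $\rfnff{\termone\,\numeone_1\cdots\numeone_k}{\probone}{\numetwo}$, is taken with probability $2^{-r}=\probone$ where $r$ is the number of $\rand$'s fired along it (no other rule carries a probability factor different from $1$), and outputs the numeral $\numetwo$ labelling its conclusion. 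Each such derivation is finite, so $M_\termone$ always halts; by the bounds above each run has polynomially many steps, every step (numeral arithmetic $2\numeone$, $2\numeone+1$, $\lfloor\numeone/2\rfloor$, matching against $\zero,\S0,\S1$, capture-avoiding substitution) being performed on terms of polynomial size, so $M_\termone$ runs in polynomial time.

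Finally, correctness. The maximal runs of $M_\termone$ realize a family of distinct $\rfnfu$-derivations from $\termone\,\numeone_1\cdots\numeone_k$ whose probabilities sum to $1$ (the execution tree is binary-branching with only finite branches, hence finite). By the Adequacy Theorem~\ref{theo:adequacy}, implication~$\ref{cond:der}\Rightarrow\ref{cond:distr}$, this family induces a distribution $\distrtwo$ with $\termone\,\numeone_1\cdots\numeone_k\pmred\distrtwo$ such that $\distrtwo(\numetwo)$ equals the sum of the probabilities of the runs that output $\numetwo$; by uniqueness of the normal-form distribution (Multistep Confluence, Theorem~\ref{theo:confluence-multi-step}) $\distrtwo$ coincides with the $\distrone$ of the statement. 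Since $M_\termone$ outputs $\numetwo$ exactly on those runs, $\Pr[M_\termone\text{ outputs }\numetwo]=\distrone(\numetwo)$, as required. The substantial work --- the polynomial size bounds --- is already packaged in Propositions~\ref{th:normalform} and~\ref{th:recfree}; the one point I expect to still need care is the verification that $\rfu$ and $\nfu$ are genuinely deterministic up to the coins used for $\rand$, so that ``follow the derivation'' is a well-defined polynomial-time procedure and the probability bookkeeping is exact.
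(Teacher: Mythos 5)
Your proposal is correct and follows essentially the same route as the paper: the paper's proof of Theorem~\ref{theo:sound} consists precisely in observing that $\rfu$ and $\nfu$ are defined algorithmically, so a probabilistic Turing machine can follow them (tossing a coin at each $\rand$), with the polynomial time bound given by Propositions~\ref{th:normalform} and~\ref{th:recfree} and correctness by Adequacy (Theorem~\ref{theo:adequacy}) together with multistep confluence. Your write-up merely makes explicit the bookkeeping (typability of $\termone\,\varone_1\cdots\varone_k$ under modal variables, syntax-directedness of the rules, probabilities summing to $1$) that the paper leaves implicit.
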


\begin{proof} 
By propositions~\ref{th:normalform} and~\ref{th:recfree}. 
\end{proof}

\begin{sidewaystable}\begin{scriptsize}
\begin{example} Let's see now an example about how the two machines $\rfu$ and $\nfu$ works. Suppose to have the following $\termone$ term:

\[
(\abstr{\varthree:\nmasp}{\N}
  {
   \abstr{\varfour:\masp}{\N}
      {
      \recursion{\N}{ \varthree }{ \varfour }
	{
	  ( \abstr{\varone:\masp}{\N}
		{(
		\abstr{\vartwo:\nmasp}{\N} 
		    {
		    \casezeo{\nmasp\N\rightarrow \N}{\rand}{\S1}{\S1}{\S0} 
		    }		
		)}
	  \vartwo)
	}
      }	
  })
(10)(1110)
\]

\begin{mbox}
For simplify reading let define:
\begin{varitemize}
\item Be $\funtwo \equiv {  (\casezeo{\nmasp\N\rightarrow \N}{\rand}{\S1}{\S1}{\S0})}$.
\item Be $\funone \equiv \abstr{\varone:\masp}{\N}{\abstr{\vartwo:\nmasp}{\N}{  (\casezeo{\nmasp\N\rightarrow \N}{\rand}{\S1}{\S1}{\S0}) \vartwo}}$.

\end{varitemize}
% \vspace{30pt}
\begin{align*}
\derone:
\AxiomC{$\rff{\S1}{1}{\S1}$}
\noLine\UnaryInfC{$\rff{\S0}{0}{\S0}$}
\AxiomC{$\rff{\rand}{1}{\rand}$}
\noLine\UnaryInfC{$\rff{\S1}{1}{\S1}$}
\AxiomC{$\rff{\vartwo}{1}{\vartwo}$}
\TrinaryInfC{$\rff{ {  (\casezeo{\nmasp\N\rightarrow \N}{\rand}{\S1}{\S1}{\S0}) \vartwo}}{1}{ {  (\casezeo{\nmasp\N\rightarrow \N}{\rand}{\S1}{\S1}{\S0}) \vartwo} } $}
\UnaryInfC{$\rff{\abstr{\vartwo:\nmasp}{\N}{ \funtwo\vartwo }}{1}{\abstr{\vartwo:\nmasp}{\N}{ \funtwo\vartwo } }  $}\DisplayProof
\hspace{5pt}&\hspace{5pt}
\dertwo_0:\,\,
\AxiomC{$\rff{1110}{1}{1110}$}
\noLine\UnaryInfC{$\nff{1110}{1}{1110}$}
\AxiomC{$\derone:\rff{\abstr{\vartwo:\nmasp}{\N}{ \funtwo\vartwo }}{1}{\abstr{\vartwo:\nmasp}{\N}{ \funtwo\vartwo } }  $  }
\BinaryInfC{$\rff{\funone1110}{1}{ \abstr{\vartwo:\nmasp}{\N}{ \funtwo\vartwo }  }  $}
\DisplayProof
\end{align*}
\begin{align*}
\dertwo_1:\,\,
\AxiomC{$\rff{111}{1}{111}$}
\noLine\UnaryInfC{$\nff{111}{1}{111}$}
\AxiomC{$\derone:\rff{\abstr{\vartwo:\nmasp}{\N}{ \funtwo\vartwo }}{1}{\abstr{\vartwo:\nmasp}{\N}{ \funtwo\vartwo } }  $  }
\BinaryInfC{$\rff{\funone111}{1}{ \abstr{\vartwo:\nmasp}{\N}{ \funtwo\vartwo }  }  $}
\DisplayProof
\hspace{5pt}&\hspace{5pt}
\dertwo_3:\,\,
\AxiomC{$\rff{11}{1}{11}$}
\noLine\UnaryInfC{$\nff{11}{1}{11}$}
\AxiomC{$\derone:\rff{\abstr{\vartwo:\nmasp}{\N}{ \funtwo\vartwo }}{1}{\abstr{\vartwo:\nmasp}{\N}{ \funtwo\vartwo } }  $  }
\BinaryInfC{$\rff{\funone11}{1}{ \abstr{\vartwo:\nmasp}{\N}{ \funtwo\vartwo }  }  $}
\DisplayProof
\hspace{5pt}&\hspace{5pt}
\dertwo_4:\,\,
\AxiomC{$\rff{1}{1}{1}$}
\noLine\UnaryInfC{$\nff{1}{1}{1}$}
\AxiomC{$\derone:\rff{\abstr{\vartwo:\nmasp}{\N}{ \funtwo\vartwo }}{1}{\abstr{\vartwo:\nmasp}{\N}{ \funtwo\vartwo } }  $  }
\BinaryInfC{$\rff{\funone1}{1}{ \abstr{\vartwo:\nmasp}{\N}{ \funtwo\vartwo }  }  $}
\DisplayProof
\end{align*}

% \vspace{5pt}
\begin{prooftree}
\AxiomC{$\rff{1110}{1}{1110}$}
\noLine\UnaryInfC{$\nff{1110}{1}{1110}$}
\AxiomC{$\dertwo_0:\rff{\funone1110}{1}{ \abstr{\vartwo:\nmasp}{\N}{ \funtwo\vartwo }  }    $}
\noLine\UnaryInfC{$\dertwo_1: \rff{\funone111}{1}{ \abstr{\vartwo:\nmasp}{\N}{ \funtwo\vartwo }  }    $}
\AxiomC{$\dertwo_3: \rff{\funone11}{1}{ \abstr{\vartwo:\nmasp}{\N}{ \funtwo\vartwo }  }    $}
\noLine\UnaryInfC{$\dertwo_4: \rff{\funone1}{1}{ \abstr{\vartwo:\nmasp}{\N}{ \funtwo\vartwo }  }    $}
\AxiomC{$\rff{\varfour}{1}{\varfour}$}
\noLine\UnaryInfC{$\rff{1110}{1}{1110}$}
\noLine\UnaryInfC{$\nff{1110}{1}{1110}$}
\TrinaryInfC{$\rff{{ \recursion{\N}{ 1110 }{ \varfour }
  { ( \abstr{\varone:\masp}{\N}{\abstr{\vartwo:\nmasp}{\N} }} { (\casezeo{\nmasp\N\rightarrow \N}{\rand}{\S1}{\S1}{\S0}) \vartwo} )}}{1}{ (\abstr{\vartwo:\nmasp}{\N}{ \funtwo\vartwo }) ((\abstr{\vartwo:\nmasp}{\N}{ \funtwo\vartwo }) ((\abstr{\vartwo:\nmasp}{\N}{ \funtwo\vartwo })((\abstr{\vartwo:\nmasp}{\N}{ \funtwo\vartwo })\varthree))) }$}

%second last
\BinaryInfC{$ \rff{\abstr{\varfour:\masp}{\N}{  { \recursion{\N}{ \varthree }{ \varfour }
  { ( \abstr{\varone:\masp}{\N}{\abstr{\vartwo:\nmasp}{\N} }} { (\casezeo{\nmasp\N\rightarrow \N}{\rand}{\S1}{\S1}{\S0}) \vartwo} )}}    (1110)}
{1}
{((\abstr{\vartwo:\nmasp}{\N}{ \funtwo\vartwo }) ((\abstr{\vartwo:\nmasp}{\N}{ \funtwo\vartwo }) ((\abstr{\vartwo:\nmasp}{\N}{ \funtwo\vartwo })((\abstr{\vartwo:\nmasp}{\N}{ \funtwo\vartwo })\varthree))) ) }  $}
%last line
\AxiomC{$\nff{10}{1}{1}$}
\noLine\UnaryInfC{$\rff{10}{1}{1}$}
\BinaryInfC{$  
\rff{ \abstr{\varthree:\nmasp}{\N}{ {\abstr{\varfour:\masp}{\N}{  { \recursion{\N}{ \varthree }{ \varfour }
  { ( \abstr{\varone:\masp}{\N}{\abstr{\vartwo:\nmasp}{\N} }} { (\casezeo{\nmasp\N\rightarrow \N}{\rand}{\S1}{\S1}{\S0}) \vartwo} )}}    (10)(1110)  }  }}
{1}
{\abstr{\varthree:\nmasp}{\N}{ {((\abstr{\vartwo:\nmasp}{\N}{ \funtwo\vartwo }) ((\abstr{\vartwo:\nmasp}{\N}{ \funtwo\vartwo }) ((\abstr{\vartwo:\nmasp}{\N}{ \funtwo\vartwo })((\abstr{\vartwo:\nmasp}{\N}{ \funtwo\vartwo })\varthree))) ) }  } (10) }
$}
\end{prooftree}
% \vspace{5pt}
\end{mbox}

Then, by applying the machine for $\nfu$ we could obtain the following derivation tree. Recall that, for the reason we have $\rand$ inside our term, there will be more than one possible derivation tree.

\begin{prooftree}
\AxiomC{$\nff{10}{1}{10}$}
    \AxiomC{$\nff{\rand}{1/2}{1}$}
    \AxiomC{$\nff{\S00}{1}{100}$}
    \BinaryInfC{$\nff{{  \funtwo (10) }}{1/2 }{100}$}
\BinaryInfC{$\nff{(\abstr{\vartwo:\nmasp}{\N}{ \funtwo\vartwo })10  }{1/2}{100}$}
    \AxiomC{$\nff{\rand}{1/2}{0}$}
    \AxiomC{$\nff{\S1 100}{1}{1001}$}
    \BinaryInfC{$\nff{{  \funtwo (100) }}{1/2 }{1001}$}
\BinaryInfC{$\nff{(\abstr{\vartwo:\nmasp}{\N}{ \funtwo\vartwo })((\abstr{\vartwo:\nmasp}{\N}{ \funtwo\vartwo })10)  }{1/4}{1001}$}
    \AxiomC{$\nff{\rand}{1/2}{0}$}
    \AxiomC{$\nff{\S1 1001}{1}{1001}$}
    \BinaryInfC{$\nff{{  \funtwo (1001) }}{1/2 }{10011}$}
\BinaryInfC{$\nff{(\abstr{\vartwo:\nmasp}{\N}{ \funtwo\vartwo }) ((\abstr{\vartwo:\nmasp}{\N}{ \funtwo\vartwo })((\abstr{\vartwo:\nmasp}{\N}{ \funtwo\vartwo })10))}{1/8}{10011} $ }
    \AxiomC{$\nff{\rand}{1/2}{1}$}
    \AxiomC{$\nff{\S0 10011}{1}{100110}$}
    \BinaryInfC{$\nff{{  \funtwo (10011) }}{1/2 }{100110}$}
\BinaryInfC{$  \nff{ {(\abstr{\vartwo:\nmasp}{\N}{ \funtwo\vartwo }) ((\abstr{\vartwo:\nmasp}{\N}{ \funtwo\vartwo }) ((\abstr{\vartwo:\nmasp}{\N}{ \funtwo\vartwo })((\abstr{\vartwo:\nmasp}{\N}{ \funtwo\vartwo })10)))  }  }{1/16} {100110}  $}
\AxiomC{$\nff{10}{1}{10}$}
\BinaryInfC{$\nff{\abstr{\varthree:\nmasp}{\N}{ {((\abstr{\vartwo:\nmasp}{\N}{ \funtwo\vartwo }) ((\abstr{\vartwo:\nmasp}{\N}{ \funtwo\vartwo }) ((\abstr{\vartwo:\nmasp}{\N}{ \funtwo\vartwo })((\abstr{\vartwo:\nmasp}{\N}{ \funtwo\vartwo })\varthree))) ) }  } (10)}{1/16}{100110}  $} 
\end{prooftree}
\end{example}
\end{scriptsize}
\end{sidewaystable}

%%%%%%%%%%%%%%%%%%%%%%%%%%%%%%%%%%%%%%%%%%%%%%%%%%%%%%%%%%%
\section{Probabilistic Polytime Completeness}
%%%%%%%%%%%%%%%%%%%%%%%%%%%%%%%%%%%%%%%%%%%%%%%%%%%%%%%%%%%
In the previous section, we proved that the behaviour of any \RSLR\ first-order term
can be somehow simulated by a probabilistic polytime Turing machine. What about the converse? 
In this section, we prove that any probabilistic polynomial time Turing machine (PPTM in the following) can be encoded
in \RSLR. PPTMs are here seen as one-tape Turing machines which are capable at any step during the computation
of ``tossing a fair coin'', and proceeding in two different ways depending on the outcome of the tossing.

To facilitate the encoding, we extend our system with pairs and projections. All the proofs
in previous sections remain valid. Base types now comprise not only natural numbers
but also pairs of base types:
\newcommand{\gtypone}{G}
\newcommand{\gtyptwo}{F}
\newcommand{\projone}[1]{\proj 1(#1)}
\newcommand{\projtwo}[1]{\proj 2(#1)}
\newcommand{\projp}[1]{\proj #1}
\newcommand{\projpp}[2]{\proj #1^{#2}}
$$
\gtypone:= \N \,\, | \,\, \gtypone\times\gtypone.
$$
Terms now contain a binary construct $\pair{\cdot}{\cdot}$ and two
unary constructs $\projone{\cdot}$ and $\projtwo{\cdot}$, which can
be given a type by the rules below:
$$
\AxiomC{$\Gamma;\Delta_1\vdash\termone:\gtypone$}
\AxiomC{$\Gamma;\Delta_2\vdash\termtwo:\gtyptwo$}
\BinaryInfC{$\Gamma;\Delta_1,\Delta_2\vdash\pair{\termone}{\termtwo}:\gtypone\times\gtyptwo$}\DisplayProof
$$
$$
\AxiomC{$\Gamma\vdash\termone:\gtypone\times\gtyptwo$}
%\AxiomC{$i\in\{0,1\}$}
\UnaryInfC{$\Gamma\vdash\proj 1(\termone):\gtypone$}\DisplayProof
\qquad
\AxiomC{$\Gamma\vdash\termone:\gtypone\times\gtyptwo$}
%\AxiomC{$i\in\{0,1\}$}
\UnaryInfC{$\Gamma\vdash\proj 2(\termone):\gtyptwo$}\DisplayProof
$$
As syntactic sugar, we will use $\langle \termone_1\ldots,\termone_\natone \rangle$ (where $\natone\geq 1$)
for the term
$$
\pair{\termone_1}{\pair{\termone_2}{\dots \pair{\termone_{\natone-1}}{\termone_\natone}\ldots}}.
$$
For every $n\geq 1$ and every $1\leq i\leq n$, we can easily build a term
$\projpp{i}{n}$ which extracts the $i$-th component from tuples of $n$ elements: this
can be done by composing $\projone{\cdot}$ and $\projtwo{\cdot}$. With a slight abuse on
notation, we sometimes write $\projp{i}$ for $\projpp{i}{n}$.

%%%%%%%%%%%%%%%%%%%%%%%%%%%%%%%%%%%%%%%%%%%%%%%%%%%%%%%
\subsection{Unary Natural Numbers and Polynomials}\label{subsec:naturalNumbers}
%%%%%%%%%%%%%%%%%%%%%%%%%%%%%%%%%%%%%%%%%%%%%%%%%%%%%%%
Natural numbers in \RSLR\ are represented in binary. In other words, the basic operations allowed
on them are $\S0$, $\S1$ and $\P$, which correspond to appending a binary digit to the right
and of the number (seen as a binary string) or stripping the rightmost such digit. This is
even clearer if we consider the length $\size{\numeone}$ of a numeral $\numeone$, which is only
logarithmic in $\numeone$.

\newcommand{\U}{\mathbf{U}}
Sometimes, however, it is more convenient to work in unary notation. Given a natural
number $\natone$, its \emph{unary encoding} is simply the numeral that, written in
binary notation, is $1^{\natone}$. Given a natural number $\natone$ we will refer to 
its encoding $\underline\natone$. The type in which unary encoded natural numbers
will be written, is just $\N$, but for reason of clarity we will use 
the symbol $\U$ instead.

Any numeral $\numeone$, we can extract the unary encoding of its length:
$$
\ENC \equiv \abstr{\termone:\masp}{\N}{\recursion{\U}{\termone}{0}{( \abstr{\varone:\masp}{\U}{\abstr{\vartwo:\nmasp}{\U}{\S1 \vartwo}}  )}} : \masp \N \red \U
$$
Predecessor and successor functions are defined in our language, simply as $\P$ and $\S1$. We need to show how to express polynomials and in order to do this we will define the operators $\ADD:\masp\U\red\nmasp\U\red\U$ and $\MULT:\masp\U\red\masp\U\red\U$.
We define $\ADD$ as
\begin{align*}
\ADD\equiv &\abstr{\varone:\masp}{\U}{ \abstr{\vartwo:\nmasp}{\U}{ \\& \saferec{\U}{\varone}{\vartwo}{( \abstr{\varone:\masp}{\U}{\abstr{\vartwo:\nmasp}{\U}{\S1 \vartwo}}  )}    }  } : \masp\U\red\nmasp\U\red\U
\end{align*}
%As syntactic sugar, we will use $\numeone + \numetwo$ for $\ADD\, \numeone\,  \numetwo$.
%We define $\DEC$ as
%\begin{align*}
%\DEC\equiv& \abstr{\varone:\masp}{\U}{ \abstr{\vartwo:\nmasp}{\U}{\\&  \saferec{\U}{\varone}{\vartwo}{( \abstr{\varone:\masp}{\U}{\abstr{\vartwo:\nmasp}{\U}{\P \vartwo}}  )}    }  } : \masp\U\red\nmasp\U\red\U
%\end{align*}
%As syntactic sugar, we will use $\numeone - \numetwo$ for $\DEC\, \numetwo \numeone$. 
%Note that the two appear in reverse order. Note also that if $\numetwo \ge \numeone$ then $\numeone - \numetwo = 0$
Similarly, we define $\MULT$ as
\begin{align*}
\MULT\equiv &\abstr{\varone:\masp}{\U}{ \abstr{\vartwo:\masp}{\U}{\\&  \saferec{\U}{(\P\varone)}{\vartwo}{( \abstr{\varone:\masp}{\U}{\abstr{\varthree:\nmasp}{\U}{\ADD \vartwo \varthree}}  )}    }  } : \masp\U\red\masp\U\red\U
\end{align*}
%As syntactic sugar, we will use $\numeone \times \numetwo$ for $\MULT\, \numeone \numetwo$.
The following is quite easy:
\begin{lemma}
Every polynomial of one variable with natural coefficients can be encoded
as a term of type $\masp\U\red\U$.
\end{lemma}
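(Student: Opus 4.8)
The plan is to prove the statement by induction on the structure of a polynomial $p(x)$ with natural coefficients, viewing $p$ as built up from the constant monomials and the variable by the two operations of addition and multiplication. Concretely, every such polynomial can be written (not uniquely) as a finite expression generated by the grammar $p ::= x \mid \underline{c} \mid p + p \mid p \times p$, where $\underline{c}$ denotes a natural-number constant. I would fix such an expression for $p$ and construct a term $\lceil p\rceil$ of type $\masp\U\red\U$ by recursion on the expression, maintaining the invariant that, on input the unary encoding $\underline{n}$, the term $\lceil p\rceil$ reduces to $\underline{p(n)}$.

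First I would handle the base cases. For $p = x$, take $\lceil p\rceil \equiv \abstr{\varone:\masp}{\U}{\varone}$, which trivially has type $\masp\U\red\U$. For a constant $p = \underline{c}$, take the term that ignores its argument and returns the numeral whose binary representation is $1^c$; that is, $\abstr{\varone:\masp}{\U}{\underline{c}}$, again of the required type. Here one uses the \textsc{(T-Arr-I)} and \textsc{(T-Const-Aff)} rules, and the argument may safely be given the $\masp$ aspect since dropping it causes no difficulty. For the inductive step on $p = p_1 + p_2$, assume $\lceil p_1\rceil, \lceil p_2\rceil : \masp\U\red\U$ are already built; then set $\lceil p\rceil \equiv \abstr{\varone:\masp}{\U}{\ADD\,(\lceil p_1\rceil\,\varone)\,(\lceil p_2\rceil\,\varone)}$. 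The subtlety to check is the aspect bookkeeping: $\ADD$ has type $\masp\U\red\nmasp\U\red\U$, so its first argument must be typable in a context all of whose (higher-order) variables are $\masp$ — but here the first argument is just $\lceil p_1\rceil\,\varone$ with $\varone:\masp\U$, and base-type variables like $\varone$ may be duplicated freely, so this is fine. The second argument of $\ADD$ is non-modal, which imposes no constraint. Multiplication $p = p_1\times p_2$ is analogous using $\MULT:\masp\U\red\masp\U\red\U$, with both arguments modal; again $\lceil p\rceil \equiv \abstr{\varone:\masp}{\U}{\MULT\,(\lceil p_1\rceil\,\varone)\,(\lceil p_2\rceil\,\varone)}$ typechecks because $\varone$ is of base type.

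The correctness of these encodings — that $\lceil p\rceil\,\underline{n}$ reduces to $\underline{p(n)}$ — follows from the corresponding correctness statements for $\ADD$ and $\MULT$, which in turn are routine inductions on the recursor: $\saferec{\U}{\varone}{\vartwo}{\ldots}$ with the iterated step $\abstr{\_:\masp}{\U}{\abstr{\vartwo:\nmasp}{\U}{\S1\,\vartwo}}$ applies $\S1$ to the accumulator $|\varone|$ times (recall the unary length is extracted by iterating $\lfloor\cdot/2\rfloor$), hence $\ADD\,\underline{a}\,\underline{b}$ evaluates to the numeral $1^{a+b}$, and $\MULT$ performs $|\varone|$ repeated additions of $\underline{b}$. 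These facts are already implicitly used in the examples of Section~\ref{subsec:naturalNumbers}, so I would only sketch them.

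The main obstacle is not the computation but the \emph{type discipline}: one must verify that at no point does the construction require duplicating a variable of higher-order type, and that the modal/non-modal aspect constraints of $\ADD$ and $\MULT$ are met at each composition. The construction above sidesteps the danger by only ever duplicating $\varone:\masp\U$, a variable of base type $\U = \N$, which Definition~\ref{def:contexts} and the typing rules explicitly permit; the outputs of the recursive calls $\lceil p_i\rceil\,\varone$ are consumed linearly. A secondary point worth stating carefully is that the construction depends on a choice of syntactic expression for $p$, but since any polynomial with natural coefficients admits such an expression (e.g. its expansion $\sum_i c_i x^i$ written with explicit repeated products and sums), the lemma follows. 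One should also note in passing that the type $\masp\U\red\U$ is the right one: the argument genuinely needs the modal aspect because its length governs the number of recursion unrollings inside $\ADD$ and $\MULT$.
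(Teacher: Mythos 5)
Your proof is correct and takes essentially the same route as the paper's: the paper's one-line argument is precisely to compose $\ADD$ and $\MULT$ as needed (turning $\ADD$ into type $\masp\U\red\masp\U\red\U$ by subtyping), which is what your structural induction on the polynomial expression spells out in detail. Your explicit check that only the base-type variable $\varone:\masp\U$ is ever duplicated, so the aspect constraints of \textsc{(T-Arr-E)} are satisfied at each composition, is exactly the point the paper leaves implicit.
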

\begin{proof}
Simply, turn $\ADD$ into a term of type $\masp\U\red\masp\U\red\U$ by way of subtyping
and then compose $\ADD$ and $\MULT$ has much as needed to encode the polynomial at hand.
\end{proof}

%%%%%%%%%%%%%%%%%%%%%%%%
\subsection{Finite Sets}
%%%%%%%%%%%%%%%%%%%%%%%%

Any finite, linearly ordered set $\fsetone=(\car{\fsetone},\lord{\fsetone})$ can
be naturally encoded as an ``initial segment'' of $\N$: if
$\car{\fsetone}=\{\elone_0,\ldots,\elone_\natone\}$ where
$\elone_i\lord{\fsetone}\elone_j$ whenever $i\leq j$, then
$\elone_\natone$ is encoded simply by the natural number
whose binary representation is $10^\natone$. For reasons of clarity,
we will denote $\N$ as $\FS{\fsetone}$.
We can do some case analysis on an element of $\FS{\fsetone}$ by the
combinator
$$
\switch{\typone}{\fsetone}:\nmasp\FS{\fsetone}\red\underbrace{\nmasp\typone\red\ldots\red\nmasp\typone}_{\mbox{$\natone$ times}}\red\nmasp\typone\red\typone
%
%{\termone}{\termtwo_1,\ldots,\termtwo_{\natone}}{\termthree}
$$
where $\typone$ is a $\masp$-free type and $\natone$ is
the cardinality of $\car{\fsetone}$.
%, $\termone$ has
%type $\FS{\fsetone}$ and the
%terms $\termtwo_1,\ldots\termtwo_\natone,\termthree$ can
%all receive the type $\typone$. 
The term above can be defined
by induction on $\natone$:
  \newcommand{\fsettwo}{E}
  \newcommand{\fsetthree}{D} 
\newcommand{\varfive}{q} 
\begin{varitemize}
\item
  If $\natone=0$, then
  it is simply $\abstr{\varone:}{\nmasp\FS{\fsetone}}{\abstr{\vartwo:}{\nmasp\typone}{\vartwo}}$.
\item
  If $\natone\geq 1$, then
  it is the following:
  \begin{center}
    \begin{minipage}[c]{.50\textwidth}
	\begin{description}
	  \item $\abstr{\varone:}{\nmasp\FS{\fsetone}}{\abstr{\vartwo_0:}{\nmasp\typone}{\ldots\abstr{\vartwo_\natone:}{\nmasp\typone}{\abstr{\varthree}{\nmasp\typone}{}}}}$
	  \begin{description}
	   \item $(\uncase{\typone}{\varone}{ ( \abstr{\varfour:\nmasp}{\typone}{\varfour}) }$
	    \begin{description}
		\item $\uneven{ {(\abstr{\varfour:}{\nmasp\typone}{\switch{\typone}{\fsettwo}(\P\varone)\vartwo_{1}\ldots\vartwo_{\natone}\varfour})}   }$
	     	\item $\unodd{(\abstr{\varfour:}{\nmasp\typone}{\vartwo_0})}$
	    \end{description}
	  \end{description}
	\end{description}
    \end{minipage}
  \end{center}

\end{varitemize}
where $\fsettwo$ is the subset of $\fsetone$ of those elements with positive indices.

\subsection{Strings}
%%%%%%%%%%%%%%%%%%%%
Suppose $\Sigma=\{a_0,\ldots,a_\natone\}$ is a finite alphabet. Elements of
$\Sigma$ can be encoded following the just described scheme, but how about
strings in $\Sigma^*$? We can somehow proceed similarly: the string
$a_{\nattwo_1}\ldots a_{\nattwo_\natthree}$ can be encoded as the natural number
$$
10^{\nattwo_1}10^{\nattwo_2}\ldots 10^{\nattwo_\natthree}.
$$
\newcommand{\STRING}[1]{\mathbf{S}_{#1}}
\newcommand{\APPEND}{\mathsf{append}}
\newcommand{\TAIL}{\mathsf{tail}}
Whenevery we want to emphasize that a natural number is used 
as a string, we write $\STRING{\Sigma}$ instead of $\N$.
It is easy to build a term $\APPEND_\Sigma:\nmarr{(\STRING{\Sigma}\times\FS{\Sigma})}{\STRING{\Sigma}}$
which appends the second argument to the first argument. Similarly, one can define
a term $\TAIL_\Sigma:\nmarr{\STRING{\Sigma}}{\STRING{\Sigma}\times\FS{\Sigma}}$ which strips off the rightmost
character $a$ from the argument string and returns $a$ together with the rest of the string; if the
string is empty, $a_0$ is returned, by convention.

% Strings can be viewed as a sequence of boolean values transformed in string.
% value of type $\N$, a sequence of bits $0,1$. Adding or removing a bit in the tail is done by using $\S0,\S1,\P$ functions. As syntactic sugar we will use $\SSS$ as the type of strings even if $\SSS \equiv \N$.
  
% We define an operator that gives us the boolean value in the tail of a given string
% \begin{align*}
% \VSTRING \equiv \abstr{\varone:\nmasp}{\S}{& (\casezeo{\B}{\varone}{\ndef}{\ndef}{\\& (\casezeo{\B}{\P\varone}{\zero}{\zero}{\one})  }  )}
% : \nmasp\SSS \red \B
% \end{align*}
 
% We define an operator that concatenates a boolean value in the tail of a given string.
 
% \begin{align*}
%  \ADDVSTRING \equiv \abstr{\varone:\nmasp}{\SSS}{ \abstr{\vartwo:\nmasp}{\B}{& 
% \pif{\B}{\vartwo}\pthen{\S1\S1 \varone}\\& \pelse{\S0\S1\varone}\\&\pdefault{\S0\S0 \varone}
% }} : \nmasp\SSS \red \nmasp \B \red \SSS
% \end{align*}

%  The empty string $\underline\blank$ is encoded as $0$.
  
We also define a function $\NTOS{\Sigma}:\marr{\N}{\STRING{\Sigma}}$ that takes a natural number and produce in output an encoding of
the corresponding string in $\Sigma^*$ (where $i_0$ and $i_1$ are the indices of $0$ and $1$ in
$\Sigma$):
\begin{align*}
\NTOS{\Sigma}\equiv\abstr{\varone:}{\masp\N}{&\recursion{\STRING{\Sigma}}{\varone}{\;\underline\blank}{
\\&\abstr{\varone:\nmasp}{\N}{ \abstr{\vartwo:\nmasp}{\SSS}{
\casezeo{\N}{\varone}{\APPEND_\Sigma\langle\vartwo,10^{i_0}\rangle \\&}{\APPEND_\Sigma\langle\vartwo,10^{i_1}\rangle \\&}{\APPEND_\Sigma\langle\vartwo,10^{i_1}\rangle }
}}}
}:\masp \N \red\SSS
\end{align*}
Similarly, one can write a term $\STON{\Sigma}:\marr{\STRING{\Sigma}}{\N}$.

%%%%%%%%%%%%%%%%%%%%%%%%%%%%%%%%%%%%%%%%%%%%
\subsection{Probabilistic Turing Machines}
%%%%%%%%%%%%%%%%%%%%%%%%%%%%%%%%%%%%%%%%%%%%
Let $M$ be a probabilistic Turing machine  $M=(Q,q_0,F,\Sigma,\blank,\delta)$, 
where $Q$ is the finite set of states of the machine; $q_0$ is the initial state; 
$F$ is the set of final states of $M$; $\Sigma$ is the finite alphabet of the tape; 
$\blank\in\Sigma$ is the symbol for empty string; 
$\delta \subseteq (Q\times\Sigma)\times(Q\times\Sigma\times\{\leftarrow,\downarrow,\rightarrow\})$ is the transition 
function of $M$. For each pair $(q,s)\in Q\times\Sigma$, there are exactly two triples
$(r_1,t_1,d_1)$ and $(r_2,t_2,d_2)$ such that $((q,s),(r_1,t_1,d_1))\in\delta$ and
$((q,s),(r_1,t_1,d_1))\in\delta$. Configurations of $M$ can be encoded as follows:
$$
\langle \termone_{\mathit{left}},\termone,\termone_{\mathit{right}},\termtwo\rangle : 
\STRING{\Sigma}\times\FS{\Sigma}\times\STRING{\Sigma}\times\FS{Q},
$$
where $\termone_{\mathit{left}}$ represents the left part of the main tape, $\termone$ is the symbol 
read from the head of $M$, $\termone_{\mathit{right}}$ the right part of the main tape; 
$\termtwo$ is the state of our Turing Machine. Let the type $\C_{M}$ be a shortcut for 
$\STRING{\Sigma}\times\FS{\Sigma}\times\STRING{\Sigma}\times\FS{Q}$.

Suppose that $M$ on input $\varone$ runs in time bounded by a polynomial $p:\NN\rightarrow\NN$. Then we can 
proceed as follows: 
\begin{varitemize}
%\item 
%  encode the input string by using the function $\NTOS{\Sigma}:\masp\N\red\STRING{\Sigma}$;
\item 
  encode the polynomial $p$ by using function $\ENC,\ADD,\MULT,\DEC$ so that at the end we will have a function $\underline{p}:\marr{\N}{\U}$;
\item
  write a term $\underline\delta:\nmarr{\C_{M}}{\C_{M}}$ which mimicks $\delta$.
\item
  write a term $\INIT_M:\nmarr{\STRING{\Sigma}}{\C_{M}}$ which returns the initial configuration for $M$ corresponding
  to the input string.
\end{varitemize}
The term of type $\marr{\N}{\N}$ which has exactly the same behavior as $M$ is the following:
$$
\abstr{\varone:}{\masp\N}{\STON{\Sigma}(\recursion{\C_{M}}{(\underline{p}\;\varone)}{(\INIT_M\;(\NTOS{\Sigma}(\varone)))}{(\abstr{\vartwo:}{\nmasp\N}{\abstr{\varthree:}{\nmasp{\C_{M}}}{\underline\delta\;\varthree})}})}.
$$
We then get a faithful encoding of PPTM into \RSLR, which will be useful in the forthcoming section:
\begin{theorem}\label{theo:encoding}
Suppose $M$ is a probabilistic Turing machine running in polynomial time such that for every $n$,
$\distrone_n$ is the distribution of possible results obtained by running $M$ on input
$n$. Then there is a first order term $t$ such that for every $n$,
$tn$ evaluates to $\distrone_n$. 
\end{theorem}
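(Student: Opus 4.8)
The plan is to complete the three constructions promised just above --- the polynomial $\underline{p}$, the initialization $\INIT_M$, and above all the transition term $\underline{\delta}$ --- then to check that the composite term $t$ displayed above is a well-typed first-order term, and finally to verify that $t\,n$ evaluates to the distribution $\distrone_n$ of results of $M$ on $n$. The combinatorial machinery is already available: $\NTOS{\Sigma}$, $\STON{\Sigma}$, $\APPEND_\Sigma$, $\TAIL_\Sigma$, the $\switch{\cdot}{\cdot}$ case analysers, and, by Section~\ref{subsec:naturalNumbers}, encodings of arbitrary polynomials of one variable. What is genuinely new is $\underline{\delta}$, which is also the only place where $\rand$ will occur.

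First I would set $\INIT_M:\nmarr{\STRING{\Sigma}}{\C_M}$ to use $\TAIL_\Sigma$ to peel the first symbol $s$ off the input string $w$ and return $\langle \underline{\epsilon}, s, w', \underline{q_0}\rangle$ (with the usual convention when $w$ is empty). Then I would define $\underline{\delta}:\nmarr{\C_M}{\C_M}$ so that, on a configuration $\langle t_\ell, s, t_r, q\rangle$, nested applications of $\switch{\C_M}{Q}$ and $\switch{\C_M}{\Sigma}$ read $q$ and $s$ and select the two transitions $(r_1,u_1,d_1)$, $(r_2,u_2,d_2)$ that $\delta$ prescribes for $(q,s)$; a single $\casezeo{\C_M}{\rand}{c_1}{c_2}{c_2}$ then picks, each with probability $\frac12$, the configuration $c_i$ obtained by overwriting $s$ with $u_i$, updating the state to $r_i$ via $\switch{\C_M}{Q}$, and shifting the head in direction $d_i$ by suitable calls to $\APPEND_\Sigma$ and $\TAIL_\Sigma$ on the two tape components. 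We may assume without loss of generality that $M$ loops on its final states, so that halting configurations are fixpoints of $\underline{\delta}$. Since $\C_M$, $\STRING{\Sigma}$, $\FS{\Sigma}$ and $\FS{Q}$ are products over $\N$, hence $\masp$-free base types, and since nothing of higher-order type is ever duplicated in these definitions, every use of \textsc{(T-Case)}, of the $\switch$ combinators, and of \textsc{(T-Rec)} above is legal.

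Next I would typecheck $t$. In the context $\varone:\masp\N$ the subterm $\underline{p}\,\varone$ has type $\U\equiv\N$; the base $\INIT_M(\NTOS{\Sigma}\,\varone)$ has type $\C_M$; and the step function $\abstr{\vartwo:}{\nmasp\N}{\abstr{\varthree:}{\nmasp\C_M}{\underline{\delta}\,\varthree}}$, of type $\nmarr{\N}{\nmarr{\C_M}{\C_M}}$, can be coerced by \textsc{(T-Sub)}, using $\masp\aleq\nmasp$, to the type $\marr{\N}{\nmarr{\C_M}{\C_M}}$ that \textsc{(T-Rec)} demands; the side conditions of \textsc{(T-Rec)} hold because $\C_M$ is $\masp$-free and the only free variable $\varone$ carries aspect $\masp$. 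Precomposing $\STON{\Sigma}$ with the evident read-out term $r:\nmarr{\C_M}{\STRING{\Sigma}}$ extracting the tape contents then gives a closed term $t$ of type $\marr{\N}{\N}$ --- a first-order term of arity one.

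Finally I would prove correctness of the evaluation. Reducing $t\,n$: after the head $\beta$-redex fires, the first argument $\underline{p}\,n$ of the recursion must be reduced, giving the numeral $1^{p(\size{n})}$, whose length is exactly $p(\size{n})$; hence the recursion unrolls into $p(\size{n})$ nested copies of the step function applied to the base $\INIT_M(\NTOS{\Sigma}\,n)$, and, the numeral argument of the step function being discarded, this is $\underline{\delta}(\underline{\delta}(\cdots(\underline{\delta}\,b)\cdots))$ with $p(\size{n})$ occurrences of $\underline{\delta}$, where $b=\INIT_M(\NTOS{\Sigma}\,n)$ evaluates to the start configuration $c_0$ of $M$ on $n$. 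Because each argument of base type $\C_M$ is evaluated before being passed outward, the simulation runs innermost-first, one probabilistic step of $M$ per copy of $\underline{\delta}$, with the single $\rand$ inside that copy fired exactly once; an induction on $i$ then shows that the distribution produced by the innermost $i$ copies is exactly the distribution over configurations $M$ reaches after $i$ steps. Since $M$ halts within $p(\size{n})$ steps and $\underline{\delta}$ fixes halting configurations, the whole nest yields the distribution over halting configurations, which $r$ and $\STON{\Sigma}$ turn into $\distrone_n$ over numerals; feeding this correspondence between runs of $M$ and $\rfu$/$\nfu$ derivations through the Adequacy Theorem~\ref{theo:adequacy} and Multistep Confluence (Theorem~\ref{theo:confluence-multi-step}) yields $t\,n\pmred\distrone_n$. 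The step I expect to be the main obstacle is the explicit definition of $\underline{\delta}$: getting the head-movement cases right with $\APPEND_\Sigma$ and $\TAIL_\Sigma$, handling the tape-border conventions, and --- more subtly --- placing the probabilistic branch so that, under \RSLR's mixed call-by-value/call-by-name reduction, each simulated step tosses its coin exactly once while the coin tosses of distinct steps are neither duplicated nor collapsed prematurely; once $\underline{\delta}$ is fixed, the typing checks and the correctness induction are routine.
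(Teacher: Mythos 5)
Your proposal follows essentially the same route as the paper, which itself only sketches this construction (polynomial clock $\underline{p}$ in unary, configurations in $\C_M$, $\INIT_M$, a $\rand$-branching $\underline{\delta}$ iterated by \textsc{(T-Rec)}, then read-out via $\STON{\Sigma}$) and states Theorem~\ref{theo:encoding} as its direct consequence without further proof. Your additional details — the subtyping coercion of the step function, the explicit read-out of the tape from the final configuration, making halting configurations fixpoints of $\underline{\delta}$, and the step-by-step correctness induction — are correct fillings-in of exactly that argument.
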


\section{Relations with Complexity Classes}
%%%%%%%%%%%%%%%%%%%%%%%%%%%%%%%%%%%%%%%%%%%%%%%%%
The last two sections established a precise correspondence between \RSLR\ and probabilistic
polynomial time Turing machines. But how about probabilistic complexity \emph{classes}, like
\BPP\ or \PP? They are defined on top of probabilistic Turing machines, imposing constraints
on the probability of error: in the case of \PP, the error probability can be anywhere near
$\frac{1}{2}$, but not equal to it, while in \BPP\ it can be non-negligibly smaller than $\frac{1}{2}$.
There are two ways \RSLR\ can be put in correspondence with the complexity classes
above, and these are explained in the following two sections.
%%%%%%%%%%%%%%%%%%%%%%%%%%%%%%%%%%%%%%%%%%%%%%%%%%%
\subsection{Leaving the Error Probability Explicit}
%%%%%%%%%%%%%%%%%%%%%%%%%%%%%%%%%%%%%%%%%%%%%%%%%%%
Of course, one possibility consists in leaving bounds on the error probability explicit \emph{in the very
definition} of what an \RSLR\ term represents:

\begin{definition}[Recognising a Language with Error $\epsilon$]\label{def:reprerror}
A first-order term $\termone$ of arity $1$ recognizes a language 
$L\subseteq \mathbb{N}$ with probability less than $\epsilon$ if, and only if, both:
\begin{varitemize}
 \item $\varone\in L$ and $\termone\varone \pmred \distrone$ implies $\distrone(0)> 1-\epsilon$.
\item $\varone\notin L$ and $\termone\varone \pmred \distrone$ implies 
$\sum_{\termtwo>0}{\distrone(\termtwo)}>1-\epsilon$.
\end{varitemize}
\end{definition}
So, $0$ encodes an accepting state of $\termone\varone$ and $s>0$ encodes 
a reject state of $\termone\varone$.
Theorem~\ref{theo:sound}, together with Theorem~\ref{theo:encoding} allows us to conclude that:
\begin{theorem}[$\frac{1}{2}$-Completeness for \PP]\label{theo:comppp}
The set of languages which can be recognized with error $\epsilon$ in \RSLR\ for some $0<\epsilon \leq 1/2$ equals \PP.
\end{theorem}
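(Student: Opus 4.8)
The plan is to prove the two inclusions separately. The inclusion of the \RSLR-definable languages into \PP\ will follow from Theorem~\ref{theo:sound}, and the reverse inclusion from Theorem~\ref{theo:encoding}; the only extra ingredient is the standard observation that a \PP-machine may be taken to accept with probability never exactly $\frac{1}{2}$, which is what lets us bridge the ``$\leq\frac{1}{2}$'' convention in the textbook definition of \PP\ and the strict inequalities required by Definition~\ref{def:reprerror}.

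For the first inclusion, suppose the first-order term $\termone$ of arity $1$ recognises $L$ with error $\epsilon$ for some $0<\epsilon\leq\frac{1}{2}$. By Theorem~\ref{theo:sound} there is a polynomial-time probabilistic Turing machine $M_\termone$ which on input $\numeone$ outputs $\numetwo$ with probability exactly $\distrone_\numeone(\numetwo)$, where $\termone\numeone\pmred\distrone_\numeone$. Declare $M_\termone$ to accept precisely when it outputs $0$. If $\numeone\in L$ then $\Pr[M_\termone\text{ accepts }\numeone]=\distrone_\numeone(0)>1-\epsilon\geq\frac{1}{2}$; if $\numeone\notin L$ then $\distrone_\numeone(0)=1-\sum_{\numetwo>0}\distrone_\numeone(\numetwo)<\epsilon\leq\frac{1}{2}$. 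Hence $\Pr[M_\termone\text{ accepts }\numeone]>\frac{1}{2}$ iff $\numeone\in L$, so $M_\termone$ witnesses $L\in\PP$.

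For the reverse inclusion, let $L\in\PP$, fixed by a polynomial-time probabilistic machine $M$ with $\numeone\in L\iff\Pr[M\text{ accepts }\numeone]>\frac{1}{2}$. First I would replace $M$ by a polynomial-time machine $M'$ whose acceptance probability is never exactly $\frac{1}{2}$ yet stays on the same side of $\frac{1}{2}$: if $M$ uses $m$ coin flips on inputs of length $n$ (we may assume $m\geq 2$ by padding with unused flips), let $M'$ run $M$, toss $m$ fresh independent coins, and accept iff $M$ accepts \emph{and} at least one fresh coin is heads. Then $\Pr[M'\text{ accepts }\numeone]=\Pr[M\text{ accepts }\numeone]\cdot(1-2^{-m})$; since $\Pr[M\text{ accepts }\numeone]$ is a multiple of $2^{-m}$ and $m\geq 2$, this dyadic rational is never $\frac{1}{2}$, it exceeds $\frac{1}{2}$ exactly when $\Pr[M\text{ accepts }\numeone]>\frac{1}{2}$, and it is strictly below $\frac{1}{2}$ when $\numeone\notin L$ (a short arithmetic check). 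Arrange $M'$ to output $0$ exactly on acceptance. By Theorem~\ref{theo:encoding} there is a first-order term $\termone$ with $\termone\numeone\pmred\distrone_\numeone$ for every $\numeone$, where $\distrone_\numeone$ is the output distribution of $M'$ on $\numeone$, so $\distrone_\numeone(0)=\Pr[M'\text{ accepts }\numeone]$. Then $\numeone\in L$ gives $\distrone_\numeone(0)>\frac{1}{2}=1-\frac{1}{2}$, and $\numeone\notin L$ gives $\sum_{\numetwo>0}\distrone_\numeone(\numetwo)=1-\distrone_\numeone(0)>\frac{1}{2}=1-\frac{1}{2}$, so $\termone$ recognises $L$ with error $\epsilon=\frac{1}{2}\in(0,\frac{1}{2}]$.

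I expect the only delicate point to be the tie-breaking normalisation of the probabilistic machine; it is routine, but it needs the observation that $\Pr[M\text{ accepts }\numeone]\cdot(1-2^{-m})$ cannot equal $\frac{1}{2}$ when that probability is a multiple of $2^{-m}$ with $m\geq 2$, together with the elementary fact that the product stays on the correct side of $\frac{1}{2}$. Everything else is bookkeeping — chiefly keeping the accept/reject convention (``accept'' rendered as the numeral $0$) consistent across the simulation given by Theorem~\ref{theo:sound} and the encoding given by Theorem~\ref{theo:encoding}.
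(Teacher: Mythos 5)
Your proposal is correct and follows essentially the route the paper intends: the paper derives Theorem~\ref{theo:comppp} directly from Theorem~\ref{theo:sound} (for the inclusion into \PP) and Theorem~\ref{theo:encoding} (for the converse), without spelling out a separate proof. Your only addition is the tie-breaking normalisation ensuring the simulated machine never accepts with probability exactly $\frac{1}{2}$ — a detail the paper leaves implicit but which is indeed needed to meet the strict inequalities of Definition~\ref{def:reprerror}, and your arithmetic for it checks out.
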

But, interestingly, we can go beyond and capture a more interesting
complexity class:
\begin{theorem}[$\frac{1}{2}$-Completeness for \BPP]\label{theo:compbpp}
The set of languages which can be recognized with error $\epsilon$ in \RSLR\ for some $0<\epsilon<1/2$ equals \BPP.
\end{theorem}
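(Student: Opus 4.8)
The plan is to derive this statement from the operational results already established, namely the Soundness Theorem~\ref{theo:sound} and the encoding Theorem~\ref{theo:encoding}, in exactly the same fashion as Theorem~\ref{theo:comppp}, the only additional care being to keep the bound on $\epsilon$ \emph{strict} in both directions. Observe first that the quantifier structure of the claim --- a language $L$ belongs to the left-hand set iff \emph{there exists} a constant $\epsilon<\nicefrac12$ and an \RSLR\ term recognising $L$ with error $\epsilon$ --- mirrors precisely the definition of \BPP\ (error bounded by a constant strictly below $\nicefrac12$, which is equivalent to the textbook threshold $\nicefrac13$ by standard majority-vote amplification). In particular, the fact that \BPP\ is a semantic class never forces us to \emph{decide} whether a given machine or term actually has the required gap, so the circularity worries discussed in Section~\ref{sect:intro} do not arise here.

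For the inclusion ``$\subseteq$'', suppose $\termone$ is a first-order term of arity $1$ recognising $L$ with error $\epsilon$ for some $0<\epsilon<\nicefrac12$. By Theorem~\ref{theo:sound} there is a probabilistic Turing machine $M_\termone$ running in polynomial time such that, on input $\numeone$, $M_\termone$ outputs $\numetwo$ with probability \emph{exactly} $\distrone(\numetwo)$, where $\termone\numeone\pmred\distrone$. Reading the output $0$ as ``accept'' and any output $>0$ as ``reject'', Definition~\ref{def:reprerror} yields that $M_\termone$ accepts every $\numeone\in L$ with probability $>1-\epsilon>\nicefrac12$ and rejects every $\numeone\notin L$ with probability $>1-\epsilon>\nicefrac12$. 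Since $\epsilon$ is a fixed constant strictly below $\nicefrac12$, the gap $\nicefrac12-\epsilon$ is a positive constant, hence $L\in\BPP$ (applying the classical amplification if one insists on the $\nicefrac13$ error threshold). Crucially, because $M_\termone$ reproduces $\distrone$ \emph{on the nose}, the simulation introduces no additional error, so the strict gap is genuinely preserved.

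For the inclusion ``$\supseteq$'', let $L\in\BPP$. Then there is a polynomial-time probabilistic Turing machine $M$ whose error probability is at most $\nicefrac13$; without loss of generality we may assume $M$ halts writing the numeral $0$ to signal acceptance and the numeral $1$ to signal rejection (otherwise compose $M$ with a trivial final-state decoder). By Theorem~\ref{theo:encoding} there is a first-order term $\termone$ such that, for every $\numeone$, $\termone\numeone$ evaluates to $\distrone_\numeone$, the distribution of results of $M$ on input $\numeone$. By construction $\distrone_\numeone(0)>1-\nicefrac13$ whenever $\numeone\in L$, and $\sum_{\numetwo>0}\distrone_\numeone(\numetwo)>1-\nicefrac13$ whenever $\numeone\notin L$, so $\termone$ recognises $L$ with error $\nicefrac13<\nicefrac12$ in the sense of Definition~\ref{def:reprerror}; hence $L$ lies in the left-hand set. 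The two inclusions together establish the equality.

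I do not expect a serious obstacle: all the combinatorial work is confined to Theorems~\ref{theo:sound} and~\ref{theo:encoding}, and the only points that need genuine attention are (i) fixing the output convention so that ``accept'' is the numeral $0$ and ``reject'' is any positive numeral, which costs at most a constant-size wrapper term of $\masp$-free type and does not affect first-orderness or arity; and (ii) verifying that the simulation of Theorem~\ref{theo:sound} is probability-preserving exactly, so that a strict gap on the \RSLR\ side transfers to a strict gap on the Turing-machine side and conversely --- which is exactly what the statement of Theorem~\ref{theo:sound} guarantees. The mild conceptual subtlety worth spelling out in the final text is why the semantic nature of \BPP\ is harmless here, namely because the theorem only asserts an extensional equality of language classes and never claims an effective enumeration of the \BPP-recognising terms.
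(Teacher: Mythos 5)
Your proposal is correct and follows exactly the route the paper intends: Theorems~\ref{theo:sound} and~\ref{theo:encoding} are combined, with the only extra care being that the error bound stays a fixed constant strictly below $\nicefrac{1}{2}$ in both directions (the paper states Theorem~\ref{theo:compbpp} with no further argument beyond this combination). One small nitpick: a \BPP\ machine with error \emph{at most} $\nicefrac{1}{3}$ only gives $\distrone_\numeone(0)\geq\nicefrac{2}{3}$, not a strict inequality, so to meet the strict ``$>1-\epsilon$'' of Definition~\ref{def:reprerror} you should choose $\epsilon$ strictly between $\nicefrac{1}{3}$ and $\nicefrac{1}{2}$ --- which the existential quantification over $\epsilon$ in the statement already permits.
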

Observe how $\epsilon$ can be even equal to $\frac{1}{2}$ in Theorem~\ref{theo:comppp}, while
it cannot in Theorem~\ref{theo:compbpp}. This is the main difference between \PP\ and \BPP: in the
first class, the error probability can very fast approach $\frac{1}{2}$ when the size of the input
grows, while in the second it cannot.

The notion of recognizing a language with an error $\epsilon$ 
allows to capture complexity classes in \RSLR, but it has an obvious drawback: 
the error probability remains explicit and external to the system;
in other words, \RSLR\ does not characterize \emph{one} complexity class but many, depending on the allowed values for $\epsilon$. 
Moreover, given an \RSLR\ term $\termone$ and an error $\epsilon$, determining whether $\termone$ 
recognizes \emph{any} function with error $\epsilon$ is not decidable.
As a consequence, theorems \ref{theo:comppp} and \ref{theo:compbpp} do not suggest an 
enumeration of all languages in either \PP\ or \BPP. This in contrast to what happens 
with other ICC systems, e.g. \SLR, in which all terms (of certain types) compute
a function in \FP\ (and, \emph{viceversa}, all functions in \FP\ are
computed this way). As we have already mentioned in the Introduction, this discrepancy between
\FP\ and \BPP\ has a name: the first is a \emph{syntactic} class, while the second
is a \emph{semantic} class (see~\cite{AroraBarak}).

%%%%%%%%%%%%%%%%%%%%%%%%%%%%%%%%%%%%%%%%%%%%%
\subsection{Getting Rid of Error Probability}
%%%%%%%%%%%%%%%%%%%%%%%%%%%%%%%%%%%%%%%%%%%%%
One may wonder whether a more implicit notion of representation can be 
somehow introduced, and which complexity class corresponds to \RSLR\ this way. 
One possibility is taking representability by majority:
\begin{definition}[Representability-by-Majority]\label{def:reprmaj}
Let $\termone$ be a first-order term of arity $1$. %and let $\probone\in\realint{0}{1}$. 
Then $\termone$ is said to \emph{represent-by-majority} a language
$L\subseteq\NN$ iff:
\begin{varenumerate}
\item
  If $n\in L$ and $\termone n \pmred \distrone$, then $\distrone(0) \ge \sum_{m>0}\distrone(m)$;
\item
  If $n\notin L$ and $\termone n \pmred \distrone$, then $\sum_{m>0}\distrone(m) > \distrone(0)$.
\end{varenumerate} 
\end{definition}
There is a striking difference between Definition~\ref{def:reprmaj} and Definition~\ref{def:reprerror}: 
the latter is asymmetric, while the first is symmetric. 

Please observe that any \RSLR\ first order term $\termone$ represents-by-majority \emph{a} language, namely
the language defined from $\termone$ by Definition \ref{def:reprmaj}. 
It is well known that \PP\ can be defined by majority itself, stipulating that the error probability
should be \emph{at most $\frac{1}{2}$} when handling strings in the language and \emph{strictly smaller
than $\frac{1}{2}$} when handling strings not in the language. As a consequence:
\begin{theorem}[Completeness-by-Majority for \PP]\label{theo:compmajpp}
The set of languages which can be represented-by-majority in \RSLR\ equals \PP.
\end{theorem}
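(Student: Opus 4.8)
The plan is to prove the equality of sets of languages by two inclusions, both of which reduce to the well-known majority-vote characterization of \PP\ recalled just before the statement, combined with the two simulation results already in place (Theorem~\ref{theo:sound} and Theorem~\ref{theo:encoding}).

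\emph{From \RSLR\ to \PP.} Suppose $L$ is represented-by-majority by a first-order term $\termone$ of arity $1$. By Theorem~\ref{theo:sound} there is a probabilistic polynomial time Turing machine $M_\termone$ that on input $n$ outputs $m$ with probability exactly $\distrone_n(m)$, where $\termone n\pmred\distrone_n$. Regard $M_\termone$ as an acceptor which accepts precisely when its output is the numeral $0$. Then clause~1 of Definition~\ref{def:reprmaj} says $\Pr[M_\termone\text{ accepts }n]=\distrone_n(0)\ge\sum_{m>0}\distrone_n(m)=\Pr[M_\termone\text{ rejects }n]$ for $n\in L$, hence $\Pr[\text{accept}]\ge\frac{1}{2}$; and clause~2 gives $\Pr[M_\termone\text{ accepts }n]<\Pr[M_\termone\text{ rejects }n]$ for $n\notin L$, hence $\Pr[\text{accept}]<\frac{1}{2}$. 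This is exactly the majority definition of \PP, so $L\in\PP$.

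\emph{From \PP\ to \RSLR.} Fix $L\in\PP$. By the majority characterization of \PP\ there is a polynomial time probabilistic Turing machine $M$ with $\Pr[M\text{ accepts }n]\ge\frac{1}{2}$ iff $n\in L$ and $\Pr[M\text{ accepts }n]<\frac{1}{2}$ iff $n\notin L$. Apply Theorem~\ref{theo:encoding} to $M$: there is a first-order term $\termtwo$ with $\termtwo n\pmred\distrone_n$ where $\distrone_n$ is the distribution of results of $M$ on $n$. It only remains to make ``accepting'' coincide with the numeral $0$: using the finite-set and string machinery of the previous section (in particular $\switch{}{}$ applied to the component of the encoded final configuration recording the state), build a term $\mathsf{acc}:\nmarr{\N}{\N}$ returning $0$ on accepting results and $1$ otherwise, and set $\termone\equiv\abstr{\varone:\masp}{\N}{\mathsf{acc}\,(\termtwo\,\varone)}$, again a first-order term of arity $1$. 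Then $\termone n\pmred\distrtwo_n$ with $\distrtwo_n(0)=\Pr[M\text{ accepts }n]$ and $\sum_{m>0}\distrtwo_n(m)=\Pr[M\text{ rejects }n]$, so both clauses of Definition~\ref{def:reprmaj} hold for $L$ witnessed by $\termone$. Since, conversely, every first-order term of arity $1$ represents-by-majority \emph{some} language (the one it defines via Definition~\ref{def:reprmaj}), the two inclusions yield the claimed equality.

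The step I expect to be most delicate is the alignment of the boundary value $\frac{1}{2}$: the asymmetry ``$\ge$ for inputs in $L$, strict $<$ for inputs outside $L$'' must match on the nose between Definition~\ref{def:reprmaj}, the exact-probability statements of Theorems~\ref{theo:sound} and~\ref{theo:encoding}, and the majority definition of \PP; one must also check that the polynomially-clocked machine of Theorem~\ref{theo:encoding} produces a genuine distribution over \emph{halting} results so that ``$\Pr[\text{accept}]+\Pr[\text{reject}]=1$'' and the decoding term $\mathsf{acc}$ is total. Everything else — the construction of $\mathsf{acc}$, and the preservation of first-orderness and of the polynomial time bound under the extra composition — is routine given the encodings already developed.
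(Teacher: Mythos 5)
Your proposal is correct and matches the paper's intended argument: the paper gives no detailed proof, presenting the theorem as a direct consequence of Theorem~\ref{theo:sound}, Theorem~\ref{theo:encoding}, and the well-known majority characterization of \PP, which is exactly the reduction you carry out (with the extra, harmless detail of the decoding term $\mathsf{acc}$, which one can also avoid by assuming the machine itself outputs $0$ on acceptance).
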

In other words, \RSLR\ can indeed be considered as a tool to enumerate all
functions in a complexity class, namely \PP. At this comes with no surprise, since the latter
is a syntactic class.

\bibliographystyle{plain} 
\bibliography{draft}
\end{document}